\renewcommand{\figurename}{Figure}
\renewcommand{\tablename}{Table}
\renewcommand{\thetable}{\arabic{table}}
\newtheorem{myprop}{Proposition}
\newtheorem{mylemma}{Lemma}
\newtheorem{mythm}{Theorem}
\begin{document}

\title{Cell fate reprogramming by control of intracellular network dynamics}
\author{Jorge G. T. Za\~nudo}\email[]{jgtz@phys.psu.edu}
\affiliation{Department of Physics, The Pennsylvania State University,\\
  University Park, Pennsylvania, 16802-6300, USA.}
\author{R\'eka Albert}
\affiliation{Department of Physics, The Pennsylvania State University,\\
  University Park, Pennsylvania, 16802-6300, USA.}
\affiliation{Department of Biology, The Pennsylvania State University,\\
  University Park, Pennsylvania, 16802-5301, USA.}

\begin{abstract}
{Identifying control strategies for biological networks is paramount for practical applications that involve reprogramming a cell's fate, such as disease therapeutics and stem cell reprogramming. Here we develop a novel network control framework that integrates the structural and functional information available for intracellular networks to predict control targets. Formulated in a logical dynamic scheme, our approach drives any initial state to the target state with 100\% effectiveness and needs to be applied only transiently for the network to reach and stay in the desired state. We illustrate our method's potential to find intervention targets for cancer treatment and cell differentiation by applying it to a leukemia signaling network and to the network controlling the differentiation of helper T cells. We find that the predicted control targets are effective in a broad dynamic framework. Moreover, several of the predicted interventions are supported by experiments.}
\end{abstract}
\maketitle

\section*{Author Summary} \label{sec:0}

Practical applications in modern molecular and systems biology such as the search for new therapeutic targets for diseases and stem cell reprogramming have generated a great interest in controlling the internal dynamics of a cell. Here we present a network control approach that integrates the structural and functional information of the network. We show that stabilizing the expression or activity of a few select components can drive the cell towards a desired fate or away from an undesired fate. We demonstrate our method's effectiveness by applying it to a type of blood cell cancer and to the differentiation of a type of immune cell. Overall, our approach provides new insights into how to control the dynamics of intracellular networks.

\section*{Introduction} \label{sec:1}

An important task of modern molecular and systems biology is to achieve an understanding of the dynamics of the network of macromolecular interactions that underlies the functioning of cells. Practical applications such as stem cell reprogramming \cite{StemCellsTakahashi,CellReprogReview1,CellReprogReview2} and the search for new therapeutic targets for diseases \cite{SystemsMedicine,NetworkMedicine,
SystemsBiologyMedicine} have also motivated a great interest in the general task of cell fate reprogramming, i.e., controlling the internal state of a cell so that it is driven from an initial state to a final target state (see references \cite{BarabasiControllability,MullerSchuppertReply,BarabasiObservability,NodalDynamics,MotterControl,FVS1,FVS2}).

Theoretically derived control methods are based on simplified models of the interactions and/or the dynamics of cellular constituents such as proteins or mRNAs. Some of these models only include information on which cell components (e.g. molecules or proteins) interact among each other, i.e., the structure of the underlying interaction network. Other models, known as dynamic models, include the structure of the interaction network and also an equation for each component, which describes how the state of this component changes in time due to the influence of other cell components (e.g. how the concentration of a molecule changes in time due to the reactions the molecule participates in).

Although the topic of network controllability has a long history in control and systems theory (see, for example, \cite{Kalman,Luenberger,Slotine,Lin}), most of this work is not directly applicable to large intracellular networks. There are several reasons for this: (i) combinatorial complexity and the size of the matrices involved makes control theory applicable to small networks only, (ii) linear functions are used for the regulatory functions and it is unclear how the switch-like behavior of many biochemical processes \cite{TysonDynamics1,TysonDynamics2} will affect these results, and (iii) the notion of controllability in control theory, i.e. control of the full set of states \cite{Kalman,Luenberger,Slotine} or \textit{complete controllability}, is different from that in the biological sense, which commonly encompasses only the \textit{biologically admissible states} \cite{MullerSchuppertReply}.

In recent work on network controllability \cite{BarabasiControllability,BarabasiObservability,NodalDynamics,MotterControl,FVS1,FVS2,Akutsu,Cheng,Tamura} some of the limitations of standard control theory approaches are addressed. For example, Akutusu, Cheng, Tamura et al. \cite{Akutsu,Cheng,Tamura} extend the framework of control theory to systems with Boolean (switch-like) dynamics and provide some formal results in this setting. In the work of Liu et al. \cite{BarabasiControllability} the size limitation of linear control theory is overcome by using a maximal matching approach to identify the minimal number of nodes needed to control a variety of real-world large scale networks. Specifically, for some gene regulatory
networks, Liu et al. find that control of roughly $80\%$ of the nodes is needed to fully control the dynamics of these networks \cite{BarabasiControllability}. In contrast, experimental work in stem cell reprogramming suggests that for biologically admissible states the number of nodes required for control is drastically lower (five or fewer genes \cite{MullerSchuppertReply,StemCellsTakahashi,CellReprogReview1,CellReprogReview2}). Fiedler, Mochizuki et al. \cite{FVS1,FVS2} use the concept of the feedback vertex set, a subset of nodes in a directed network whose removal leaves the graph without directed cycles (i.e. without feedback loops). They show that, for a broad class of regulatory functions, controlling any feedback vertex set is enough to guide the dynamics of the system to any target trajectory of the uncontrolled network \cite{FVS1,FVS2}. As one of their examples, the authors use a signal transduction network with 113 elements and show that the minimal feedback vertex set is composed of only 5 elements.

Since systems whose interaction networks and dynamics are known equally well are rare, current control strategies are based on either the network structure \cite{BarabasiControllability,BarabasiObservability,NodalDynamics,FVS1,FVS2} or its dynamics (function) \cite{MotterControl,Akutsu,Cheng,Tamura}. Yet, as manipulating the activity of even a single intracellular component is a long, difficult, and expensive experimental task, it is crucial to reduce as much as possible the number of nodes that need to be controlled. We hypothesize that integrating network structure with qualitative information on the regulatory functions or on the target states of interest could yield control strategies with a small number of control targets. Qualitative information about the regulatory functions is commonly known (e.g. positive/negative regulation, cooperativity among regulators, etc.), and relative qualitative information on the desired/undesired states also exists (e.g. upregulation or downregulation of mRNA levels in a disease state with respect to a healthy state). Thus, we choose a logical dynamic framework as our modeling method \cite{LessIsMore}. This framework is well suited for modeling intracellular networks: discrete dynamic models have been shown to reproduce the qualitative dynamics of a multitude of cellular systems while requiring only the combinatorial activating or inhibiting nature of the interactions, and not the kinetic details \cite{MiskovTCell,ArabidopsisRoot,SaezRodriguezCancer,SocolarCellCycle,TLGLPNAS,SorgerReview,PhysBioReview}.

Logical dynamic network models \cite{KauffmanOriginal,GlassKauffman,GlassAsynchronous,ThomasReview,Chaves,AssiehJTB,Socolar,Laubenbacher} consist of a set of binary variables $\{\sigma_i\}$, $i=1, 2, \ldots, N$, each of which denotes the state of a node (also referred to as node state). The state ON (or 1) commonly refers to above a certain threshold level, while the state OFF (or 0) refers to below the same threshold level. The vector formed by the state of all nodes $(\sigma_1, \sigma_2, \ldots, \sigma_N)$ denotes the state of the system (or system/network state). To each node $v_i$ one assigns a Boolean function $f_i$ which contains the biological information on how node $v_i$'s inputs influence $\sigma_i$; these functions are used to evolve in time the state of each element. We use the general asynchronous updating scheme \cite{GlassAsynchronous,ThomasReview,AssiehJTB} (see \hyperref[Methods]{Methods}), a stochastic scheme which takes into consideration the variety of timescales present in intracellular processes and our incomplete knowledge of the rates of these processes.

\begin{figure}[t]
\centerline{\includegraphics[width=0.55\textwidth]{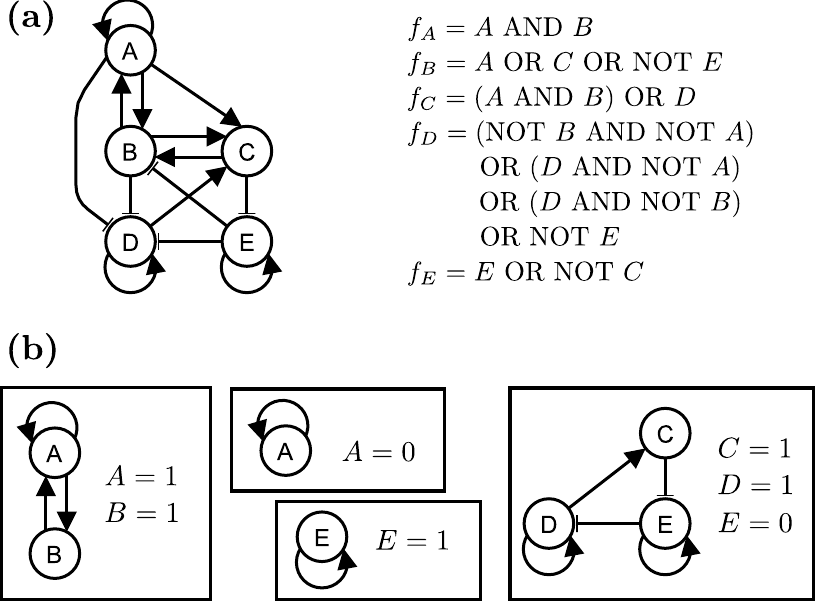}}
\caption{Stable motifs of a logical (Boolean) network. (a) An example of a logical network indicating the regulatory relationships and the logical update function of each node. (b) The four stable motifs of the logical network in (a) and their corresponding node states. These stable motifs are strongly connected components and partial fixed points of the logical network.}
\label{fig:NetworkExample}
\end{figure}

\begin{figure}[t]
\centerline{\includegraphics[width=0.6\textwidth]{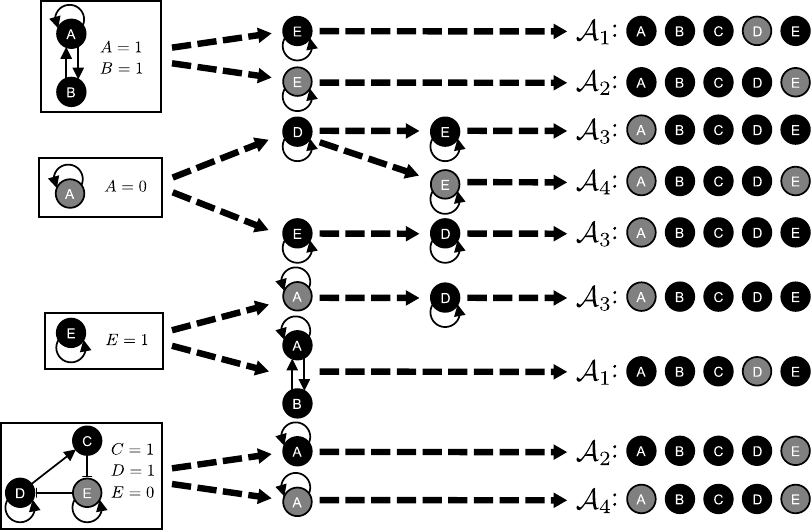}} \
\caption{Stable motif succession diagram for the example in Fig. \ref{fig:NetworkExample}. The stable motif succession diagram shows the stable motifs obtained successively during the attractor finding process and the attractors they finally lead to. A more detailed representation of the first steps of the attractor finding method is shown in Fig. \hyperref[fig:FigS1]{S1}. Nodes are colored based on their respective node states in the motifs or the attractors: gray for 0 and black for 1. The four stable motifs of the original logical network and their matching node states are shown in the leftmost part of the figure. The attractors obtained for each possible sequence of stable motifs are shown in the rightmost part of the figure. The result of applying network reduction using a stable motif is represented by each dashed arrow. If network reduction due to a stable motif leads to a simplified network with at least one stable motif, then the dashed arrows point from the stable motif being considered to the stable motifs of the simplified network. Otherwise, network reduction leads directly to an attractor and the dashed arrow points towards the attractor.}
\label{fig:ReductionMethod}
\end{figure}

In a logical (Boolean) model, every temporal trajectory must eventually reach a set of system states in which it settles down, known as an attractor. The attractors of intracellular networks have been found to be identifiable with different cell fates, cell behaviors, and stable patterns of cell activity \cite{MiskovTCell,ArabidopsisRoot,SaezRodriguezCancer,SocolarCellCycle,TLGLPNAS,SorgerReview,PhysBioReview,Huang1,HuangCancerAttrs}. In general, the task of finding Boolean network attractors is limited by combinatorial complexity; the size of the state space grows exponentially with the number of nodes $N$. To address this, we recently proposed an alternative approach to find the attractors of a Boolean network which allowed us to identify the attractors of networks for which a full search of the state space is not feasible \cite{ReductionChaos}. This attractor-finding method is based on identifying certain function-dependent network components, referred to as \textit{stable motifs}, that must stabilize in a fixed state. A stable motif is defined as a set of nodes and their corresponding states which are such that the nodes form a minimal strongly connected component (e.g. a feedback loop) and their states form a partial fixed point of the Boolean model. (A partial fixed point is a subset of nodes and a respective state for each of these nodes such that updating any node in the subset leaves its state unchanged, regardless of the state of the nodes outside the subset.) It is noteworthy that stable motifs are preserved for other updating schemes because of their dynamical property of being partial fixed points. For more details on the attractor-finding method and the identification of the stable motifs see Text \hyperref[sec:S1]{S1} and ref. \cite{ReductionChaos}; for a more formal and mathematical discussion see Text \hyperref[sec:S2]{S2} section \hyperref[sec:S2A]{A} or Appendix A of ref. \cite{ReductionChaos}.

Once a network's stable motifs and their corresponding fixed states are identified, a network reduction technique \cite{AssiehJTB,DecimationProcess,ReductionNadil,ReductionVeliz} is used for each stable motif by tracing the downstream effect of the stable motif on the rest of the network (see Text \hyperref[sec:S1]{S1}). Repeating this procedure iteratively for each separate stable motif until no new stable motifs are found yields the attractors of the logical model. Formally, the result is a set of network states called quasi-attractors, which capture steady states exactly and are a compressed representation of complex attractors \cite{ReductionChaos}. The network control method we propose here builds on the concept of stable motifs and its relation to (quasi-)attractors \cite{ReductionChaos} and takes it much further by connecting stable motifs with a way to identify targets whose manipulation (upregulation or downregulation) ensures the convergence of the system to an attractor of interest. The use of quasi-attractors in our method does not compromise its general applicability, but it does require that certain networks with special types of complex attractors are treated with care when our method is applied. None of the networks we discuss in this work nor any intracellular network models we are aware of fall in this category; for more details see Text \hyperref[sec:S1]{S1}, Text \hyperref[sec:S2]{S2}, and ref \cite{ReductionChaos}.)

As an illustration, consider the logical network shown in Fig. \ref{fig:NetworkExample}(a). This logical network has four stable motifs (Fig. \ref{fig:NetworkExample}(b)): (i) $\left\{\right.$A=1, B=1$\left.\right\}$, (ii) $\left\{\right.$A=0$\left.\right\}$, (iii) $\left\{\right.$E=1$\left.\right\}$, and (iv) $\left\{\right.$C=1, D=1, E=0$\left.\right\}$. Network reduction for each of these stable motif yields four reduced networks, each of which has its own stable motifs, all of which are shown in Fig. \hyperref[fig:FigS1]{S1}. For example, the reduced logical network obtained from the first stable motif consists of two nodes (D and E) and has two stable motifs: $\left\{\right.$E=1$\left.\right\}$ and $\left\{\right.$E=0$\left.\right\}$. The stable motifs of the remaining three reduced logical networks are, respectively: $\left\{\right.$E=1$\left.\right\}$ and $\left\{\right.$D=1$\left.\right\}$; $\left\{\right.$A=1, B=1$\left.\right\}$ and $\left\{\right.$A=0$\left.\right\}$; $\left\{\right.$A=1$\left.\right\}$ and $\left\{\right.$A=0$\left.\right\}$. Repeating the same network reduction procedure with each of the new stable motifs leads to either a new reduced network or one of four attractors ($\mathcal{A}_i, i=1, \ldots, 4$). The stable motifs obtained from the original network and from each reduced network, and the attractors they lead to are shown in Fig. \ref{fig:ReductionMethod}. This diagram is a compressed representation of the successive steps of the attractor finding process, which include the original network, the stable motifs of the original network, the reduced networks obtained for each stable motif, the stable motifs of these reduced networks, and so on (see Fig. \hyperref[fig:FigS1]{S1}). We refer to such a diagram as a \textit{stable motif succession diagram}, and we note that it is closely analogous to a cell fate decision diagram. We propose to use this stable motif succession diagram to guide the system to an attractor of interest.

\section*{Results} \label{sec:2}

\subsection*{Stable motif control implies network control} \label{sec:2.01}

The stable motifs' states are partial fixed points of the logical model, and as such, they act as ``points of no return" in the dynamics. Normally, the sequence of stable motifs is chosen autonomously by the system based on the initial conditions and timing. We propose to use our knowledge of the sequence of stable motifs to guide the system to an attractor of interest. We refer to this network control method as \textit{stable motif control}.

The basis of the stable motif control approach is that a sequence of motifs from a stable motif succession diagram like Fig. \ref{fig:ReductionMethod} uniquely determines an attractor, so controlling each motif in the sequence must prod the system towards this attractor. We give the proof of this statement in Lemma 4 and Proposition 6 of Text \hyperref[sec:S2]{S2} section \hyperref[sec:S2B]{B}. The number of nodes that need to be controlled can be minimized by removing motifs that do not need to be controlled and by finding a subset of nodes in a motif which can fix the whole motif's state. A step by step description of the stable motif control algorithm is given in \hyperref[Methods]{Methods}. For more details on the motif-removal step involved in minimizing the number of control nodes, see Text \hyperref[sec:S1]{S1}; for a justification of the steps involved in minimizing the number of control nodes, see Text \hyperref[sec:S2]{S2}. Text \hyperref[sec:S3]{S3} presents a discussion of the complexity of our methods and mitigation techniques for the most time consuming parts of our methods.

As an example, consider the network in Fig. \ref{fig:NetworkExample}(a) and choose $\mathcal{A}_2$ in Fig. \ref{fig:ReductionMethod} as our target attractor. There are two sequences of stable motifs that lead to $\mathcal{A}_2$: $\left(\right.\left\{\right.$C=1, D=1, E=0$\left.\right\},\left\{\right.$A=1$\left.\right\}\left.\right)$ and $\left(\right.\left\{\right.$A=1, B=1$\left.\right\},\left\{\right.$E=0$\left.\right\}\left.\right)$. For motif $\left\{\right.$C=1, D=1, E=0$\left.\right\}$ in the first sequence, fixing E=0 is enough to fix the whole motif's state; for motif $\left\{\right.$A=1$\left.\right\}$ in the same sequence there is only one node, so the only choice is to fix A=1. The control set obtained from the first sequence is then $\left\{\right.$E=0, A=1$\left.\right\}$. For the second sequence, a similar reasoning leads to the same control set, $\left\{\right.$E=0, A=1$\left.\right\}$ (E=0 from $\left\{\right.$E=0$\left.\right\}$, and A=1 from $\left\{\right.$A=1, B=1$\left.\right\}$). The result is a single set of network control interventions for attractor $\mathcal{A}_2$, $C_{\mathcal{A}_2}=\left\{\right.\left\{\right.$A=1, E=0$\left.\right\}\left.\right\}$. For a step by step description of the stable motif control algorithm applied to this example see Text \hyperref[sec:S1]{S1}.

Using our approach with each of the remaining attractors we obtain the following network control interventions: $C_{\mathcal{A}_1}=\left\{\right.\left\{\right.$A=1, E=1$\left.\right\}\left.\right\}$, $C_{\mathcal{A}_2}=\left\{\right.\left\{\right.$A=1, E=0$\left.\right\}\left.\right\}$, $C_{\mathcal{A}_3}=\left\{\right.\left\{\right.$A=0, E=1$\left.\right\}\left.\right\}$, and $C_{\mathcal{A}_4}=\left\{\right.\left\{\right.$A=0, E=0$\left.\right\}\left.\right\}$. Inspecting these network control interventions we conclude that controlling nodes A and E is enough to guide the system to each of the four possible attractors, with the exact combination being given by the $C_{\mathcal{A}_i}$'s.

In order to gauge the potential improvement in the control set's size brought about by our method, we compare our network control set with the feedback vertex set, the subset of nodes whose removal leaves the network without directed cycles. This set was demonstrated to be an effective control target and set an upper limit in the size of the control set in references \cite{FVS1,FVS2}. Because removing the feedback vertex set from the network must destroy all cycles, including self-loops, there are two possible minimal feedback vertex sets, $\left\{\right.$A, B, D, E$\left.\right\}$ and $\left\{\right.$A, C, D, E$\left.\right\}$. The number of nodes that need to be controlled in our method is half of the size of the feedback vertex set, a substantial improvement. It should be noted that our method does not guarantee that the resulting control sets are small nor that the control sets are the smallest possible, though our case studies suggest that the resulting control sets tend to be relatively small (between one and five nodes out of more than fifty, see Tables \ref{tab:TableTLGL} and \ref{tab:TableTh}, and ref \cite{EMTModel}).

\subsection*{Blocking stable motifs may obstruct specific attractors} \label{sec:2.02}

In many situations the main interest is to prevent the system from reaching an unwanted state (e.g. the proliferative cell state encountered in tumors). Based on the motif-sequence point of view provided by the stable motif succession diagram (Fig. \ref{fig:ReductionMethod}), we hypothesize that blocking the stable motifs that lead to an attractor will either prevent or make it less likely for the system to reach this attractor. We refer to this network control method as \textit{stable motif blocking}. The algorithm for the method is given in \hyperref[Methods]{Methods}.

The interventions obtained from this method are negations of node states of the target attractor, and as such, have the property of eliminating the intended attractor. However, new attractors can arise that are similar to the destroyed attractor. In biological situations (like in our test cases) one commonly has certain molecular markers of cell fate which specify the attractor to a large degree but not at the level of every node. Thus the final state obtained after stable motif blocking may still be consistent with the biological specification of the undesired attractor, making the intervention unsuccessful. We also adopt a stricter definition for a successful intervention: if a long-term but not permanent intervention (i.e. a transient intervention) reduces the number of network states or trajectories that lead to the unwanted attractor, then the intervention is considered to be \textit{long-term successful}. The best-case scenario would be that the manipulated network has only the desired attractors of the original network (i.e., any but the unwanted attractors), in which case the network will stay in these attractors even if the intervention is stopped.

Consider, for example, the network in Fig. \ref{fig:NetworkExample}(a) and the attractor $\mathcal{A}_3$ in Fig. \ref{fig:ReductionMethod}. From the stable motif succession diagram (Fig. \ref{fig:ReductionMethod}), the stable motifs involved in the sequences that lead to $\mathcal{A}_3$ are $\{$A=0$\}$, $\{$D=1$\}$, and $\{$E=1$\}$. Our approach proposes blocking these motifs to obstruct the system from reaching $\mathcal{A}_3$, that is, it provides $\mathcal{B}_{\mathcal{A}_3}=\left\{\{\right.$A=1$\}, \{$E=0$\}, \{$D=0$\}\left.\right\}$ or a combination of these node states as intervention candidates.

To verify the effectiveness of the interventions, we analyze the dynamics of the manipulated network with each individual intervention. The first intervention (A=1) causes the system to have $\mathcal{A}_1$ and $\mathcal{A}_2$ as its only attractors, and thus, the network is driven towards these attractors and away from the unwanted attractor $\mathcal{A}_3$. Furthermore, the network stays in those attractors even after the intervention is stopped, as they are also attractors of the original network, so the intervention is long-term successful. Similarly, the second intervention (E=0) causes the system to have $\mathcal{A}_2$ and $\mathcal{A}_4$ as its sole attractors, so it is also a long-term successful intervention. The third intervention (D=0) only leaves attractor $\mathcal{A}_1$ intact, and also gives rise to two new attractors. To evaluate if this intervention is long-term successful we compare the probabilities that an arbitrary initial condition ends in $\mathcal{A}_3$ with and without the intervention. For the intervened case, we set D=0 for a long time, then stop the intervention and wait for the network to reach an attractor. We find that the intervention makes it more likely for an arbitrary initial condition to reach $\mathcal{A}_3$, so this intervention is not long-term successful.

\begin{figure*}[h]
\centerline{\includegraphics[width=0.75\textwidth]{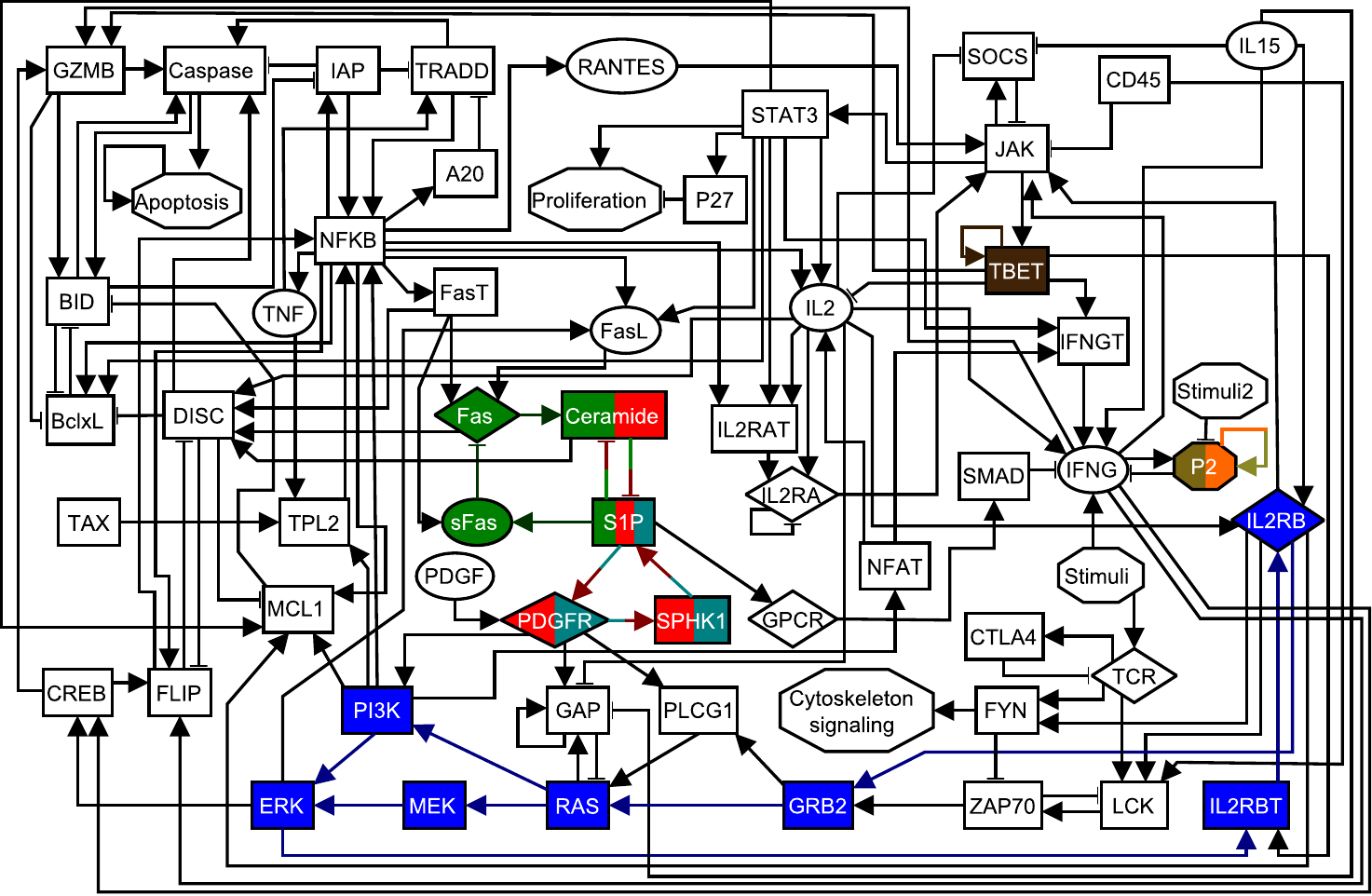}}
\caption{The T-LGL leukemia survival signaling network. The shape of the nodes indicates the cellular location or the type of nodes: rectangles indicate intracellular components, ellipses indicate extracellular components, diamonds indicate receptors, and hexagons represent conceptual nodes (Stimuli, Stimuli2, P2, Cytoskeleton signaling, Proliferation, and Apoptosis). Node colors are used to denote the different stable motifs of the network in the presence of the external signals Stimuli and IL15. Nodes and edges with multiple colors are part of several stable motifs. An arrowhead or a short perpendicular bar at the end of an edge indicates activation or inhibition, respectively. This figure and its caption are adapted from \cite{AssiehPCB}.}
\label{fig:TLGLnetwork}
\end{figure*}

\subsection*{Verification of the method's effectiveness in test cases} \label{sec:2.1}

The network control framework we propose is applicable to any cell fate reprogramming process for which a logical dynamical model can be constructed. This is a broad and increasing domain of application: refs. \cite{MiskovTCell,ArabidopsisRoot,SaezRodriguezCancer,SocolarCellCycle,TLGLPNAS} are examples of recent logical models that had experimentally validated predictions, while other examples can be found in the review articles \cite{SorgerReview,PhysBioReview}.

To demonstrate the potential of our framework, we choose two types of cell fate reprogramming processes: disease therapeutics and cell differentiation. More specifically, we use our network control framework to predict network control interventions on previously developed logical dynamic models for a leukemia signaling network and for the network controlling the differentiation of helper T cells. We confirm the effectiveness of the predicted stable motif control interventions using dynamic simulations, an independent verification  of the result we prove in Text \hyperref[sec:S2]{S2}. For the case of stable motif blocking interventions, whose effectiveness is not guaranteed, we use dynamic simulations to test the effectiveness of the predicted interventions.

\subsubsection*{T Cell Large Granular Lymphocyte Leukemia Network} \label{sec:2.1a}

Cytotoxic T cells are a central part of the immune system's response to infection. These T cells detect antigens in infected cells and, in response, induce the self-destruction of the infected cells. After fighting infection normal cytotoxic T cells undergo activation-induced cell death (apoptosis), but in T-cell large granular lymphocyte (T-LGL) leukemia  cytotoxic T cells avoid cell death and survive, which eventually leads to diseases such as autoimmune disorders.

A Boolean network model of cytotoxic T cell signaling that reproduces the known experimental results of these T cells in the context of T-LGL leukemia was previously constructed by Zhang et al. \cite{TLGLPNAS}. This network model consists of 60 nodes and 142 regulatory edges, with the nodes representing genes, proteins, receptors, small molecules, external signals (e.g. Stimuli), or biological functions (e.g. Apoptosis). The T-LGL network is shown in Fig. \ref{fig:TLGLnetwork} and its logical functions are reproduced in Text \hyperref[sec:S4]{S4}. Previous work by Zhang et al. \cite{TLGLPNAS} and Saadatpour et al. \cite{AssiehPCB} has shown that in the sustained presence of the external signals IL15, PDGF, and Stimuli (antigen presentation) the system has two attractors: one that recapitulates the survival phenotype and node deregulations seen in T-LGL leukemia, and a second one that corresponds to self-programmed cell death (apoptosis) (see Text \hyperref[sec:S4]{S4} for more details about attractor specification).

\begin{figure*}[!t]
\centerline{\includegraphics[width=0.8\textwidth]{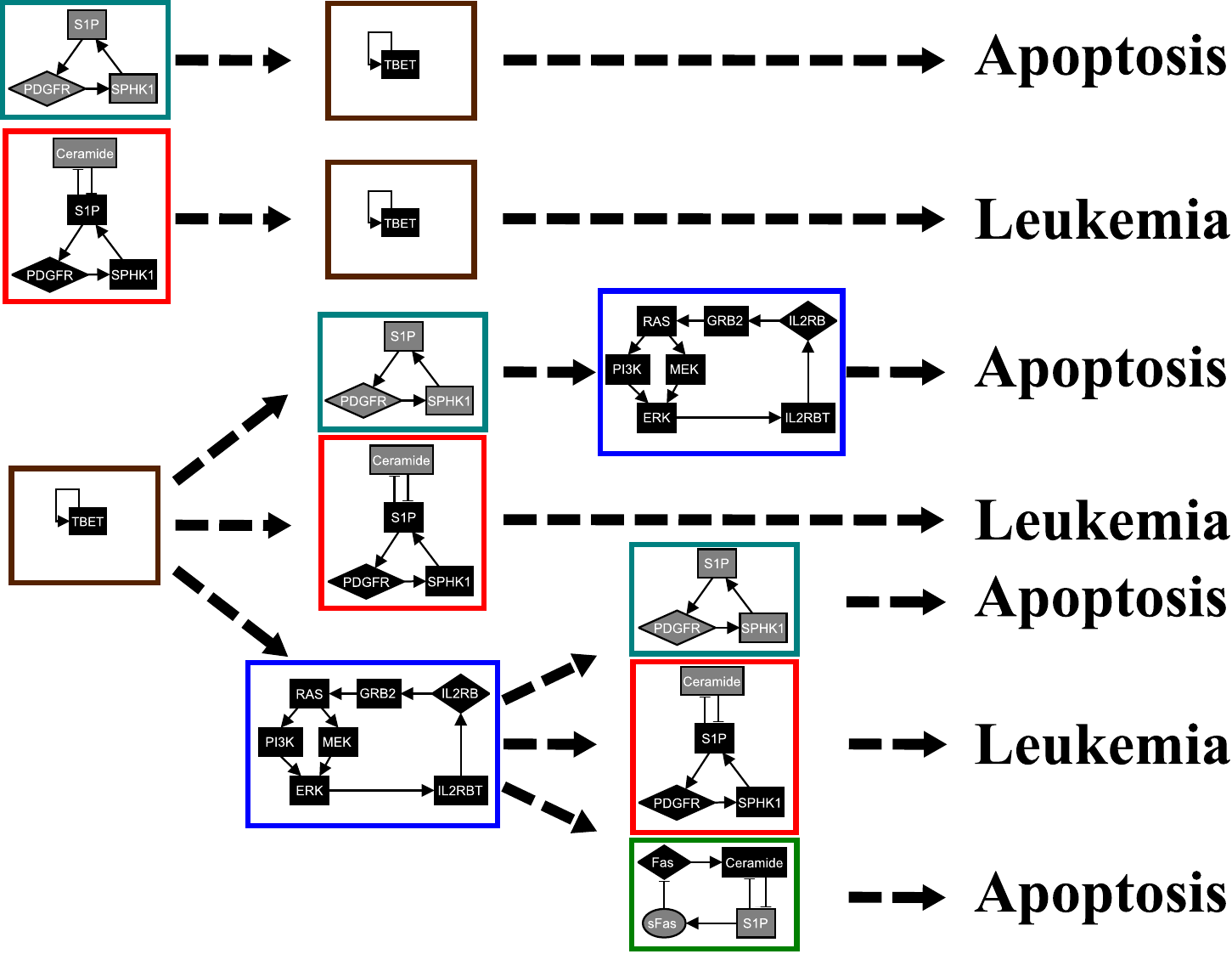}}
\caption{Stable motif succession diagram for the T-LGL leukemia network. The color of the nodes denotes their respective node states in the stable motifs: gray for 0 and black for 1. The colored rectangle surrounding each stable motif corresponds to the respective color of the motif in Fig. \ref{fig:TLGLnetwork}. There are two possible attractors for the system: the normal state of self-programmed cell death (apotosis) and the diseased state (T-LGL leukemia). The attractors obtained for each possible sequence of stable motifs are shown in the rightmost part of the figure.}
\label{fig:ReductionTLGL}
\end{figure*}

We first use our attractor-finding method on the T-LGL leukemia network in the presence of the external signals Stimuli and IL15 to obtain the stable motifs and the succession diagram. The result is 7 different stable motifs, each of which is shown in Fig. \ref{fig:TLGLnetwork} with a different node/edge color (nodes and edges with multiple colors are part of several stable motifs). The stable motif succession diagram for the T-LGL network is shown in Fig. \ref{fig:ReductionTLGL}. For simplicity we do not include the motifs associated with the node P2 in the succession diagram, as these motifs require the other stable motifs to influence the resulting attractor in the succession diagram.

The succession diagram in Fig. \ref{fig:ReductionTLGL} suggests a simple picture for the cell fate determination process: the activation of any of the three S1P-related motifs is enough to drive the system to either apoptosis (either the teal or the green stable motif in Figs. \ref{fig:TLGLnetwork} and \ref{fig:ReductionTLGL}) or T-LGL leukemia (the red stable motif in Figs. \ref{fig:TLGLnetwork} and \ref{fig:ReductionTLGL}). This result agrees with previous studies of T-LGL leukemia, in which it was found that blocking S1P signaling induced apoptosis in leukemic T-LGL cells \cite{TLGLPNAS,PDGFRTLGL}, a result reproduced by the network model when the state of S1P was set to OFF \cite{ReductionChaos,AssiehPCB}.

Next, we use the stable motif diagram in Fig. \ref{fig:ReductionTLGL} and our two control strategies to find intervention targets for the T-LGL leukemia network. The obtained intervention targets for each control strategy are shown in Table \ref{tab:TableTLGL}. Note that some intervention targets may be present in both control strategies (e.g. \{S1P=OFF\} is a target both for apoptosis control and T-LGL attractor blocking). For the case of stable motif blocking one may have the same intervention for blocking two different attractors (e.g. \{TBET=OFF\}), which means that this intervention could block either attractor.

\begin{table}[!t]
    \caption{Intervention targets for each control strategy in the T-LGL leukemia network model}
    \label{tab:TableTLGL}
    \begin{tabular}{@{\vrule height 7pt depth4pt  width0pt}c}
    \hline
    T-LGL leukemia stable motif control interventions ($C_{TLGL}$)\\
    \hline
    \{S1P=ON\}, \{Ceramide=OFF, SPHK1=ON\}, \\
    \{Ceramide=OFF,PDGFR=ON\} \\
    \hline
    Apoptosis stable motif control interventions ($C_{Apoptosis}$) \\
    \hline
    \{S1P=OFF\}, \{PDGFR=OFF\}, \{SPHK1=OFF\},\\
    \{TBET=ON, Ceramide=ON, RAS=ON\} \\
    \{TBET=ON, Ceramide=ON, GRB2=ON\}, \\
    \{TBET=ON, Ceramide=ON, IL2RB=ON\}, \\
    \{TBET=ON, Ceramide=ON, IL2RBT=ON\}, \\
    \{TBET=ON, Ceramide=ON, ERK=ON\}, \\
    \{TBET=ON, Ceramide=ON, MEK=ON, PI3K=ON\} \\
    \hline
    T-LGL leukemia stable motif blocking interventions ($B_{TLGL}$) \\
    \hline
    \{S1P=OFF\}, \{PDGFR=OFF\},\{SPHK1=OFF\}, \{Ceramide=ON\},\\
    \{TBET=OFF\},\{PI3K=OFF\},\{RAS=OFF\}, \{GRB2=OFF\},\\
    \{MEK=OFF\},\{ERK=OFF\}, \{IL2RBT=OFF\},\{IL2RB=OFF\}\\
    \hline
    Apoptosis stable motif blocking interventions ($B_{Apoptosis}$) \\
    \hline
    \{S1P=ON\}, \{PDGFR=ON\},\{SPHK1=ON\}, \{Ceramide=OFF\},\\
    \{sFas=ON\}, \{Fas=OFF\}, \{TBET=OFF\}, \{PI3K=OFF\}, \\
    \{RAS=OFF\}, \{GRB2=OFF\}, \{MEK=OFF\},\{ERK=OFF\}, \\
    \{IL2RBT=OFF\}, \{IL2RB=OFF\}  \\ \hline
    \end{tabular}
\end{table}

To validate an intervention target, we compare the probabilities that an arbitrary initial condition ends in the target attractor with and without the intervention (see \hyperref[Methods]{Methods}). The results of the intervention target validation are summarized in Table S1. For all the stable motif control interventions we obtain 100\% effectiveness in reaching the desired state, both for the case in which the intervention is permanent and for the case in which it is not. This means that all stable motif control interventions are \textit{long-term successful}, in agreement with our formal results in Text \hyperref[sec:S2]{S2}. For example, when fixing S1P=OFF the apoptosis attractor is reached for all the initial conditions, indicating that the T-LGL attractor is unreachable. For the case of the stable motif blocking interventions we find that each of them but one (GRB2=OFF) is successful in blocking its target attractor or one of its target attractors, though not always with 100\% effectiveness. For example, for TBET=OFF the apoptosis attractor is reached from 10\% of the initial conditions, which is a substantial reduction from the baseline of 62\% in the case of no intervention, indicating that this interventions is effective as an apoptosis blocking strategy. We also find that most of the stable motif blocking interventions are effective when the intervention is permanent, but only a few of them are effective when the intervention is temporary.

Single interventions are the most commonly used therapeutic strategies for treating diseases. Thus, we evaluate the success of each single intervention from control sets with more than one node (see Table S1). We find that one of the 12 single node interventions, Ceramide=ON, is 100\% effective and long-term successful. Of the remaining 11 single node interventions only a few are successful (Ceramide=OFF, SPHK1=ON, and PDGFR=ON) and/or long-term successful (SPHK1=ON and PDGFR=ON) but none of them are 100\% effective. This result illustrates the benefit of combinatorial interventions over single interventions.

\subsubsection*{Helper T Cell Differentiation Network} \label{sec:2.1b}

Helper T cells are crucial in the regulation of the immune response in mammals. These T cells release specific cytokines that alter how the immune system responds to external agents, for example, by recruiting specific immune system cells to fight infection, promoting antibody production, or inhibiting the activation and proliferation of other cells. Various subtypes of helper T cells are known, such as Th1, Th2, Th17 and Treg, which are distinguished by a differential expression of specific transcription factors and cytokines.

\begin{figure*}[t]
\centerline{\includegraphics[width=0.7\textwidth]{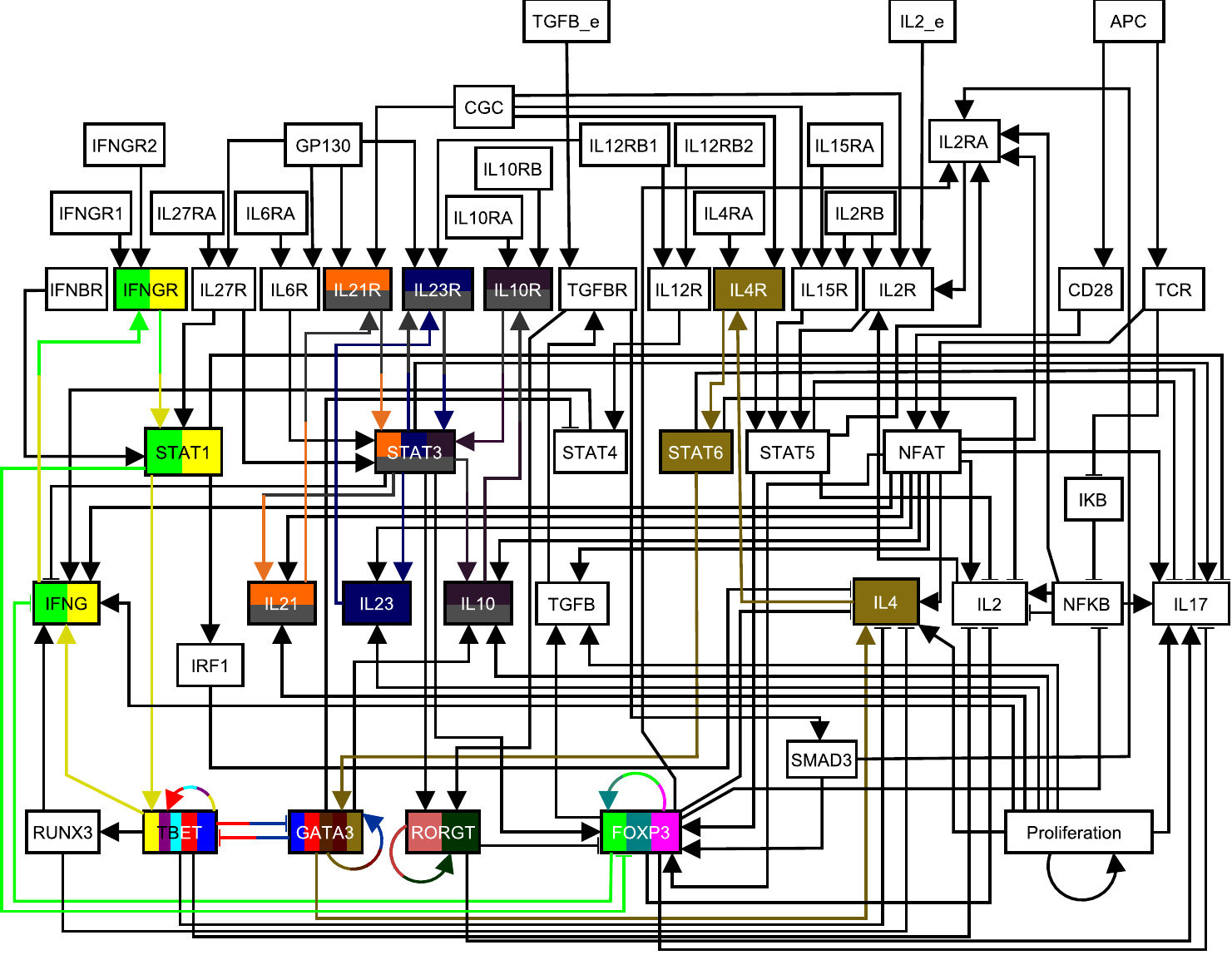}}
\caption{The helper T cell differentiation network. The nodes that encode the environmental conditions (APC=ON, TGFB\_e=ON, IL2\_e=ON) are located in the upper part of the network diagram. Node colors are used to denote the different stable motifs of the network in the used environmental conditions. Nodes and edges with multiple colors are part of several stable motifs. An arrowhead or a short perpendicular bar at the end of an edge indicates activation or inhibition, respectively. This figure is adapted from \cite{ThCellDifferentiation}.}
\label{fig:Thnetwork}
\end{figure*}

A logical network model of the regulatory and signaling pathways controlling helper T cell activation and differentiation was constructed by Naldi et al. \cite{ThCellDifferentiation}. This network model has several attractors, which correspond to the known canonical helper T cell subtypes, and also to some hybrid cell types (see \cite{ThCellDifferentiation} and Text \hyperref[sec:S5]{S5}). The reachability of each attractor depends on the presence of several external environmental signals (either cytokines or antigen), which are represented as input nodes in the network. For our study we use one of the environmental conditions studied by Naldi et al. (TGFB\_e=ON, IL2\_e=ON, and APC=ON) \cite{ThCellDifferentiation} because it allows us to explore control targets for all T cell subtypes. The helper T cell differentiation network under the selected environmental conditions consists of 55 nodes and 121 edges and is shown in Fig. \ref{fig:Thnetwork}. Its corresponding logical functions are reproduced in Text \hyperref[sec:S5]{S5}.

\begin{figure*}[!h]
\centerline{\includegraphics[width=0.6\textwidth]{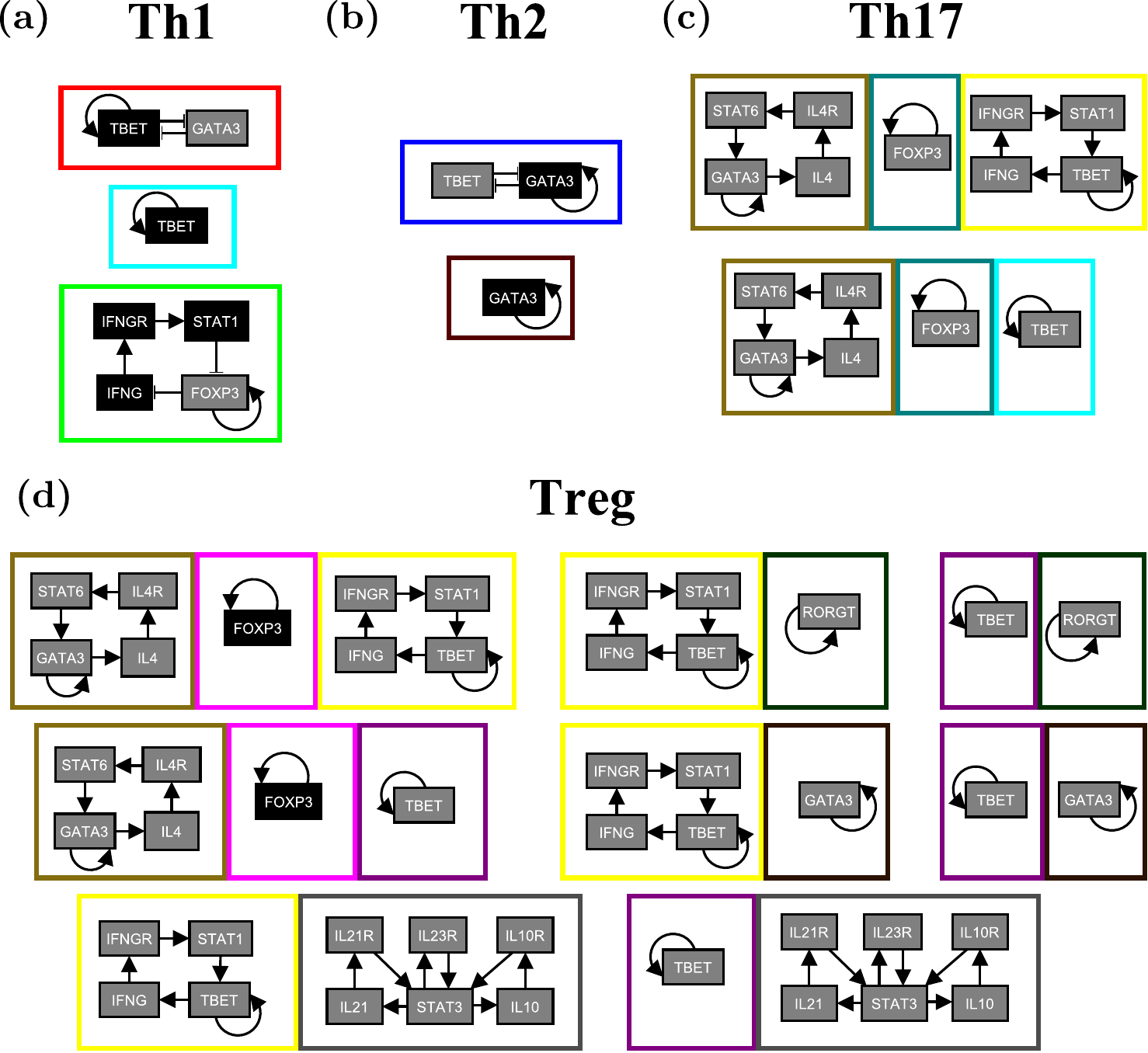}}
\caption{Minimal subsets of stable motifs associated to each helper T cell subtype. Each stable motif is enclosed by a colored rectangle, and motifs which are part of the same minimal subset have their enclosing rectangles touching each other. The node colors denotes their respective node states in the stable motifs: gray for 0 and black for 1. The color of the rectangle enclosing each stable motif corresponds to the respective color of that motif in Fig. \ref{fig:Thnetwork}.}
\label{fig:ReductionTh}
\end{figure*}

We obtain 17 stable motifs, each of which is shown in Fig. \ref{fig:Thnetwork} with a different node/edge color, and a stable motif succession diagram composed of 697 sequences. Despite the large size of the succession diagram, a closer look at it gives a simple interpretation: the stable motifs associated with each attractor regulate the characteristic transcription factor of each helper T cell subtype (see Text \hyperref[sec:S5]{S5}). We use the stable motif succession diagram and our stable motif control and stable motif blocking strategies to find intervention targets for each helper T cell subtype (see Table \ref{tab:TableTh}).

To validate the proposed intervention targets we use the same procedure as in the T-LGL leukemia network case (see \hyperref[Methods]{Methods}). We also look at the effect of single node interventions for control sets with more than one node. The results of the intervention targets for the stable motif control, stable motif blocking strategies, and single node interventions are summarized in Table S2. We find that (i) there is a 100\% effectiveness in reaching the desired state for all the stable motif control interventions, (ii) most of the stable motif blocking interventions are successful in blocking their target attractor or one of their target attractors, though not always with 100\% effectiveness, and (iii) some single interventions are successful, but none of them are 100\% effective.

\subsection*{The control targets transcend the logical modeling framework}

The network control approach we propose is formulated in a Boolean framework, which brings up the question of whether the control targets identified are dependent on the logical modeling scheme. To address this, we translate the studied Boolean network models into ordinary differential equation (ODE) models using the method described by Wittmann et al. \cite{BooleantoODE}. In the ODE models the node state variables $\widetilde\sigma_i$ can take values in the range $\left[0,1\right]$; the differential equations of the translated model have the form $\dot{\widetilde\sigma}_i=(1/\tau_i)[\widetilde{f}_i(\widetilde\sigma_{i_1}, \dots, \widetilde\sigma_{i_{k_i}})-\widetilde\sigma_i]$, where $\widetilde{f}_i$ is a smooth Hill-type function parameterized by Hill coefficients and threshold parameters, and $\tau_i$ is a time-scale parameter. The function $\widetilde{f}_i$ is such that it matches the Boolean function $f_i$ whenever its inputs $\widetilde\sigma_{i_1}, \dots, \widetilde\sigma_{i_{k_i}}$ are either 0 or 1. Thus, the fixed point attractors of the Boolean model are preserved in the ODE model.

We test the effectiveness of the stable motif control interventions in the translated ODE models by comparing the probability for an uniformly chosen initial condition to reach the target attractor with and without the intervention (see Text \hyperref[sec:S6]{S6}). We find that the stable motif control interventions are still 100\% effective or very close for both permanent and transient interventions (Tables S3 and S4). We also find that the effectiveness of the interventions is mostly unchanged by varying the Hill coefficients (Table S5), varying the the time-scale parameters $\tau_i$ and thresholds (Table S6), or fixing the intervened node variables close to but not exactly at the intervention-prescribed values (Table S7). We finally test single interventions and find that they still underperform combinatorial interventions (Tables S3 and S4).

To further validate the successful control targets we identified, we searched the literature for experimental support for these targets. We find that several of the single interventions predicted to be successful in inducing apoptosis of leukemic T cells or in inducing specific T cell types were found to be successful experimentally. The control targets for which experimental support was found, the attractors they lead to, and the references are shown in Table \ref{tab:ExperimentalSupport}. Collectively, these results strongly suggest that the control targets identified by our approach transcend the logical framework.

\section*{Discussion} \label{sec:3}

Identifying control targets for intracellular networks is of crucial importance for practical applications such as disease treatment and stem cell reprogramming. Despite recent advances in network controllability approaches, most of them rely solely on the topology \cite{BarabasiControllability,BarabasiObservability,NodalDynamics,FVS1,FVS2} or the dynamics \cite{MotterControl,Akutsu,Cheng,Tamura} of the network. Thus, potentially important effects that depend on the interplay between structure (topology) and function (dynamics), such as combinatorial interactions, are not considered. In this work we proposed a network control approach that combines the structural and functional information of a discrete (logical) dynamic network model to identify control targets. The method builds on the concept of stable motif and its relation to finding attractors \cite{ReductionChaos}, and takes it much further by connecting stable motifs with a way to identify targets whose manipulation (upregulation or downregulation) ensures the convergence of the system to an attractor of interest. We illustrated our method's potential to find intervention targets for cancer treatment and cell differentiation by applying it to network models of T-LGL leukemia and helper T cell differentiation.

The control interventions identified by our method have many desirable characteristics. For example, stable motif control interventions are guaranteed to drive an initial state to the target attractor state with 100\% effectiveness, regardless of the initial state, a general result which we prove in Text \hyperref[sec:S2]{S2} and corroborate in our test cases (see Tables S1 and S2). They are also long-term successful, meaning that the intervention only needs to be applied transiently for the network to reach and stay in the desired state, a general result which we also verify in our test cases (see Tables S1 and S2). We attribute these properties to the use of the natural (autonomous) dynamics of the network to control its dynamics.

Another noteworthy characteristic of our stable motif control method is the combinatorial nature of the multi-target interventions. As shown in Tables S1 and S2, only one single-node intervention (namely, Ceramide=ON in the T-LGL leukemia network) was able to match the 100\% effectiveness of the multi-target interventions. This agrees with recent clinical studies on the advantages of combinatorial over single target interventions \cite{CombTherapyClinical1,CombTherapyModel,CombinatorialTherapy}. Finally, the stable motif control interventions for our case studies target only a few nodes (between one and five out of more than fifty), which matches what is expected from stem cell reprogramming experiments \cite{StemCellsTakahashi,CellReprogReview1,CellReprogReview2,MullerSchuppertReply}.

The framework presented in this work is formulated and applied in the context of logical network modeling of cell fate reprogramming processes but its applicability is not restricted to it. Indeed, our control approach is applicable to any dynamic process that can be captured qualitatively by a Boolean dynamic network model such as ecological community dynamics \cite{PlantPolinator}, social dynamics \cite{SocialReview,VoterModels}, or disease spreading \cite{EpidemicVerispignani,EpidemicKlemm}. The validity of the control targets on the translated ODE models of our two case studies and the experimental support found for several of these targets demonstrates the broader, potentially model-independent reach of our method. Further work is needed to address exactly how to extend the concept of stable motif and our network control approach to continuous models; formalizing our framework to admit an arbitrary number of discrete states and other updating schemes may prove a valuable step in this direction.

Taken together, our results provide a novel framework for the control of the dynamics of intracellular networks that combines realistically obtainable structural and functional information of the network of interest. As such, we expect this framework to be significant to a variety of practical applications and to also provide a new avenue to better understand how the complex behaviors of cells in living organisms emerges from the underlying network of biochemical interactions.

\begin{table}[h]
\caption{Intervention targets for each control strategy in the helper T cell network.}
\label{tab:TableTh}
\begin{tabular}{@{\vrule height 7pt depth4pt  width0pt}c}
\hline
Th1 stable motif control interventions ($C_{Th1}$) \\
\hline
\{TBET=ON\}\\
\hline
Th2 stable motif control interventions ($C_{Th2}$) \\
\hline
\{GATA3=ON\} \\
\hline
Th17 stable motif control interventions ($C_{Th17}$) \\
\hline
\{GATA3=OFF, FOXP3=OFF, TBET=OFF, STAT3=ON\}, \{GATA3=OFF, FOXP3=OFF, TBET=OFF, IL10=ON\},\\
\{GATA3=OFF, FOXP3=OFF, TBET=OFF, IL10R=ON\}, \{GATA3=OFF, FOXP3=OFF, TBET=OFF, IL21=ON\},\\
\{GATA3=OFF, FOXP3=OFF, TBET=OFF, IL21R=ON\}, \\
\{GATA3=OFF, FOXP3=OFF, TBET=OFF, IL23R=ON, RORGT=ON\}\\
\hline
Treg stable motif control interventions ($C_{Treg}$) \\
\hline
\{GATA3=OFF, FOXP3=ON, TBET=OFF\}, \{GATA3=OFF, TBET=OFF, STAT3=OFF\},\\
\{GATA3=OFF, TBET=OFF, IL23R=OFF, IL10R=OFF, IL21R=OFF\},\\
\{GATA3=OFF, TBET=OFF, IL23R=OFF, IL10=OFF, IL21R=OFF\},\\
\{GATA3=OFF, TBET=OFF, IL23R=OFF, IL10R=OFF, IL21=OFF\},\\
\hline
Th1 stable motif blocking interventions ($B_{Th1}$) \\
\hline
\{GATA3=ON\}, \{TBET=OFF\}, \{IL4=ON\}, \{IL4R\_2=ON\}, \{STAT6=ON\}, \{STAT1=OFF\}, \{IFNG=OFF\}, \{IFNGR=OFF\},\\
\{IL23=OFF\}, \{IL10=ON,OFF\}, \{IL10R=ON,OFF\}, \{IL21=ON,OFF\}, \{IL21R=ON,OFF\}, \{STAT3=ON,OFF\}, \\
\{IL23R=ON,OFF\}, \{RORGT=ON,OFF\}, \{FOXP3=ON,OFF\}\\
\hline
Th2 stable motif blocking interventions ($B_{Th2}$) \\
\hline
\{GATA3=OFF\}, \{TBET=ON\}, \{STAT1=ON\}, \{IFNG=ON\}, \{IFNGR=ON\},\{IL23=OFF\}, \{IL23R=OFF\}, \{STAT3=OFF\},\\
\{IL10=OFF\}, \{IL10R=OFF\},\{RORGT=ON\}, \{FOXP3=ON,OFF\}\\
\hline
Th17 stable motif blocking interventions ($B_{Th17}$) \\
\hline
\{GATA3=ON\}, \{TBET=ON\}, \{IL4=ON\}, \{IL4R\_2=ON\}, \{STAT6=ON\}, \{STAT1=ON\},\{IFNG=ON\}, \{IFNGR=ON\},\\
\{STAT3=OFF\}, \{FOXP3=ON\}, \{RORGT=OFF\},\{IL21=OFF\}, \{IL21R=OFF\}, \{IL23=OFF\}, \{IL23R=OFF\}, \\
\{IL10=OFF\}, \{IL10R=OFF\}\\
\hline
Treg stable motif blocking interventions ($B_{Treg}$) \\
\hline
\{GATA3=ON\}, \{TBET=ON\}, \{IL4=ON\}, \{IL4R\_2=ON\}, \{STAT6=ON\}, \{STAT1=ON\},\{IFNG=ON\}, \{IFNGR=ON\},\\
\{STAT3=ON,OFF\},\{FOXP3=OFF\}, \{RORGT=ON,OFF\},\{IL21=ON,OFF\}, \{IL21R=ON,OFF\}, \{IL23=OFF\}, \\
\{IL23R=ON,OFF\}, \{IL10=ON,OFF\}, \{IL10R=ON,OFF\}\\
\hline
\end{tabular}
\end{table}

\begin{table}[!h]
\caption{Experimental support for successful control targets in Tables \ref{tab:TableTLGL} and \ref{tab:TableTh}.}
\label{tab:ExperimentalSupport}
\begin{tabular}{@{\vrule height 7pt depth4pt  width0pt}ccc}
\hline
Intervention & Target attractor & Reference \\
\hline
\multicolumn{3}{c}{T-LGL leukemia} \\
\hline
\{S1P=OFF\}  & Apoptosis &  \cite{PDGFRTLGL} \\
\{SPHK1=OFF\}  & Apoptosis &  \cite{TLGLPNAS} \\
\{PDGFR=OFF\}  & Apoptosis &  \cite{TLGLPNAS,PDGFPI3KTLGL}\\
\{Ceramide=ON\}  & Apoptosis &  \cite{CeramideTLGL}\\
\{RAS=OFF\}  & Apoptosis &  \cite{ERKTLGL}\\
\{MEK=OFF\}  & Apoptosis &  \cite{ERKTLGL}\\
\{ERK=OFF\}  & Apoptosis &  \cite{ERKTLGL}\\
\{PI3K=OFF\}  & Apoptosis &  \cite{PDGFPI3KTLGL,PI3KTLGL}\\
\hline
\multicolumn{3}{c}{Helper T cell differentiation} \\
\hline
\{TBET=ON\}  & Th1 & \cite{Th1Th2Review,Th1TBET}\\
\{GATA3=ON\}  & Th2 & \cite{Th1Th2Review,Th2GATA3} \\
\{IL21=ON\}  & Th17 & \cite{Th17IL21RIL23R} \\
\{IL21R=ON\}  & Th17 & \cite{Th17IL21RIL23R} \\
\{IL23R=ON\}  & Th17 & \cite{Th17IL21RIL23R} \\
\{FOXP3=ON\}  & Treg & \cite{TregFOXP3} \\
\hline
\end{tabular}
\end{table}

\section*{Methods} \label{Methods}
\setcounter{subsection}{0}
\subsection*{Computational methods}
The simulations of the logical model, the attractor-finding method, and the analysis of the stable motif succession diagrams were performed using a custom Java code, which is available per request to the interested reader. The generation of the ODE model from the logical model was done using the MATLAB implementation of the method of Wittman et al. \cite{BooleantoODE,Odefy}; the numerical integration of the ODE models was performed using MATLAB's ode45 function (see Text \hyperref[sec:S6]{S6} for more details). The networks in all figures were created using the yEd graph editor (http://www.yworks.com/).

\subsection*{General asynchronous updating scheme}\label{Methods:Asynchr}In the general asynchronous scheme, the state of the nodes is updated at discrete time steps starting from an initial condition at $t=0$. At every time step, one of the variables is chosen randomly (uniformly) and is updated using its respective function and the state of its regulators at the previous time step
\begin{equation} \label{eq:asynchronous1}
  \sigma_j(t+1)= f_j\left(\sigma_{j_1}(t), \sigma_{j_2}(t), \dots, \sigma_{j_{k_j}}(t)\right),
\end{equation}
while the rest of the variables retain their state. In this way, every possible update order is allowed, and thus, all relative timescales of the processes involved are sampled.

\subsection*{Stable motif control algorithm}\label{Methods:Control1}For an attractor of interest $\mathcal{A}$, the steps of the stable motif network control method are the following:
\begin{itemize}
\item[-] \textit{Step }\textit{1}: Identify the sequences of stable motifs that lead to $\mathcal{A}$. These can be obtained from the stable motif succession diagram (see Fig. \ref{fig:ReductionMethod}) by choosing the attractor of interest in the right-most part and selecting all of the attractor's predecessors in the succession diagram.
\item[-] \textit{Step }\textit{2}: Shorten each sequence $\mathcal{S}$ by identifying the minimum number of motifs in $\mathcal{S}$ required for reaching $\mathcal{A}$ and removing the remaining motifs from the sequence. This minimum number of motifs can be identified from the stable motif succession diagram (Fig. \ref{fig:ReductionMethod}); they are the motifs after which all consequent motif choices lead to the same attractor $\mathcal{A}$.
\item[-] \textit{Step 3}: For each stable motif's state $\mathcal{M}=\left(\sigma_{\mathcal{M}_1}, \sigma_{\mathcal{M}_2}, \ldots, \sigma_{\mathcal{M}_m}\right)$, find the subsets of stable motif's states $O=\left\{M_i\right\}, M_i \subseteq \mathcal{M}$ that, when fixed, are enough to force the state of every node in the motif into $\mathcal{M}$. At worst, there will only be one subset, which will equal the whole stable motif's state $\mathcal{M}$. If any of these subsets is fully contained in another subset, remove the larger of the subsets. In each stable motif sequence $\mathcal{S}=\left(\mathcal{M}_1, \ldots, \mathcal{M}_L\right)$, substitute every stable motif $\mathcal{M}_j$ with the subsets of the stable motif's states obtained, that is, $\mathcal{S}=\left(O_1, \ldots, O_L\right)$.
\item[-] \textit{Step 4}: For each sequence $\mathcal{S}=\left(O_1, \ldots, O_L\right)$ create a set of states $\mathcal{C}$ by choosing one of the subsets of stable motif's states $M_{k_j}$ in each $O_j$ and taking their union, that is, $\mathcal{C}=M_{k_1} \cup \cdots \cup M_{k_L}, M_{k_j} \in O_j$. The network control set for attractor $\mathcal{A}$ is the set of node states $C_\mathcal{A}=\left\{\mathcal{C}_i\right\}$ obtained from all possible combinations of subsets of stable motif's states $M_{k_j}$'s for every sequence $\mathcal{S}$. To avoid any redundancy, we additionally prune $C_\mathcal{A}$ of duplicates and remove each set of node states $\mathcal{C}_i$ which is a superset of any of the other sets of node states $\mathcal{C}_j$ (i.e. $\mathcal{C}_j \subset \mathcal{C}_i$).
\end{itemize}
For a pseudocode of each step of the stable motif control algorithm see Text \hyperref[sec:S7]{S7}.

\subsection*{Stable motif blocking algorithm}\label{Methods:Control2}Given an attractor $\mathcal{A}$ one is interested in obstructing, the steps to identify potential interventions are the following:
\begin{itemize}
\item[-] \textit{Step }\textit{1}: Identify the sequences of stable motifs that lead to $\mathcal{A}$. This step is the same as the first step in the stable motif control algorithm, and can be obtained from the stable motif succession diagram (Fig. \ref{fig:ReductionMethod}).
\item[-] \textit{Step }\textit{2}: Take each stable motif's state $\mathcal{M}_i$ in the sequences obtained in the previous step. Create a new set $\mathbf{M}_{\mathcal{A}}$ with all of these stable motif states, $\mathbf{M}_{\mathcal{A}}=\left\{\mathcal{M}_i\right\}$.
\item[-] \textit{Step }\textit{3}: Take each node state $\sigma_j \subset \mathcal{M}_i$ of the stable motif's states $\mathcal{M}_i$ in $\mathbf{M}_{\mathcal{A}}$. Create a new set $\mathcal{B}_{\mathcal{A}}$ with the negation of each node state, $\mathcal{B}_{\mathcal{A}}=\left\{\overline{\sigma}_j\right\}$. The node states in $\mathcal{B}_{\mathcal{A}}$ and any combination of them are identified as potential interventions to block attractor $\mathcal{A}$.
\end{itemize}
For a pseudocode of each step of the stable motif blocking algorithm see Text \hyperref[sec:S7]{S7}.

\subsection*{Intervention target validation}
To validate an intervention target, we fix the node states prescribed by the intervention, choose a random (uniformly chosen) initial condition, and evolve the system using the general asynchronous updating scheme for a sufficiently large number of time steps (50,000) so that the system reaches an attractor. We repeat this for a large number of initial conditions (100,000) and calculate the probability of reaching each attractor from an arbitrary (uniformly chosen) initial condition. We also look at the probability of reaching each attractor when the intervention is not permanent, that is, we fix the prescribed node states for a large number of time steps, then stop fixing these states and wait for the system to reach an attractor. For this case we use 100,000 uniformly chosen initial conditions and 50,000 time steps both before and after stopping the intervention. The number of initial conditions we use is enough to estimate the probabilities $p_{Attr}$ of reaching the attractor of interest with an error (standard deviation of the estimated probability $p_{Attr}$) of $3\cdot10^{-3}\left[p_{Attr} (1-p_{Attr})\right]^{1/2}$. Equivalently, if $p_{Attr}$ is expressed as a percentage (which we denote as  $\%p_{Attr}$ for clarity), the error in it is estimated as $3\cdot10^{-3}\left[\%p_{Attr}(100\%-\%p_{Attr})\right]^{1/2}\%$ (e.g. 0.03\% for a $\%p_{Attr}$ of 1\%, and 0.15\% for a $\%p_{Attr}$ of 50\%). The number of time steps we use is enough to show no changes in $p_{Attr}$ beyond what is expected from the standard deviation of the estimated probability $p_{Attr}$, and is also found to be enough for the initial conditions to reach the attractors when no interventions are applied.

\section*{Acknowledgments}
We would like thank Steven N Steinway for fruitful discussion and the three anonymous reviewers for their suggestions.

\clearpage

\setcounter{table}{0}
\renewcommand{\thepseudocode}{\arabic{pseudocode}}
\renewcommand{\figurename}{Figure}
\renewcommand{\tablename}{Table}
\renewcommand{\thetable}{S\arabic{table}}
\renewcommand{\thefigure}{S1.\arabic{figure}}
\setcounter{figure}{0}

\section*{Text S1. Details and examples of the attractor finding method and stable motif control algorithm} \label{sec:S1}
\setcounter{subsection}{0}
\subsection{Attractor-finding method} \label{sec:S1A}
The task of finding attractors for a Boolean network is limited by the exponential growth of the state space with the number of number of nodes $N$. As a consequence, a full search of the state space to find the attractors is viable only for small networks ($N \lesssim 20$). To overcome this problem when dealing with intracellular networks (or, more generally, with sparse networks), we recently proposed an alternative approach to find the attractors of a Boolean network model \cite{ReductionChaos}. This approach successfully found the attractors of a previously developed biological network model composed of 60 nodes \cite{TLGLPNAS,ReductionChaos}, and of an ensemble of random Boolean networks composed of up to hundreds of nodes \cite{ReductionChaos}. It has also been proven to find both fixed point and complex attractors. More formally, the result of the attractor-finding method are the so-called quasi-attractors, each of which has a corresponding system attractor (see Text \hyperref[sec:S2]{S2} section \hyperref[sec:S2A]{A} or ref. \cite{ReductionChaos} for details). A quasi-attractor is a set of network states in which each node state is either fixed (0 or 1) or is not specified, in which case it is expected to oscillate. The difference between an attractor and a quasi-attractor is that an attractor includes the nodes that oscillate and the precise network states they can take, while the quasi-attractor does not specify the precise network states that the oscillating nodes take. For a more detailed explanation and the step-by-step algorithm of the attractor-finding method see Text \hyperref[sec:S2]{S2} section \hyperref[sec:S2A]{A} or ref. \cite{ReductionChaos}.

\subsection{Stable motif identification} \label{sec:S1B}

Stable motifs are function-dependent network components (subnetworks) in a Boolean model that must stabilize in a fixed state. These network components and their respective fixed states are identified with a certain type of strongly connected component (or SCC, a subgraph in a directed network for which all node pairs are connected by paths in both directions) in an expanded representation of the Boolean network \cite{ReductionChaos,ESMRuiSheng}. The expanded network representation explicitly incorporates the combinatorial nature and the sign of the interactions. This is achieved by introducing complementary nodes for every node, which are used to indicate negative regulation in a Boolean function (NOT relationship), as well as introducing a composite node to denote a conditional dependence (AND relationship) among two or more inputs in a Boolean function. A detailed explanation of the expanded network representation can be found in Text \hyperref[sec:S2]{S2} section \hyperref[sec:S2A1]{A.1} and ref. \cite{ReductionChaos}.

As an example, consider node $C$ in the example network in Fig. \hyperref[fig:NetworkExample]{1}. The expanded network representation of $C$ and its complementary node $\overline{C}$ is shown in Fig. \hyperref[fig:FigS2]{S2}(a). The function $f_C=\left(A\hbox{ AND }B\right)\hbox{ OR }D$ contains an AND relationship between the state of node $A$ and the state of node $B$, so a composite node $AB$ is added when expanding the network. Node $A$ and $B$ are connected by directed edges to the composite node $AB$, and an edge from $AB$ to $C$ is also present. Since the state of node $D$ is OR-separated from the $\left(A\hbox{ AND }B\right)$ term, an edge from $D$ to $A$ is part of the expanded network. A complementary node $\overline{C}$ is also added in the expanded network, with an associated Boolean function $f_{\overline{C}}=\overline{f}_C= \left(\hbox{NOT } A\hbox{ AND NOT } D\right)\hbox{ OR }\left(\hbox{NOT } B\hbox{ AND NOT }D\right)$. The expanded network will contain the composite nodes $\overline{A}\overline{D}$ and $\overline{B}\overline{D}$, directed edges from $\overline{A}$ and $\overline{D}$ to $\overline{A}\overline{D}$, directed edges from $\overline{B}$ and $\overline{D}$ to $\overline{B}\overline{D}$, and directed edges from $\overline{A}\overline{D}$ and $\overline{B}\overline{D}$ to $\overline{C}$. As another example, the expanded network representation of $B$ and $\overline{B}$ is shown in Fig. \hyperref[fig:FigS2]{S2}(b).

In the expanded network representation, stable motifs correspond to minimal strongly connected components that satisfy two properties: (1) the strongly connected component does not contain both a node and its complementary node, and (2) if the strongly connected component contains a composite node, all of its input nodes must also be part of the strongly connected component. A more detailed explanation of the method for identifying stable motifs is given in Text \hyperref[sec:S2]{S2} and ref. \cite{ReductionChaos}. The main point is that a stable motif can be identified with a set of nodes that form a minimal strongly connected component, and that a stable motif's corresponding states are such that they form a partial fixed point of the Boolean model (for a Boolean model with node variables $\{\sigma_i\}$ and associated functions $\{f_i\}$, a partial fixed point is a set of node states $P=\{\sigma_{p_1}=s_{p_1}, \sigma_{p_2}=s_{p_2},\ldots, \sigma_{p_l}=s_{p_l}\}$ such that if $\Sigma_P$ is any network state in which $\sigma_{p_k}=s_{p_k} \forall p_k \in \left\{p_1, p_2, \ldots, p_l\right\}$, then $f_{p_j}(\Sigma_P)=s_{p_j}$.).

As an example, consider the logical network in Fig. \hyperref[fig:NetworkExample]{1}(a) in the main text, and its associated stable motifs in Fig. \hyperref[fig:NetworkExample]{1}(b). The expanded network representation of these stable motifs is shown in the leftmost column of Fig. \hyperref[fig:FigS5]{S5} and their corresponding node states are shown in the middle column of Fig. \hyperref[fig:FigS5]{S5}.

\subsection{Network reduction} \label{sec:S1C}
Network reduction techniques \cite{AssiehJTB,DecimationProcess,ReductionNadil,ReductionVeliz} are used to simplify a network when a node's state is known to be fixed, for example, in the case of a sustained signal. The downstream effect of this fixed state is evaluated by setting the fixed node state of interest in the Boolean function of its target nodes. As a consequence, a target node's modified Boolean function may only have one possible outcome, which means the target node's state is fixed. The whole procedure is repeated iteratively until no new fixed node states are obtained. These fixed-state nodes and their edges can be eliminated from the network.

In our work, this reduction method is used to evaluate the effect of each separate stable motif on the rest of the network \cite{ReductionChaos}. This is done by applying network reduction separately for each stable motif of the network, using the stable motif's corresponding states as the initial fixed node states. The result is a set of simplified Boolean networks, each of which corresponds to a separate stable motif, and a set of node states for each simplified network, with the latter being the node states that stay fixed due to their respective stable motifs.

\subsection{Dependence of stable motifs and attractors on the logical functions} \label{sec:S1D}
An important question related to the attractor-finding method (and, thus, to the stable motif control algorithm) is how stable motifs and attractors depend on the logical functions of the logical network in consideration. The attractor-finding  method takes as an input a given logical network, which includes both the topology of the network and the associated logic functions. Given that any topological or functional change in the logical network gives rise to a different logical network (potentially similar or potentially very different, depending on the extent of the change), the attractor-finding method needs to be applied again to the modified logical network to fully assess if the change impacts the stable motifs and/or the attractor landscape.

Even though the task of assessing the change that an arbitrary change in a logical functions brings about on a stable motif and/or the attractors is a complicated problem, it is possible to identify sufficient conditions for a target stable motif and/or attractor to be conserved after a change in the functions or topology of the logical network. For the case of a stable motif, this is done by identifying the terms of the logical functions associated with the stable motif; these terms are part of the formal definition of stable motifs in the expanded network representation of the network.

As an example, consider the logical network in Fig. \hyperref[fig:NetworkExample]{1}(a) in the main text, and its associated stable motifs in Fig. \hyperref[fig:NetworkExample]{1}(b). These stable motifs are shown in their expanded network representation in Fig. \hyperref[fig:FigS5]{S5}, together with the associated terms of the logical function. The sufficient condition for the preservation of a stable motif is that the terms associated with it stay the same. For the case of an attractor, one needs to identify the terms related to the stable motifs in each sequence associated with the attractor, and also consider the terms in the logical functions responsible for the node states that get fixed during the network reduction portion of the attractor-finding method. The whole process can become quite convoluted and is beyond the scope of this work.

\subsection{Quasi-attractors, oscillations, and the stable motif control algorithm} \label{sec:S1E}

The stable motif control algorithm uses as a starting point the stable motif succession diagram obtained from the attractor-finding method in ref. \cite{ReductionChaos}. As discussed in section A, the output of the attractor-finding method is, formally, not the system's attractors but its quasi-attractors, each of which is a network state which captures a steady state exactly and is a compressed representation of a complex (oscillating) attractor. A consequence of the relation between quasi-attractors and attractors is that certain networks with special types of complex attractors need to be treated with care when our method is applied. These special types of attractors were called unstable oscillations and incomplete oscillations in ref. \cite{ReductionChaos}.

Unstable and incomplete oscillations denote the dynamical behavior of the node state of a group of nodes that form a special type of SCC in the expanded network representation described in section A. In unstable oscillations the node state of the nodes forming the SCC oscillate in an attractor, yet are fixed in another attractor that differs only in the state of these nodes (and, potentially, on the state of nodes affected by the state of nodes in the SCC). In incomplete oscillations the node state of the nodes forming the SCC oscillate in an attractor, but do not visit all possible states of their sub-state-space in the attractor. Incomplete oscillations are the reason why undetermined states in a quasi-attractor do not necessarily oscillate.

These special types of attractors pose a challenge to the attractor-finding method, in the sense that one needs to go beyond identifying stable motifs to also identify potential unstable oscillations and incomplete oscillations. Our method can identify when a given network has the potential to have this special type of complex attractor by an extra step of analysis involving what we called oscillating components, and may in some cases involve an exploration of the sub-state-space associated with the potentially-oscillating network components. For more details, see Text \hyperref[sec:S2]{S2} section \hyperref[sec:S2A]{A} or ref. \cite{ReductionChaos}.

In some cases, the sub-state-space associated with the potentially-oscillating network components is too large to fully enumerate. In these cases, the stable motif succession diagram we can obtain without exploring this sub-state-space has an outgoing arrow which may not exist in the full stable motif succession diagram, after which there may be an attractor not found in the rest of the motif succession diagram. As a consequence, we only have partial knowledge of the full stable motif succession diagram. As we discuss in Text \hyperref[sec:S3]{S3} section \hyperref[sec:S3B3]{B.3}, partial knowledge of the stable motif succession diagram does not compromise the effectiveness of the stable motif control algorithm for the attractors in the part of the motif succession diagram we have knowledge of, but it does require us to use a modified stable motif control algorithm in which step 2 of the original algorithm is skipped.

As an example of unstable oscillations, consider the Boolean network shown in Fig. \hyperref[fig:FigS3]{S3}, which is the simplest example (up to a relabeling of node states) of unstable oscillations. The network and logical functions are given in Fig. \hyperref[fig:FigS3]{S3}(a), the state space of the system under asynchronous updating is given in Fig. \hyperref[fig:FigS3]{S3}(b), and the stable motif succession diagram is given in Fig. \hyperref[fig:FigS3]{S3}(c). Note that the states of nodes $A$ and $B$ oscillate between three network states in an attractor ($\left\{(A=1, B=0), (A=0, B=0), (A=0, B=1)\right\}$), while they are fixed in another attractor ($\left\{A=1, B=1\right\}$). Applying the attractor-finding method to this network, we find a stable motif $\{A=1, B=1\}$ and find that the set of nodes $\{A, B\}$ satisfy the necessary conditions to display unstable oscillations. Since $A$ and $B$ satisfy the necessary conditions to display unstable oscillations, one needs to search the state space spanned by this set of nodes, which in this case corresponds to the whole state space. Doing so, one finds that there is an unstable oscillation between the network states $\{A=0, B=1\}$, $\{A=0, B=0\}$, and $\{A=1, B=0\}$. The motif succession diagram in this case has the stable motif $\{A=1, B=1\}$ and the oscillating motif ($\{A=0, B=1\}$, $\{A=0, B=0\}$, $\{A=1, B=0\}$), as shown in in Fig. \hyperref[fig:FigS3]{S3}(c).

As an example of incomplete oscillations, consider the Boolean network shown in Fig. \hyperref[fig:FigS4]{S4}. The network and logical functions are given in Fig. \hyperref[fig:FigS4]{S4}(a), the state space of the system under asynchronous updating is given in Fig. \hyperref[fig:FigS4]{S4}(b), and the stable motif succession diagram is given in Fig. \hyperref[fig:FigS4]{S4}(c). Note that the states of nodes $A$ and $B$ oscillate between three subnetwork states in the attractors ($\left\{(A=1, B=0), (A=0, B=0), (A=0, B=1)\right\}$), and thus, $A$ and $B$ do not visit all possible states of their sub-state-space in each attractor. The result of applying the attractor-finding method to this network is a stable motif $\{C=0\}$ and that the set of nodes $\{A, B\}$ satisfies the conditions to display incomplete oscillations. Since $A$ and $B$ satisfy the necessary conditions to display incomplete oscillations, one needs to search the state space spanned by $A$ and $B$. Doing so, one finds that there is an incomplete oscillation between the states ($\{A=0, B=1\}$, $\{A=0, B=0\}$, $\{A=1, B=0\}$). The stable motif succession diagram for this Boolean network has the stable motif $\{C=0\}$ and the oscillating motif ($\{A=0, B=1\}$, $\{A=0, B=0\}$, $\{A=1, B=0\}$).

\subsection{Rationale and example of step 2 of the stable motif control algorithm} \label{sec:S1F}

The aim of step 2 of the stable motif control algorithm (see Methods) is to simplify the sequences of stable motifs so that the number of nodes that need to be controlled is minimized. This is done by identifying motifs after which all consequent motifs lead to the same attractor and then removing these consequent motifs from the sequence. To illustrate this, consider the stable motif succession diagram shown in Fig. \hyperref[fig:NetworkExampleS]{S1.1}. Since every possible motif after motif 1 leads to attractor 1, fixing the node states associated to motif 1 is enough to prod the system towards attractor 1. Step 2 makes sure that motifs 2 - 4 are removed from the sequences of stable motifs associated to attractor 1, since they are not necessary for the system to reach attractor 1.
\\
\begin{figure}[h]
\centerline{\includegraphics[width=0.8\textwidth]{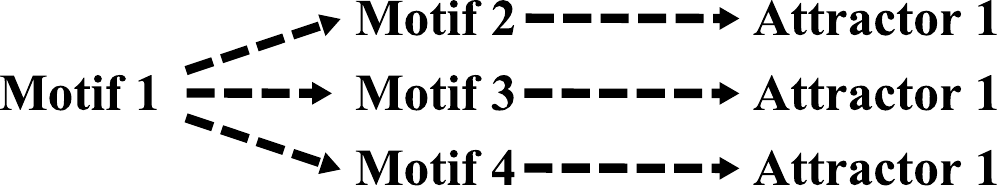}}
\caption{Example stable motifs succession diagram illustrating the simplification brought about by step 2 of the stable motif control algorithm. Step 2 removes motifs 2 - 4 from the sequences of stable motifs associated to attractor 1.}
\label{fig:NetworkExamples}
\end{figure}

\subsection{Step by step description of the stable motif control algorithm applied to the network in Fig. \hyperref[fig:NetworkExample]{1}(a)} \label{sec:S1G}
Consider the network in Fig. \hyperref[fig:NetworkExample]{1}(a) and choose $\mathcal{A}_2$ in Fig. \hyperref[fig:ReductionMethod]{2} as our target attractor. Following step 1 and using the stable motif succession diagram (Fig. \hyperref[fig:ReductionMethod]{2}), we obtain two sequences of stable motifs that lead to $\mathcal{A}_2$: $\mathcal{S}_1=\left(\right.\left\{\right.$A=1, B=1$\left.\right\},\left\{\right.$E=0$\left.\right\}\left.\right)$ and $\mathcal{S}_2=\left(\right.\left\{\right.$C=1, D=1, E=0$\left.\right\},\left\{\right.$A=1$\left.\right\}\left.\right)$. For these sequences, step 2 provides no simplification. Following step 3, the four stable motifs involved give only one subset of motif states per motif. For the first sequence, these subsets of states are $M_1=\left\{\right.$A=1, B=1$\left.\right\}$ for $\mathcal{M}_1=\left\{\right.$A=1, B=1$\left.\right\}$ and $M_2=\left\{\right.$E=0$\left.\right\}$ for $\mathcal{M}_2=\left\{\right.$E=0$\left.\right\}$. For the second sequence, the states are
$M_3=\left\{\right.$E=0$\left.\right\}$ for $\mathcal{M}_3=\left\{\right.$C=1, D=1, E=0$\left.\right\}$ and $M_4=\left\{\right.$A=1$\left.\right\}$ for $\mathcal{M}_4=\left\{\right.$A=1$\left.\right\}$. The result of step 3 are the sequences $\mathcal{S}_1=\left(O_1,O_2\right)$, where $O_1=\{$A=1$\}$ and $O_2=\{$E=0$\}$, and $\mathcal{S}_2=\left(O_3,O_4\right)$, where $O_3=\{$E=0$\}$ and $O_4=\{$A=1$\}$. Since each $O_i$ contains a single state, step 4 gives one set of states for each sequence: $\mathcal{C}_1=\left\{\right.$A=1, E=0$\left.\right\}$ for $\mathcal{S}_1$ and $\mathcal{C}_2=\left\{\right.$E=0, A=1$\left.\right\}$ for $\mathcal{S}_2$. Since both states are the same, the network control target for attractor $\mathcal{A}_2$ contains a single set of states, $C_{\mathcal{A}_2}=\left\{\right.\left\{\right.$A=1, E=0$\left.\right\}\left.\right\}$.

\clearpage

\section*{Text S2. Mathematical foundations of the attractor-finding method and of the stable motif control approach} \label{sec:S2}

In this part we describe the methods used in our work in a formal way. Part of the text in section \hyperref[sec:A]{A} is adapted from our previous work (ref. \cite{ReductionChaos}). For the propositions, lemmas, and theorems in section \hyperref[sec:A]{A}, which we proved in our previous work \cite{ReductionChaos}, we restrict ourselves to reproducing their statements and explaining their meaning, and refer the reader to our previous work \cite{ReductionChaos} for the proof.

In the following we use $V=\left( v_1, v_2, \ldots, v_N \right)$ to represent the $N$ nodes of the Boolean network, $\sigma_i, i=1, 2, \ldots, N$ to represent the state of node $v_i$, $\Sigma=\left(\sigma_1, \sigma_2, \ldots, \sigma_N \right)$ to represent the states of all nodes (also called a network state), $f_i, i=1, 2, \ldots, N$ to represent the Boolean function of node $v_i$, and $F=\left(f_1, f_2 , \ldots, f_N \right)$ to represent all the Boolean functions. We use $f(\Sigma)$ to denote a Boolean function $f$ evaluated at a network state $\Sigma$, and $f|_P$ to denote a Boolean function where only the state of a subset of nodes $P=\left\{\sigma_{p_1}, \sigma_{p_2}, \ldots, \sigma_{p_l}\right\}$ is evaluated. We commonly use $b_i$ to indicate that a specific value for node state $\sigma_i$ is chosen, that is, $\sigma_i=b_i$.

We assume, for convenience, that the Boolean functions $f_i, i=1, 2, \ldots, N$ satisfy these properties:

\begin{enumerate}
  \item The $f_i$'s do not take constant values (i.e. $f_i\neq0$ and $f_i\neq1$).
  \item If $f_i$ depends on the state of node $v_j$, $\sigma_j$, then there must be at least one pair of network states $\Sigma^{(1)}$ and $\Sigma^{(2)}$ with $\sigma_j^{(1)} \neq \sigma_j^{(2)}$, and $\sigma_k^{(1)} = \sigma_k^{(2)}$ for all $k \neq j$, such that $f_i(\Sigma^{(1)}) \neq f_i(\Sigma^{(2)})$.
  \item The $f_i$'s are written in a disjunctive normal form:
  \begin{align*}
  f_i =&\left(s_{1} \hbox{ AND } s_{2} \hbox{ AND } \cdots \hbox{ AND } s_{k} \right)\hbox{ OR }\left(s_{k+1} \hbox{ AND } s_{k+2} \hbox{ AND } \cdots s_{l} \right) \\
  & \hbox{ OR } \cdots  \hbox{ OR } \left(s_{m} \hbox{ AND } s_{m+1} \hbox{ AND } \cdots \hbox{ AND } s_{n} \right),
    \end{align*}
    where the $s_{j}$'s are either the states of one of the input nodes of $f_i$, or one of these states' negations.
  \item If for $M$, denoting a state of a subset of the inputs of $f_i$, one has $f_i|_M=1$ (regardless of the states of the remaining inputs), then the disjunctive form of $f_i$ must have at least one of its conjunctive clauses equal to 1 when evaluated at the state $M$ of this subset of nodes.
\end{enumerate}

The first property makes sure we have no source nodes. For our purposes this can be assumed without loss of generality, because even if that is not the case, we can use the reduction method of Saadatpour et al. \cite{AssiehJTB,AssiehPCB} and remove all source nodes while preserving all attractors \cite{AssiehMath}. The second property can also be assumed without any loss of generality; it is just a way of stating that we consider $f_i$ to depend on $\sigma_j$ only if it explicitly depends on $\sigma_j$ for at least a pair of network states. The third and fourth property are also general, since one can construct the respective disjunctive normal form from the truth table of the Boolean function.

The dynamics of a Boolean network $(V, \Sigma, F)$ is determined using a stochastic updating scheme known as the general asynchronous scheme \cite{GlassAsynchronous,ThomasReview,AssiehJTB}. In the general asynchronous scheme, the state of the nodes is updated at discrete time steps starting from an initial condition. At every time step, one of the variables ($\sigma_j$) is chosen randomly (uniformly) and is updated using its respective function and the state of its regulators at the previous time step
\begin{equation*}
  \sigma_j(t+1)= f_j\left(\sigma_{j_1}(t), \sigma_{j_2}(t), \dots, \sigma_{j_{k_j}}(t)\right),
\end{equation*}
while the rest of the variables retain their state.

\setcounter{subsection}{0}
\subsection{Expanded network/network reduction attractor-finding method of ref. \cite{ReductionChaos}} \label{sec:S2A}

\subsubsection{The expanded network representation} \label{sec:S2A1}

In order to identify the stable motifs of a Boolean network, we use a representation that incorporates explicitly the update functions $f_i$. Previous work \cite{ESMRuiSheng,ReductionChaos} has shown that a useful representation for this purpose is the so-called expanded network representation.

The creation of the expanded network consists of two basic operations. First, we introduce a complementary node $\overline{v}_i$ for every node $v_i$ in the network and assign to each $\overline{v}_i$ an update function $\overline{f}_i$ which is the Boolean negation of $v_i$'s update function $f_i$. The update functions $\overline{f}_i$ are assumed to satisfy the same four properties as the update functions $f_i$ without loss of generality. Second, to incorporate the combinatorial nature of the update functions, we introduce a composite node $v^{(comp)}$ for each set of synergistic interactions (that is, for each conjunctive clause) in the Boolean functions $f_i$, with a Boolean function $f^{(comp)}$ given by its respective conjunctive clause. In the following we commonly refer to nodes that are not complementary nor composite as normal nodes.

The expanded network $G_{exp}=(V_{exp}, E_{exp}, F_{exp})$ consists of a directed graph $(V_{exp}, E_{exp})$ and a set of Boolean functions $F_{exp}$, where $V_{exp}$ is the union of all normal nodes $V$, all complementary nodes $\overline{V}$, and all composite nodes $V^{(comp)}$, $E_{exp}$ is to be defined soon, and $F_{exp}$ is the union of the Boolean functions of all normal nodes $F$, the Boolean functions of all complementary nodes $\overline{F}$, and the Boolean functions of all composite nodes $F^{(comp)}$. The connectivity $E_{exp}$ of the expanded network is defined by the type of node considered. If $v \in V_{exp}$ is a composite node, then its associated Boolean function has the form $f=s_1 \hbox{ AND } s_2 \hbox{ AND } \cdots$, and $v$  has an input for each $s_j$; if $s_j$ is a node state (the negation of a node state), the input is a normal node (complementary node). If $v \in V_{exp}$ is not a composite node, then its Boolean function has the form $f=S_1 \hbox{ OR } S_2 \hbox{ OR } \cdots$, and $v$ has an input for each $S_j$; the input is a composite node if $S_j$ is a conjunctive clause, and a normal node (complementary node) if it is a node state (the negation of a node state).

\subsubsection{Identifying stable motifs from the expanded network} \label{sec:S2A2}

We define a stable motif $M$ in the expanded network as any of the smallest strongly connected components (SCCs) in the expanded network representation which satisfy these two properties:

\begin{enumerate}
\item If $M$ contains a normal node $v_i$ (complementary node $\overline{v}_i$) then $M$ does not contain its corresponding complementary node $\overline{v}_i$ (normal node $v_i$).
\item If $M$ contains a composite node $v^{(comp)}$, then all input nodes of $v^{(comp)}$ are elements of $M$.
\end{enumerate}

The first condition makes sure that there is no contradiction between the SCCs found and a state in the original Boolean network, wherein every node can either take the value 0 or 1. The second condition is a consequence of the synergistic nature of composite nodes, which means that a composite node and all of its inputs form an irreducible unit. (By smallest SCC, we mean any SCC that does not contain another SCC with the specified properties, but that, otherwise, is arbitrary in size.)

The composition of the stable motif $M$ directly determines the states $M_{state}$ of a set of nodes $M_{nodes}$ in the original Boolean network: for every normal node $v_i$ of the stable motif the corresponding node $v_i$ of the Boolean network adopts the state 1 ($\sigma_i=1$), and for every complementary nodes $\overline{v}_i$ in the stable motif the corresponding node $v_i$ of the Boolean network adopts the state 0 ($\sigma_i=0$). This set of nodes $M_{nodes}$ of the Boolean network and their corresponding states $M_{state}$ is what we defined as a stable motif in the main text, and they are such that the nodes form a minimal strongly connected component and their states form a partial fixed point of the Boolean model.

\subsubsection{Network reduction} \label{sec:S2A3}

Once the stable motifs of the network are identified, the next step is to determine the influence of these nodes on the rest of the network. More specifically, for each stable motif found, we want to find the nodes in the network whose state is fixed by the influence of this stable component. We adapt the method previously developed by Saadatpour et al. \cite{AssiehJTB,AssiehPCB} to simplify the network, which has been shown to preserve both the fixed points  and the complex attractors of the system \cite{AssiehMath}. It consists of two steps:
\begin{enumerate}
\item Identify a set of nodes $\left\{v_{p_1}, v_{p_2}, \ldots, v_{p_l}\right\}$ whose state is fixed during the dynamics, which we refer to as source nodes; for the attractor-finding method these initially correspond to the nodes in the stable motif being considered.
\item Modify the Boolean functions of the nodes downstream of the source nodes by setting the state of the source nodes to their fixed values $P=\left\{\sigma_{p_1}=b_{p_1}, \sigma_{p_2}=b_{p_2}, \ldots, \sigma_{p_l}=b_{p_l}\right\}$, that is, the modified function is given by $f|_P$. If a downstream node's modified function can only have one possible outcome, then this node can be used as a source node itself.
\end{enumerate}
For each separate stable motif found in the expanded network, these two steps are repeated recursively until neither of them can be applied anymore.

\subsubsection{The attractor-finding method algorithm and quasi-attractors} \label{sec:S2A4}

After network reduction, we obtain a set of states for each stable component, each of which corresponds to the states of the nodes in the stable motif and the states of other nodes which are fixed as a consequence of the stable motif. For each of these sets of states, there is also a reduced network that contains the nodes whose state we still do not know. On each of these reduced networks the whole method is applied again and iteratively until there are no more nodes with unknown states or no new stable motifs are found. This version of the attractor-finding method algorithm does not consider oscillatory behavior such as the one shown in Fig. \hyperref[fig:FigS3]{S3}; we come back to these cases in Text \hyperref[sec:S2]{S2} subsections \hyperref[sec:S2A5]{A.5} and \hyperref[sec:S2A6]{A.6}.

The attractor-finding method algorithm is summarized below.

\begin{enumerate}
\item Take the original Boolean network as the starting set of Boolean networks.
\item Create the expanded network representation for each of the Boolean networks.
\item Search the expanded network for stable motifs.
\item For every separate stable motif create a copy of the current network. On each of the networks created use the states of the corresponding stable motif as inputs and apply the two steps of the network reduction recursively until neither of them can be applied anymore.
\item Repeat 2-4 iteratively until there are no more nodes with unknown states or no new stable motifs are found.
\end{enumerate}

For the case where there are no more nodes with unknown states, a fixed point attractor of the system is obtained directly from the state of the nodes of the fixed-state components. For the cases in which there are no new stable motifs in the final reduced networks, the state of the nodes making up said networks is still unknown. Since our method is based on identifying nodes that are fixed in a specific steady state, the expectation is that these leftover nodes oscillate in an attractor of the system, while in that same attractor the rest of the nodes take the steady state value found during the simplification process that leads to the reduced network in consideration. We refer to the final output of our method, consisting of a set of fixed-state nodes (and their states) and a (potentially empty) set of nodes with undetermined states as a \emph{quasi-attractor}.

Quasi-attractors are closely related to the attractors of a network, both fixed points and complex attractors. For example, if the set of undetermined states in a quasi-attractor is empty, then the states of the fixed-state nodes correspond to the node states in a fixed point attractor, thus, this quasi-attractor is in fact a fixed point. More generally, for every attractor of the system there exists a quasi-attractor associated to it; this quasi-attractor is such that every node whose state is fixed in the quasi-attractor also has its state fixed in the same state in the attractor it is associated to. The proof of this is statement is given in Theorem 1 in ref. \cite{ReductionChaos}, which is reproduced in Text \hyperref[sec:S2]{S2} subsection \hyperref[sec:S2A7]{A.7}.

\subsubsection{Oscillating components and oscillations} \label{sec:S2A5}

The expanded network representation can be used to identify nodes that form an SCC in the original network, whose node states are not fixed in a complex attractor (i.e. their state oscillates). We refer to these nodes as \emph{oscillating motifs} or \emph{oscillating components}. To find the oscillating components $O$ using the expanded network representation, we search for the largest SCCs that satisfy these properties:

\begin{enumerate}
\item If $O$ contains a normal node $v_i$ then $O$ also contains its corresponding complementary node $\overline{v}_i$, and vice versa.
\item If $O$ contains a composite node $v^{(comp)}$, then all input nodes of $v^{(comp)}$ are elements of $O$.
\end{enumerate}

The first of these conditions makes sure that all nodes oscillate, by having both states of every node as part of the SCC. The second condition is a consequence of a composite node and all of its inputs forming an irreducible unit. In this case we look for the largest SCCs because we want to find all the nodes that feed back to each other in the oscillation.

These properties are necessary but not sufficient conditions for a group of node states to oscillate. We have found that there is a third condition that, if also satisfied, is sufficient (though not necessary) for a group of node states to oscillate, which is that (3) the oscillating component cannot contain stable motifs composed only of normal and complementary nodes. This extra condition is related to the possibility of the coexistence of a steady state and a complex attractor in the sub-state-space. The simplest example that shows this kind of behavior, which we denote \emph{unstable oscillation}, is shown in Fig. \hyperref[fig:FigS3]{S3}. In general, during the reduction process, we need to find the components that could display unstable oscillations (that is, that satisfy (1) and (2), but not (3)) to make sure that we preserve all attractors. As a consequence, we obtain a group of quasi-attractors that may not have a corresponding attractor; we refer to these quasi-attractors as marked quasi-attractors in the step-by-step algorithm in Text \hyperref[sec:S2]{S2} subsection \hyperref[sec:S2A6]{A.6}.

Another type of dynamical behavior of the oscillating components that needs to be considered is when the nodes of the oscillating components do not visit all possible states of their sub-state-space in an attractor, which we refer to as an \emph{incomplete oscillation}. As shown in Lemma 3 in ref. \cite{ReductionChaos} (reproduced in subsection \hyperref[sec:A7]{A.7}), nodes whose state is undetermined in a quasi-attractor are downstream of the nodes whose state oscillates in the attractor corresponding to the considered quasi-attractor. Incomplete oscillations are important because a node that is downstream of an oscillating component that displays incomplete oscillations may reach a steady state as a consequence of the nodes of the component only visiting part of their sub-state-space. Incomplete oscillations are the reason why undetermined states in a quasi-attractor do not necessarily oscillate.

\subsubsection{The full algorithm of the expanded network/network reduction attractor-finding method} \label{sec:S2A6}

In the following we describe the full algorithm of the attractor-finding method. Unlike the algorithm introduced in Text \hyperref[sec:S2]{S2} subsection \hyperref[sec:S2A4]{A.4}, the following algorithm considers the so-called  unstable oscillations, such as the one shown in Fig. \hyperref[sec:S3]{S3}. During the description of the algorithm we refer the reader to the subsections in Text \hyperref[sec:S2]{S2} where each of these steps are described in more detail.

\begin{enumerate}
\item For every combination of the states of the source nodes (nodes with no upstream components) apply the two steps of network reduction method described in Text \hyperref[sec:S2]{S2} subsection \hyperref[sec:S2A3]{A.3} recursively until neither of them can be applied anymore.
\item Create the expanded network representation for each of the resulting networks (Text \hyperref[sec:S2]{S2} subsection \hyperref[sec:S2A1]{A.1}).
\item Search the expanded network for stable motifs (Text \hyperref[sec:S2]{S2} subsection \hyperref[sec:S2A2]{A.2}) and oscillating components (Text \hyperref[sec:S2]{S2} subsection \hyperref[sec:S2A5]{A.5}).
\item For every separate stable motif create a copy of the current network. On each of the networks created use the states of the corresponding stable motif as inputs and apply the two steps of the network reduction described in Text \hyperref[sec:S2]{S2} subsection \hyperref[sec:S2A3]{A.3} recursively until neither of them can be applied anymore.
\item For every oscillating component of more than two nodes (i.e., every oscillating component that could display incomplete oscillations) create a copy of the current network. On each of the networks created, the nodes in the corresponding oscillating component and the nodes downstream of this component are marked. The marked nodes cannot be reduced at any later step of the algorithm (i.e, they have their state undetermined in the quasi-attractors that are derived from these networks).
\item For the oscillating components of two nodes (i.e, only one normal node and its corresponding complementary node), check if any node downstream of these oscillating motifs participates in a stable motif with no composite nodes. If any of them do, go to step 7; otherwise, check if there are any stable motifs that are downstream of these oscillating components (these stable motifs would necessarily have a composite node). If there are not, go to step 7; if there are, check if any of them is downstream of a stable motif that is itself not downstream of any of these oscillating components. If this is the case, go to step 7; if this is not the case, then create one copy of the current network and mark the nodes in the oscillating motifs considered in this step and the nodes downstream of them. The marked nodes cannot be reduced at any later step of the algorithm (i.e, they have their state undetermined in the quasi-attractors that are derived from these networks).
\item Repeat 2-6 for each of the networks iteratively until no more stable motifs are found. The result, a set of fixed state nodes and their node state, and a set of nodes with undetermined states with their reduced Boolean functions, is the set of quasi-attractors (Text \hyperref[sec:S2]{S2} subsection \hyperref[sec:S2A4]{A.4}).
\item Prune the set of quasi-attractors of duplicates (two quasi-attractors are the same if they have the same set of fixed state nodes and the same node state for these fixed-state nodes; if two quasi-attractors are the same, except that one of them has some nodes marked while the other one does not, remove the one that has the marked nodes).
\end{enumerate}

Some of the resulting quasi-attractors have marked nodes while others do not. For every unmarked quasi-attractor there necessarily is a corresponding attractor in the Boolean network. For a marked quasi-attractor there may not be a corresponding attractor in the Boolean network; only by knowing the specific states visited during oscillations by the undetermined nodes in the quasi-attractor's reduced network can one confirm whether there is a corresponding attractor (this is a consequence of incomplete oscillations and unstable oscillations, see Text \hyperref[sec:S2]{S2} subsection \hyperref[sec:S2A5]{A.5}).

\subsubsection{Conservation of attractors by the expanded network/network reduction attractor-finding method} \label{sec:S2A7}

The first proposition states that the stable motifs found from the expanded network are such that the corresponding states of these motifs are partial fixed points of the Boolean rules of the nodes involved.
\begin{myprop} \label{PartFixedPoints}
Let $M=\left(V_{m_1}, V_{m_2}, \ldots, V_{m_l},\right.$$\left.V_{m_{l+1}}, V_{m_{l+2}}, \ldots, V_{m_L}\right)$ be a stable motif in the expanded network representation, where $V_{m_1}, V_{m_2}, \ldots, V_{m_l}$ can either be a normal node or a complementary node, and where $V_{m_{l+1}}, V_{m_{l+2}}, \ldots, V_{m_L}$ are composite nodes. We denote $M_{state}=\left(\sigma_{m_1}=b_{m_1},\right.$$\left.\sigma_{m_2}=b_{m_2}, \ldots, \sigma_{m_l}=b_{m_l} \right)$, with $b_{m_j} \in \left\{0, 1\right\}$ as the corresponding state of $M$ in the network state $\Sigma$: $b_{m_j}=1$ if it is a normal node, and $b_{m_j}=0$ if it is a complementary node. Then, for any normal node $v_{m_j}$ or complementary node $\overline{v}_{m_j}$ in $M$ and for any network state $\Sigma_M$ such that $\sigma_{m_k}=b_{m_k} \forall m_k \in \left\{m_1, m_2, \ldots, m_l\right\}$, we have $f_{m_j}(\Sigma_M)=b_{m_j}$.
\end{myprop}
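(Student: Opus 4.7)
The plan is to exploit two structural facts: first, that in any strongly connected component every vertex has at least one in-neighbor inside the component (with a self-loop serving this role in the singleton case); second, that the incoming edges of a normal (resp.\ complementary) node in the expanded network are in one-to-one correspondence with the conjunctive clauses of the DNF of $f_{m_j}$ (resp.\ of $\overline{f}_{m_j}$). I would fix an arbitrary non-composite $V_{m_j} \in M$ and an arbitrary network state $\Sigma_M$ extending $M_{state}$. The objective is to exhibit a single conjunctive clause of the relevant DNF that evaluates to $1$ on $\Sigma_M$; by property 4 of the Boolean functions assumed at the start of Text S2, one firing clause forces the entire DNF to $1$, and after translating through the normal/complementary convention this is exactly the statement $f_{m_j}(\Sigma_M) = b_{m_j}$.

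To produce that clause I would pick any in-neighbor $V_{m_i}$ of $V_{m_j}$ lying inside $M$ and split on its type. If $V_{m_i}$ is itself a normal or complementary node, the edge into $V_{m_j}$ corresponds to a single-literal clause, either $\sigma_{m_i}$ or $\neg \sigma_{m_i}$; since $V_{m_i} \in \{V_{m_1},\ldots,V_{m_l}\}$ the assignment $\sigma_{m_i} = b_{m_i}$ is imposed by $\Sigma_M$, and the convention $b_{m_i} = 1$ for normal nodes and $b_{m_i} = 0$ for complementary nodes makes that literal evaluate to $1$. If instead $V_{m_i}$ is a composite node, property (2) in the definition of a stable motif forces every input of $V_{m_i}$ to lie in $M$, and the same sign-matching argument applied to each input shows that every literal of its associated conjunctive clause evaluates to $1$; hence the full clause fires on $\Sigma_M$.

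The hardest part of writing this up cleanly is not any single deduction but the accounting around the four sign combinations (normal versus complementary on each of $V_{m_j}$ and $V_{m_i}$), together with the correct invocation of property 4 of the assumed DNF form, which is what promotes a firing clause to a firing function. Once the convention ``$b_{m_j}=1$ iff $V_{m_j}$ is normal'' is fixed up front, and the correspondence between expanded-network edges into a non-composite node and DNF clauses of its update function is stated explicitly, the proof collapses to the two-case split above and the defining conditions (1) and (2) of a stable motif.
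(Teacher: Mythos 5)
Your proof is correct and follows essentially the same route as the paper's: the paper does not actually reprove Proposition~1 in Text S2 (it explicitly defers to Appendix A of ref.~\cite{ReductionChaos}), and the argument there rests on the same two facts you use, namely that every non-composite node of the SCC has an in-neighbor inside the motif corresponding to a disjunct of $f_{m_j}$ or $\overline{f}_{m_j}$, and that property~2 of stable motifs pulls all inputs of a composite in-neighbor into $M$ so the whole conjunctive clause fires under $\Sigma_M$. One small correction: the step ``one firing clause forces the entire DNF to $1$'' needs only property~3 (the DNF faithfully represents the function, so a true disjunct makes the disjunction true); property~4 is the converse completeness condition, used for Proposition~2, and is not what is being invoked there.
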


The reverse of this proposition is also true, that is, if for a given set of node states updating any of the states in the set gives back the same state, regardless of the state of any node outside of the set, then this set of states correspond to a set of stable motifs in the expanded network representation:

\begin{myprop} \label{Stablemotifs-fixedpoints}
Let $M_{state}=\left(\sigma_{m_1}=b_{m_1}, \sigma_{m_2}=b_{m_2},\right.$ $\left.\ldots, \sigma_{m_l}=b_{m_l} \right)$ be the state of a set of nodes such that if $\Sigma_M$ is any network state in which $\sigma_{m_k}=b_{m_k} \forall m_k \in \left\{m_1, m_2, \ldots, m_l\right\}$, then $f_{m_j}(\Sigma_M)=b_{m_j}$. Then (i) there is a set of stable motifs $\left\{M_n\right\}$ in the expanded network representation such that each of the $M_n$'s contain only normal nodes or complementary nodes of the nodes whose state is specified in $M_{state}$ (normal nodes if $b_{m_k}=1$, and complementary nodes if $b_{m_k}=0$) and in which all other nodes in the $M_n$'s (if any) are composite nodes made up of the normal nodes or complementary nodes in the corresponding $M_n$, and (ii) the nodes whose state is specified in $M_{state}$ but that are not included in the set of stable motifs $\left\{M_n\right\}$ are downstream of the nodes in at least one of the stable motifs.
\end{myprop}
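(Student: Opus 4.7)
The plan is to build the stable motifs of part (i) out of the satisfying clauses that witness the partial-fixed-point condition, and to deduce part (ii) from the condensation structure of an auxiliary graph. The starting observation is Property 4 of the Boolean functions: because $f_{m_j}|_{M_{state}} = b_{m_j}$ for every network state consistent with $M_{state}$, the DNF of $f_{m_j}$ (when $b_{m_j}=1$) or of $\overline{f}_{m_j}$ (when $b_{m_j}=0$) must contain a conjunctive clause $C_{m_j}$ whose literals all lie in $M_{state}$ and evaluate to $1$ there. First I would fix one such clause $C_{m_j}$ for every $m_j$.

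Second, I would form an auxiliary directed graph $H$ whose vertices are the literal nodes $L_{m_j}$ in the expanded network ($v_{m_j}$ when $b_{m_j}=1$ and $\overline{v}_{m_j}$ when $b_{m_j}=0$), with an edge $L' \to L_{m_j}$ whenever $L'$ appears as a literal of $C_{m_j}$. Every vertex of $H$ has in-degree at least $1$, so the condensation DAG of $H$ has at least one source SCC $T$. Lifting $T$ back to the expanded network by adjoining the composite node associated to $C_{m_j}$ whenever $L_{m_j} \in T$ and $|C_{m_j}|\ge 2$ yields a strongly connected subgraph of the expanded network, since any cycle $L_1\to L_2\to\cdots\to L_k\to L_1$ in $H$ lifts to a cycle $L_1\to C_{L_2}\to L_2\to\cdots\to C_{L_1}\to L_1$ (with the composite step omitted for single-literal clauses).

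Third, I would verify the stable motif axioms on this lifted SCC and extract a minimal one. Condition (1) is immediate because $M_{state}$ assigns a unique value to each index, so no vertex coexists with its complement. Condition (2) holds because $T$ being a source SCC of $H$ means no vertex of $T$ receives an incoming edge in $H$ from outside $T$, i.e.\ all literals of every chosen clause $C_{m_j}$ with $L_{m_j}\in T$ already lie in $T$. The lifted SCC therefore satisfies both stable motif conditions, and since the collection of SCCs contained in it that satisfy (1)--(2) is nonempty and finite, it has a minimal element $M_n$, a stable motif of the required type. Performing this construction over all source SCCs of $H$ produces the family $\{M_n\}$ asked for in part (i), since by construction every $M_n$ is composed solely of literals drawn from $M_{state}$ with the prescribed signs and of composites whose inputs also lie in $M_n$.

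Part (ii) would then follow from the DAG structure of the condensation of $H$: any literal $L_{m_j}$ lying outside every source SCC belongs to some non-source SCC, which by standard DAG properties is reachable from at least one source SCC, hence from some $M_n$. Unpacking this in the expanded network produces a directed path from $M_n$ to $L_{m_j}$, placing $v_{m_j}$ downstream of $M_n$ in the original Boolean network. The main technical obstacle I anticipate is the minimality step of paragraph three: a vertex subset of an SCC satisfying (1)--(2) need not itself satisfy (1)--(2), because discarding a single literal can strand composites whose inputs are no longer all present. I would handle this by working at the level of the finite poset of strongly connected subgraphs of the lifted SCC that satisfy (1)--(2), where existence of minimal elements is automatic and no explicit shrinking algorithm is needed.
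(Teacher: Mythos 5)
The paper does not actually prove this proposition: Text S2 states explicitly that the results of section A are reproduced from ref.~\cite{ReductionChaos} and refers the reader to Appendix A of that work for the proofs. So there is no in-paper argument to compare yours against; I can only assess your proposal on its own terms. On those terms it is essentially correct, and it uses exactly the ingredients the framework makes available: Property 4 of the disjunctive normal forms to extract, for each $m_j$, a clause $C_{m_j}$ whose literals all lie in $M_{state}$ with the prescribed signs; the observation that the resulting auxiliary graph $H$ has minimum in-degree one, so every source SCC of its condensation contains a cycle; the lifting of such an SCC through the composite nodes; and the verification that source-ness of $T$ is precisely what delivers condition (2) on composite inputs. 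Your handling of the minimality step (working in the finite poset of strongly connected node subsets of the lifted $T$ satisfying (1)--(2) rather than trying to shrink greedily) is the right way to avoid the trap you correctly identify, and a minimal element of that sub-poset is automatically minimal globally, since any smaller qualifying SCC would have its node set inside the lifted $T$ as well.

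The one genuine gap is in part (ii): you only treat literals lying outside every source SCC of $H$. But the minimal stable motif $M_n$ you extract from a lifted source SCC $T$ may be a proper subset of $T$, so literals of $T$ itself that fall outside $M_n$ are also ``not included in the set of stable motifs'' and must be shown downstream of some motif. The fix is the same reachability argument and should be stated: $M_n$ necessarily contains at least one literal node $L''$ of $T$ (a composite node's inputs are literals, so a strongly connected set satisfying (2) cannot consist of composites alone), and since $T$ is strongly connected in $H$, every vertex of $T$ --- and hence every vertex of $H$ reachable from $T$ --- is reachable from $L''$. This single observation also closes the small elision in your existing argument, where ``reachable from a source SCC $T$'' is silently upgraded to ``reachable from $M_n$.'' With that sentence added, the proof is complete.
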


For the next propositions we need certain properties of the attractors of the general asynchronous updating scheme, in which the state of one randomly (uniformly) chosen node is updated at every discrete time step (see Methods). For any attractor $\mathcal{A}$, we can divide the $N$ nodes into two classes: those that take the same value in all network states of $\mathcal{A}$ (i.e, either 0 or 1), and those that take more than one value in the different network states of $\mathcal{A}$ (i.e, both 0 and 1). We refer to the former as \emph{stabilized or fixed-state nodes}, and to the latter as \emph{oscillating nodes}. The following propositions state that fixed-state nodes can have inputs from fixed-state nodes or oscillating nodes (Proposition \ref{Stablemotifs-inputs}), while oscillating nodes must have at least one oscillating node as an input (Proposition \ref{Oscmotifs-inputs}).

\begin{myprop} \label{Stablemotifs-inputs}
Let $\mathcal{A}$ be an attractor of the Boolean network $(V,\Sigma,F)$ under the general asynchronous updating scheme, and let $\mathcal{S}$ and $\mathcal{O}$ be the set of the fixed-state and oscillating nodes in the attractor, respectively. If $v_{s} \in \mathcal{S}$, and $b_s$ is the fixed-state state of node $v_{s}$, then one of the following two cases holds: (i) one of the conjunctive clauses of $f_s$ (if $b_s=1$) or $\overline{f}_s$ (if $b_s=0$) depends only on the specific state of the nodes of $\mathcal{S}$ in $\mathcal{A}$. If (i) is not true, then (ii) for both $f_s$ and $\overline{f}_s$ at least one conjunctive clause depends on the state of one or more nodes in $\mathcal{O}$ and, if the clause depends on any more states, they have to be the state of the nodes of $\mathcal{S}$ in $\mathcal{A}$.
\end{myprop}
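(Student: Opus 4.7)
The plan is to exploit the DNF structure of $f_s$ guaranteed by properties 3 and 4, together with the observation that $f_s(\Sigma) = b_s$ for every $\Sigma \in \mathcal{A}$: under the general asynchronous scheme $v_s$ is selected with positive probability at every step, so starting from any $\Sigma \in \mathcal{A}$ the updated state $(\sigma_1, \ldots, f_s(\Sigma), \ldots, \sigma_N)$ must again lie in the attractor, which forces $f_s(\Sigma) = b_s$. Without loss of generality I assume $b_s = 1$, so $f_s \equiv 1$ and $\overline{f}_s \equiv 0$ on $\mathcal{A}$; the case $b_s = 0$ follows by swapping the roles of $f_s$ and $\overline{f}_s$. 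I then partition the DNF clauses of $f_s$ into (A) those whose literals lie entirely on $\mathcal{S}$ and agree with the fixed attractor values, (B) those whose literals lie entirely on $\mathcal{S}$ but with at least one disagreement, and (C) those containing at least one literal on a node in $\mathcal{O}$. Type~A clauses evaluate to $1$ on every state of $\mathcal{A}$, Type~B clauses to $0$, and Type~C clauses may do either. Case~(i) of the proposition is precisely the assertion that $f_s$ has a Type~A clause.

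The easy half of case~(ii) is the $f_s$ statement. At any $\Sigma \in \mathcal{A}$ the identity $f_s(\Sigma) = 1$, together with property~4 applied to the full assignment $\Sigma$, forces some DNF clause of $f_s$ to be satisfied at $\Sigma$. With no Type~A clause available and Type~B clauses never satisfied on $\mathcal{A}$, this must be a Type~C clause. Its $\mathcal{S}$-literals agree with the fixed attractor values (otherwise the conjunction could not equal $1$) and by definition it carries at least one $\mathcal{O}$-literal, which is exactly the structure demanded by (ii) for $f_s$.

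The harder half is the analogous statement for $\overline{f}_s$, since $\overline{f}_s \equiv 0$ on $\mathcal{A}$ prevents me from picking a satisfied clause from inside the attractor; this is where the main obstacle lies, and the plan is to use property~4 in both directions. First, since $f_s$ has no Type~A clause, the contrapositive of property~4 shows that $f_s$ restricted to the fixed values on $\mathcal{S}$ is not identically $1$ as a function of the remaining $\mathcal{O}$-inputs, so there exists an $\mathcal{O}$-configuration $\omega$ (necessarily not realized in $\mathcal{A}$) at which $f_s = 0$, hence $\overline{f}_s = 1$ at the assignment $(\mathcal{S}\text{-fixed}, \omega)$. Second, property~4 applied to $\overline{f}_s$ at this full assignment delivers a satisfied clause of $\overline{f}_s$ whose $\mathcal{S}$-literals agree with the fixed attractor values. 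To rule out that this clause has only $\mathcal{S}$-literals, observe that such a clause would be Type~A for $\overline{f}_s$ and would force $\overline{f}_s \equiv 1$ at the fixed $\mathcal{S}$-values for every $\mathcal{O}$-configuration, contradicting $\overline{f}_s \equiv 0$ on the non-empty attractor. Thus the clause carries at least one $\mathcal{O}$-literal, completing (ii). The delicate ingredient throughout is this bidirectional application of property~4; the remainder is clause bookkeeping.
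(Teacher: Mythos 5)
The paper itself does not prove this proposition: Text S2 states explicitly that the results of section A are only restated here and that the proof lives in ref.~\cite{ReductionChaos}, so there is no in-paper argument to compare against. Judged on its own terms, your proof is correct. The opening observation that $f_s(\Sigma)=b_s$ on every $\Sigma\in\mathcal{A}$ (closure of the attractor under updating $v_s$, which is selected with positive probability) is sound, the Type A/B/C partition is exhaustive because $V=\mathcal{S}\cup\mathcal{O}$, and the $f_s$ half of case (ii) follows immediately from picking the satisfied clause at any attractor state. You also correctly identified the one place where property 4 does real work: without the prime-implicant completeness, $f_s$ restricted to the fixed $\mathcal{S}$-values could be identically $1$ in the $\mathcal{O}$-inputs without any all-$\mathcal{S}$ clause witnessing it (e.g.\ $(s\hbox{ AND }o)\hbox{ OR }(s\hbox{ AND NOT }o)$), so the contrapositive step that manufactures the off-attractor configuration $\omega$ with $\overline{f}_s=1$ genuinely needs that assumption, and the final exclusion of an all-$\mathcal{S}$ clause in $\overline{f}_s$ via $\overline{f}_s\equiv 0$ on the non-empty attractor closes the argument. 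The only cosmetic remark is that extracting a satisfied clause from a DNF at a \emph{full} assignment is just DNF semantics and does not require property 4, so the argument is slightly less ``bidirectional'' than you advertise; this does not affect correctness.
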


\begin{myprop} \label{Oscmotifs-inputs}
Let $\mathcal{A}$ be an attractor of the Boolean network $(V,\Sigma,F)$ under the general asynchronous updating scheme, and let $\mathcal{S}$ and $\mathcal{O}$ be the set of the fixed-state and oscillating nodes, respectively. If $v_{o} \in \mathcal{O}$ then (i) neither $f_o$ nor $\overline{f}_o$ can have any conjunctive clauses that depend only on the state of the nodes of $\mathcal{S}$ in $\mathcal{A}$ (i.e, on $\sigma_s$ if $b_s=1$, or $\overline{\sigma}_s$ if $b_s=0$), and (ii) both $f_o$ and $\overline{f}_o$ must have at least one conjunctive clause that depends on the state of one or more nodes in $\mathcal{O}$ and, if this same clause depends on any other states, they must be the states of nodes of $\mathcal{S}$ in $\mathcal{A}$.
\end{myprop}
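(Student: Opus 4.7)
The plan is to argue by contrapositive, leveraging the fact that an attractor $\mathcal{A}$ of the general asynchronous dynamics is a terminal strongly connected set of states: once inside $\mathcal{A}$ the system remains in $\mathcal{A}$, and because every node is chosen for update with positive probability at every step, each state of $\mathcal{A}$ is revisited infinitely often. The central observation is that a clause ``depending only on the state of $\mathcal{S}$ in $\mathcal{A}$'' would be constantly satisfied on $\mathcal{A}$, which would pin $v_o$ to a fixed value; since $v_o\in\mathcal{O}$, no such clause can occur.

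For part (i), suppose toward contradiction that $f_o$ contains a conjunctive clause $C$ whose literals are either $\sigma_s$ with $b_s=1$ or $\overline{\sigma}_s$ with $b_s=0$ for some $s\in\mathcal{S}$. In every state $\Sigma\in\mathcal{A}$ each such $\mathcal{S}$-node holds its attractor value, so every literal of $C$ is $1$, hence $C=1$ and $f_o(\Sigma)=1$ throughout $\mathcal{A}$. Then any update of $v_o$ at any $\Sigma\in\mathcal{A}$ assigns $\sigma_o=1$, so $\sigma_o$ cannot transition from $1$ to $0$ along a trajectory that remains in $\mathcal{A}$; combined with the strong connectivity of $\mathcal{A}$, this forces $\sigma_o$ to be constantly $1$, contradicting $v_o\in\mathcal{O}$. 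The identical argument applied to $\overline{f}_o$ (whose value of $1$ corresponds to $v_o$ being driven to $0$) rules out an analogous clause in $\overline{f}_o$.

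For part (ii), since $v_o$ oscillates, there exist states of $\mathcal{A}$ with $\sigma_o=0$ and others with $\sigma_o=1$, connected by a directed path of asynchronous updates. Along any such path from a $\sigma_o=0$ state to a $\sigma_o=1$ state there must be at least one step that updates $v_o$ from $0$ to $1$; at its predecessor state $\Sigma\in\mathcal{A}$ we therefore have $f_o(\Sigma)=1$. By standing property 4 on the disjunctive normal form, some conjunctive clause $C$ of $f_o$ satisfies $C(\Sigma)=1$, so every literal of $C$ is true at $\Sigma$. Literals on $v_s\in\mathcal{S}$ are forced to match the attractor value $b_s$; and by part (i), $C$ cannot consist solely of such literals, so $C$ must contain at least one literal on a node in $\mathcal{O}$. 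This is exactly the clause structure claimed for $f_o$; the argument for $\overline{f}_o$ is symmetric, using a transition that flips $\sigma_o$ from $1$ to $0$.

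The step I expect to be the main obstacle is the dynamical implication ``$f_o\equiv 1$ on $\mathcal{A}$ implies $\sigma_o$ is fixed at $1$ in $\mathcal{A}$'', which hinges on precisely which definition of attractor is adopted. If $\mathcal{A}$ is taken to be a minimal set of network states that is closed under the asynchronous transition relation and in which every pair of states is mutually reachable, then the implication is immediate; but the proposition is stated for the general asynchronous scheme without the terminal-SCC characterization being recalled in the excerpt, so the proof will need to make that characterization explicit or cite it from the authors' prior work. Once that is in hand, the remaining steps are routine unpackings of the disjunctive normal form together with the definitions of $\mathcal{S}$ and $\mathcal{O}$.
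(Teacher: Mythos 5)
Your proof is correct. Note that this paper does not actually prove Proposition~4: for all results in Text~S2 section~A it only reproduces the statements and defers the proofs to ref.~[47] (Za\~nudo and Albert, Chaos 2013, Appendix~A), and your argument is essentially the one used there --- a clause of $f_o$ (or $\overline{f}_o$) built solely from the attractor values of $\mathcal{S}$-nodes would make the function constant on $\mathcal{A}$ and hence pin $\sigma_o$, while the oscillation of $v_o$ forces an update step with $f_o=1$ (resp.\ $\overline{f}_o=1$) inside $\mathcal{A}$ whose witnessing clause must then contain an $\mathcal{O}$-literal, with any remaining literals necessarily matching the $\mathcal{S}$ attractor values. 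The caveat you flag is not a real gap: in the general asynchronous scheme an attractor is precisely a terminal strongly connected component of the state transition graph (a minimal trap set), so mutual reachability of the states of $\mathcal{A}$ --- equivalently, the nonexistence of a proper nonempty closed subset of $\mathcal{A}$ --- is available and the implication ``$f_o\equiv 1$ on $\mathcal{A}$ forces $\sigma_o\equiv 1$'' goes through exactly as you describe.
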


We now reproduce the three lemmas that allow us to show that the reduction method conserves all attractors. In Lemma \ref{Stablemotifs-reduction1} we construct the set of nodes, for an arbitrary attractor, whose state are identified by our attractor-finding method, $\mathcal{S}_{red} \subset \mathcal{S}$. We also show that there is at least one stable motif composed of the corresponding states in the attractor of the nodes of $\mathcal{S}_{red}$ (as long as $\mathcal{S}_{red}$ is not empty). In Lemma \ref{Stablemotifs-reduction2} we show that the network reduction of these stable motifs can only fix the state of nodes in $\mathcal{S}_{red}$. In Lemma \ref{Oscillatingmotifs-reduction} we show that when no stable motifs are found, which is the exit condition in the attractor-finding algorithm (step 7, Text \hyperref[sec:S2]{S2} subsection \hyperref[sec:S2A6]{A.6})), the fixed-state nodes in an attractor $\mathcal{A}$ must be downstream of an oscillating motif.

For completeness, we reproduce how $\mathcal{S}_{red} \subset \mathcal{S}$ is constructed for a Boolean network attractor $\mathcal{A}$. Without loss of generality we can do a change of variables so that $\sigma_{s}=1$ if $v_s \in \mathcal{S}$. By Proposition \ref{Stablemotifs-inputs}, we can divide $\mathcal{S}$ into the nodes that have at least one conjunctive clause in their rule that depends only on the specific state of nodes of $\mathcal{S}$ in $\mathcal{A}$, and their complement in $\mathcal{S}$. We refer to the former as $\mathcal{S}_{0}$ and to the latter as $\mathcal{S}_{osc}$. Let $\mathcal{S}_{1} \subset \mathcal{S}_{0}$  be the nodes that have at least one conjunctive clause in their rules that depends only on the specific state of the nodes of $\mathcal{S}_{0}$ in $\mathcal{A}$ (i.e, on $\sigma_s$, because of the change of variables). Let $\mathcal{S}_{2} \subset \mathcal{S}_{1}$  be the nodes that have at least one conjunctive clause in their rules that depends only on the specific state of nodes of $\mathcal{S}_{1}$ in $\mathcal{A}$ (note they could depend on the states of nodes in $\mathcal{S}_{0} - \mathcal{S}_{1}$). We do this iteratively until $\mathcal{S}_{i_{max}+1}=\mathcal{S}_{i_{max}}$ and denote $\mathcal{S}_{red}=\mathcal{S}_{i_{max}} \subset \mathcal{S}_{0}$. Since $\mathcal{S}_{red}$ was constructed by first removing the nodes that required nodes in $\mathcal{O}$ to have their states fixed, and then removing the ones that depended on the previously reduced nodes, and so on, then $\mathcal{S}_{red}$ corresponds to the set of nodes in $\mathcal{S}$ that do not depend in any way on nodes of $\mathcal{O}$ to have their node state fixed in their state on $\mathcal{A}$.

\begin{mylemma} \label{Stablemotifs-reduction1}
Let $\mathcal{A}$ be an attractor of the Boolean network $(V,\Sigma,F)$ under the general asynchronous updating scheme, and let $\mathcal{S}$ and $\mathcal{O}$ be the set of the fixed-state and oscillating nodes of $\mathcal{A}$, respectively. There exists a set of nodes $\mathcal{S}_{red} \subset \mathcal{S}$ such that in the expanded network representation of $(V,\Sigma,F)$ there is at least one stable motif composed only of the corresponding states of the nodes of $\mathcal{S}_{red}$ in $\mathcal{A}$, or composite nodes composed of such nodes.
\end{mylemma}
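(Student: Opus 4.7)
The strategy is to (i) construct $\mathcal{S}_{red}$ through the iterative filtration described in the paragraph just before the statement, (ii) verify that the attractor states of the $\mathcal{S}_{red}$-nodes satisfy the partial-fixed-point hypothesis of Proposition \ref{Stablemotifs-fixedpoints}, and (iii) read off a stable motif from the conclusion of that proposition. As a preparatory step I would perform the change of variables that sends every fixed-state value $b_s$ to $1$; this is harmless because it merely swaps $v_s$ for $\overline{v}_s$ in the expanded network and the roles of $f_s$ and $\overline{f}_s$, leaving the expanded-network structure symmetric.

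I then invoke Proposition \ref{Stablemotifs-inputs} to split $\mathcal{S} = \mathcal{S}_0 \cup \mathcal{S}_{osc}$, with $\mathcal{S}_0$ collecting those fixed-state nodes whose function has at least one conjunctive clause depending only on $\mathcal{S}$-node states, and define recursively $\mathcal{S}_{k+1} \subseteq \mathcal{S}_k$ as the nodes $v_s \in \mathcal{S}_k$ possessing a conjunctive clause of $f_s$ that depends only on the attractor values of nodes in $\mathcal{S}_k$. Since $\mathcal{S}$ is finite and the sequence is weakly decreasing, it stabilizes at some $\mathcal{S}_{red} = \mathcal{S}_{i_{\max}}$. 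By the very definition of the filtration, each $v_s \in \mathcal{S}_{red}$ retains at least one conjunctive clause supported solely by $\mathcal{S}_{red}$-nodes at their attractor values, so $f_s(\Sigma_M) = 1 = b_s$ for every network state $\Sigma_M$ agreeing with $\mathcal{A}$ on $\mathcal{S}_{red}$. This is precisely the partial-fixed-point hypothesis of Proposition \ref{Stablemotifs-fixedpoints}, whose conclusion then furnishes a collection of stable motifs whose normal or complementary members are drawn from $\mathcal{S}_{red}$ (as normal nodes, after the change of variables) and whose composite members have all inputs in the same set. Selecting any one of these motifs yields the lemma.

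The main obstacle is establishing that $\mathcal{S}_{red}$ is nonempty in the cases the lemma is meant to cover. Taken naively, the filtration can in principle terminate at $\emptyset$: every chain of $\mathcal{S}$-dependencies could end in an $\mathcal{S}_{osc}$-node, and then each successive application of the filtration strips one more layer off $\mathcal{S}_0$. To rule this out one must exploit the global self-consistency of $\mathcal{A}$ under general asynchronous updating: the $\mathcal{O}$-nodes used to stabilize $\mathcal{S}_{osc}$-nodes are themselves driven by subsets of $\mathcal{S}$ and $\mathcal{O}$, and tracing these dependencies in a finite expanded network forces the existence of a self-sustaining $\mathcal{S}$-core that the filtration captures. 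A rigorous argument would proceed by contradiction, constructing, from an allegedly empty $\mathcal{S}_{red}$, an asynchronous update schedule that sequentially invalidates the supporting clauses of nodes in $\mathcal{S}$, eventually flipping one of them and contradicting its membership in $\mathcal{S}$. The careful bookkeeping of this dependency-chain construction, especially its interplay with the disjunctive-normal-form conventions and the composite-node inputs in the expanded network, is the delicate part; once nonemptiness is secured, the remainder of the proof is a direct invocation of Proposition \ref{Stablemotifs-fixedpoints}.
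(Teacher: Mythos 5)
Your construction of $\mathcal{S}_{red}$ via the iterated filtration $\mathcal{S}_0 \supseteq \mathcal{S}_1 \supseteq \cdots$ is exactly the one the paper sets up in the paragraph preceding the lemma (the paper itself defers the proof to ref.~\cite{ReductionChaos}), and the route you take for the main claim is the intended one: at the fixed point of the filtration every $v_s \in \mathcal{S}_{red}$ has a conjunctive clause supported entirely on the attractor states of $\mathcal{S}_{red}$, so those states form a partial fixed point, and Proposition~\ref{Stablemotifs-fixedpoints} then delivers a stable motif built only from the corresponding normal/complementary nodes and composite nodes over them. That part is sound.

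The genuine problem is the issue you single out as ``the main obstacle.'' The lemma does \emph{not} require $\mathcal{S}_{red}$ to be nonempty, and it is in fact false that it always is: the paper states explicitly that the stable motif exists ``as long as $\mathcal{S}_{red}$ is not empty,'' and Lemma~\ref{Oscillatingmotifs-reduction} is devoted precisely to the case where $\mathcal{S}_{red}$ is empty and $\mathcal{O}$ is not. The contradiction argument you sketch---that an empty $\mathcal{S}_{red}$ would let you build an update schedule that flips some node of $\mathcal{S}$---cannot work, because case (ii) of Proposition~\ref{Stablemotifs-inputs} allows a fixed-state node to be sustained entirely by oscillating inputs: in every network state of $\mathcal{A}$ \emph{some} clause of $f_s$ involving $\mathcal{O}$-nodes is satisfied, even though no single clause is satisfied throughout and no clause is supported on $\mathcal{S}$ alone (think of $f_C = A \hbox{ OR } B$ with $A,B$ oscillating but never simultaneously $0$ in $\mathcal{A}$). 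So all of $\mathcal{S}$ can land in $\mathcal{S}_{osc}$ and the filtration can legitimately terminate at $\varnothing$. The fix is not a harder nonemptiness proof but a correct reading of the statement: treat the stable-motif conclusion as applying when $\mathcal{S}_{red} \neq \varnothing$ (in which case Proposition~\ref{Stablemotifs-fixedpoints}, part (ii), guarantees the motif set it produces is nonempty), and leave the empty case to the oscillating-motif analysis of Lemma~\ref{Oscillatingmotifs-reduction}.
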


\begin{mylemma} \label{Stablemotifs-reduction2}
Let $\mathcal{S}_{red} \subset \mathcal{S}$ be the constructed set of nodes in Lemma \ref{Stablemotifs-reduction1}. Then (i) $\mathcal{S}_{red}$ is such that the network reduction of any stable motif composed only of the corresponding states of $\mathcal{S}_{red}$ in $\mathcal{A}$ (or composite nodes composed of such nodes) can only fix the state of nodes in $\mathcal{S}_{red}$, and (ii) if any of the states of the nodes in $\mathcal{S}_{red}$ is fixed by network reduction, then it has to be on their corresponding state in $\mathcal{A}$; if they do not have their state fixed, then either their rule (if their fixed state in $\mathcal{A}$ is 1) or the negation of their rule (if their fixed state is 0) in the reduced network have a conjunctive clause that only depends on the specific state of the nodes of $\mathcal{S}_{red}$ in $\mathcal{A}$ (i.e, on $\sigma_s$ if $b_s=1$, or $\overline{\sigma}_s$ if $b_s=0$) that did not have their states fixed during network reduction.
\end{mylemma}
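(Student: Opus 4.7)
The plan is to prove (i) and the first half of (ii) together by a single induction on the steps of the network reduction, while the second half of (ii) will follow separately from the defining witnessing clause of each node in $\mathcal{S}_{red}$. The key preliminary observation is that the descending iteration preceding Lemma 1 produces $\mathcal{S}_{red}$ as the largest subset $\mathcal{T} \subseteq \mathcal{S}$ with the \emph{closure property}: every $v \in \mathcal{T}$ admits a conjunctive clause of $f_v$ (if $b_v = 1$) or of $\overline{f}_v$ (if $b_v = 0$) whose literals are states of nodes in $\mathcal{T}$ evaluated at their values in $\mathcal{A}$. Lemma 1 already supplies a seed stable motif entirely contained in $\mathcal{S}_{red}$, so the natural induction hypothesis is that after any number of reduction steps the set $R$ of already-fixed nodes lies in $\mathcal{S}_{red}$, with each node of $R$ pinned to its $\mathcal{A}$-value.

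For the inductive step, suppose a new node $v$ becomes fixed, so $f_v|_R$ is some constant $c$. Every network state occurring in $\mathcal{A}$ is $R$-consistent (the $R$-fixing matches $\mathcal{A}$), so $f_v = c$ at every attractor state. If $v \in \mathcal{O}$, then $f_v$ must take both values across $\mathcal{A}$ in order for $v$ to oscillate — this is also what Proposition 4 expresses about oscillating nodes — contradicting constancy; hence $v \in \mathcal{S}$ and $c = b_v$. Now I invoke assumption (4) on the form of the Boolean functions: applied to $f_v$ when $c = 1$ (respectively $\overline{f}_v$ when $c = 0$), constancy of the reduced rule guarantees a conjunctive clause of the relevant function that is already satisfied by $R$ alone, and whose literals therefore lie entirely in $R \subseteq \mathcal{S}_{red}$ at their $\mathcal{A}$-values. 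Consequently $\mathcal{S}_{red} \cup \{v\}$ inherits the closure property, and maximality forces $v \in \mathcal{S}_{red}$. This proves (i) and the first half of (ii) simultaneously, since the fixed value is $c = b_v$.

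For the second half of (ii), assume $v \in \mathcal{S}_{red}$ is not fixed by the completed reduction, and let $C$ be a conjunctive clause of $f_v$ or $\overline{f}_v$ (according to $b_v$) witnessing $v$'s membership in $\mathcal{S}_{red}$. If every variable appearing in $C$ were already fixed to its $\mathcal{A}$-value, $C$ would evaluate to $1$ in the reduced rule, pinning that rule constantly to $b_v$ and thus reducing $v$ — contrary to the assumption. Hence at least one variable of $C$ remains free after reduction, and after substituting the fixed $\mathcal{S}_{red}$-values, $C$ survives as a conjunctive clause of the reduced rule depending only on unfixed $\mathcal{S}_{red}$-nodes in their $\mathcal{A}$-states, which is exactly the assertion of (ii). The step I expect to be the main obstacle is the inductive step for (i): one must carefully verify that assumption (4) delivers a witnessing clause confined to $R$ itself (rather than to some superset that merely agrees with $R$ on $\mathcal{A}$-states), and that the clause lands in the correct rule ($f_v$ versus $\overline{f}_v$) so that $\mathcal{S}_{red} \cup \{v\}$ satisfies the closure property verbatim and the maximality argument fires.
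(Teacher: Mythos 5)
This paper does not actually prove Lemma \ref{Stablemotifs-reduction2}: Text S2 states explicitly that the lemmas of section A are reproduced from the authors' earlier work and refers the reader to ref.~\cite{ReductionChaos} for the proofs, so there is no in-paper argument to compare yours against line by line. Judged on its own terms, your proof is correct. Your preliminary observation is the right one: the descending iteration $\mathcal{S}\supseteq\mathcal{S}_0\supseteq\mathcal{S}_1\supseteq\cdots$ is a greatest-fixed-point computation for the monotone operator ``has a witnessing clause over the given set,'' and since the union of two sets with the closure property again has it, $\mathcal{S}_{red}$ is the unique largest such set, which is exactly what your maximality step needs. The inductive step is also sound: if $f_v|_R$ is the constant $c$ with $R\subseteq\mathcal{S}_{red}$ pinned at $\mathcal{A}$-values, then $f_v\equiv c$ on every state of $\mathcal{A}$, which rules out $v\in\mathcal{O}$ (an oscillating node's function must take both values somewhere in the attractor for the state to change in both directions) and forces $c=b_v$; property (4), applied to $f_v$ or $\overline{f}_v$ as appropriate, then yields a conjunctive clause ``equal to 1 when evaluated at'' the state of $R$, and such a clause necessarily has all of its literals determined and satisfied by $R$ --- so the worry you flag at the end about the clause being confined to $R$ rather than a superset is already resolved by the wording of property (4). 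Two cosmetic points: when several downstream nodes become constant in the same reduction pass you should process them one at a time (each is constant given $R$ alone, so this is harmless); and in the last step the surviving clause $C'$ of the reduced rule may be absorbed when that rule is rewritten as a sum of prime implicants, in which case you pass to a prime sub-implicant of $C'$, which still depends only on unfixed $\mathcal{S}_{red}$-nodes at their $\mathcal{A}$-values and cannot be empty without fixing $v$. Neither point is a gap in the argument.
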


\begin{mylemma} \label{Oscillatingmotifs-reduction}
Let $\mathcal{A}$ be an attractor of the Boolean network $(V,\Sigma,F)$ under the general asynchronous updating scheme, and let $\mathcal{S}$ and $\mathcal{O}$ be the set of the fixed-state and oscillating nodes, respectively. Let $\mathcal{S}_{red} \subset \mathcal{S}$ be the constructed set of nodes in Lemma \ref{Stablemotifs-reduction1} and assume that $\mathcal{S}_{red}$ is empty and that $\mathcal{O}$ is a non empty set. Then the expanded network representation of $(V,\Sigma,F)$ must be such that the normal nodes and complementary nodes of the elements in $\mathcal{O}$, and the nodes corresponding to the state of the nodes of $\mathcal{S}$ in $\mathcal{A}$ must both be downstream of an oscillating motif that contains at least one of the nodes in $\mathcal{O}$.
\end{mylemma}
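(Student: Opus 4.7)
The plan is to exhibit an oscillating motif in the expanded network whose vertex set contains a normal or complementary image of some $v_o\in\mathcal{O}$, and then verify that both the remaining $\mathcal{O}$-images and the nodes corresponding to the fixed states of the $\mathcal{S}$-nodes in $\mathcal{A}$ are downstream of it. The construction is driven by Proposition~\ref{Oscmotifs-inputs}, which guarantees that every $v_o\in\mathcal{O}$ has, in each of $f_o$ and $\overline{f}_o$, a conjunctive clause with at least one input in $\mathcal{O}$ and with any remaining inputs in $\mathcal{S}$ at their attractor values.

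To construct the SCC, I would define a map $\phi$ on the $2|\mathcal{O}|$ literals $\{v_o,\overline{v}_o : v_o\in\mathcal{O}\}$ by picking, for each literal $w$, one $\mathcal{O}$-input $\phi(w)$ from a Proposition-\ref{Oscmotifs-inputs}-guaranteed clause of $w$'s rule. Finiteness of the domain forces iteration of $\phi$ to enter a cycle; pulled back into the expanded network the cycle interleaves $\mathcal{O}$-literals with the associated composite nodes (one per clause), so its vertex set is an SCC. I would then enlarge it to an oscillating motif by taking the maximal SCC containing it and closing under the two motif-defining rules. The delicate point, which I expect to be the main obstacle, is that a chosen composite node may carry $\mathcal{S}$-fixed-state side-inputs; forcing those literals and, by the normal/complement pairing rule, their complements into the motif must still leave an SCC. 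I would handle this by showing that any such $\mathcal{S}$-literal already lies on a cycle with the $\mathcal{O}$-cycle, which should follow by iterating Proposition~\ref{Oscmotifs-inputs} and by using the hypothesis $\mathcal{S}_{red}=\emptyset$ to exclude $\mathcal{S}$-literals from acting as sources of the reduced expanded network.

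For the downstream claim, each $\mathcal{O}$-literal outside the motif also has a Proposition-\ref{Oscmotifs-inputs} clause with an $\mathcal{O}$-input, and the $\phi$-chase above shows that the resulting sequence enters the motif, giving a directed path from the motif to that literal. For the fixed-state images of $\mathcal{S}$-nodes in $\mathcal{A}$, the hypothesis $\mathcal{S}_{red}=\emptyset$ means the iterative construction in Lemma~\ref{Stablemotifs-reduction1} strips every element of $\mathcal{S}$, which by Proposition~\ref{Stablemotifs-inputs} forces each $\mathcal{S}$-fixing clause to eventually rely, through the successive stages $\mathcal{S}_0\supset\mathcal{S}_1\supset\cdots$, on an input in $\mathcal{O}$; chasing this dependency backwards produces a directed path in the expanded network from a node of the oscillating motif to each prescribed $\mathcal{S}$-literal, finishing the proof.
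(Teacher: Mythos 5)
A preliminary remark: Text S2 states explicitly that the results of section A, including this lemma, are only restated in the present paper and that their proofs live in Appendix A of ref.~\cite{ReductionChaos}; there is therefore no in-paper proof to compare against, and your proposal has to stand on its own. Parts of it do: Proposition~\ref{Oscmotifs-inputs} does give each literal of each $\mathcal{O}$-node an in-edge (possibly through a composite node) from a literal of an $\mathcal{O}$-node, so your $\phi$-chase legitimately produces directed cycles in the expanded network whose non-composite vertices are $\mathcal{O}$-literals, with every $\mathcal{O}$-literal downstream of one of them; and unwinding the chain $\mathcal{S}_0\supseteq\mathcal{S}_1\supseteq\cdots$ that defines $\mathcal{S}_{red}$, together with Proposition~\ref{Stablemotifs-inputs}, does place every $\mathcal{S}$-state literal downstream of some $\mathcal{O}$-literal once $\mathcal{S}_{red}=\emptyset$.

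The genuine gap is the passage from ``a cycle among $\mathcal{O}$-literals'' to ``an oscillating motif.'' An oscillating motif is by definition an SCC that (1) contains $\overline{v}_i$ whenever it contains $v_i$ and vice versa, and (2) contains every input of every composite node it contains. ``Taking the maximal SCC containing the cycle and closing under the two motif-defining rules'' is not a well-defined repair: if that SCC omits $\overline{v}_j$ for some $v_j$ it contains, or if one of its composite nodes has an input literal with no path back into it, adding the missing node destroys strong connectivity, and the object you are left with is simply not an oscillating motif. Your proposed fix for the composite-closure problem---that every $\mathcal{S}$-state side-input ``already lies on a cycle with the $\mathcal{O}$-cycle''---is false in general: a fixed-state side-input can be stabilized by a different oscillating component that receives no feedback from the cycle you built (two independent oscillators, one of which fixes a node gating the other, is the generic picture), in which case the SCC through your cycle violates property (2) and the oscillating motif the lemma promises is a different one, lying strictly upstream. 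The complement-pairing property (1) is not addressed at all, and it is not automatic: a $\phi$-cycle can perfectly well visit only one literal per node---that is exactly a stable-motif-type positive circuit, and such circuits do coexist with oscillations of the very same nodes (the unstable-oscillation example of Fig.~S3). A correct argument has to choose the oscillating motif globally rather than locally, e.g.\ by passing to an SCC that is source-like in the reachability order among the SCCs spanned by the Proposition~\ref{Oscmotifs-inputs} clauses, so that any closure failure is converted into ``downstream of something further upstream'' rather than a contradiction, and then verifying both defining properties for that terminal choice. As written, the existence of the advertised oscillating motif---which is the entire content of the lemma---is asserted rather than proved.
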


The following theorem is the main result of this section, and it combines the results of Lemma \ref{Stablemotifs-reduction1}, \ref{Stablemotifs-reduction2}, and \ref{Oscillatingmotifs-reduction}. It shows that for every attractor (under general asynchronous updating, see Methods) in the network, our attractor-finding method finds a corresponding quasi-attractor in which the state of the nodes in $\mathcal{S}_{red}$ is the same as in the attractor, and in which the rest of the nodes are either be part of an oscillating motif or downstream of it.

\begin{mythm} \label{Attractorconservation}
Let $\mathcal{A}$ be an attractor of the Boolean network $(V,\Sigma,F)$ under the general asynchronous updating scheme, and let $\mathcal{S}$ and $\mathcal{O}$ be the set of the fixed-state and oscillating nodes, respectively. Let $\mathcal{S}_{red} \subset \mathcal{S}$ be the set of nodes constructed in Lemma \ref{Stablemotifs-reduction1}. Then, there exists a set of stable motifs such that, by applying network reduction, all the nodes in $\mathcal{S}_{red}$ get fixed in their steady state in $\mathcal{A}$, while the rest of the nodes in $V$ are part of the final reduced network. This resulting final reduced network is such that, in its expanded network representation, all the nodes are either be part of an oscillating motif containing at least one of the nodes in $\mathcal{O}$, or be downstream of an oscillating motif.
\end{mythm}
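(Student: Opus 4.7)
The plan is to stitch Lemmas~\ref{Stablemotifs-reduction1}, \ref{Stablemotifs-reduction2}, and \ref{Oscillatingmotifs-reduction} together via an induction on successive rounds of stable-motif-driven network reduction, maintaining as an invariant that the restriction of $\mathcal{A}$ to each current reduced network is still an attractor under general asynchronous updating. For the base step, Lemma~\ref{Stablemotifs-reduction1} applied to $(V,\Sigma,F)$ yields a stable motif $M_0$ whose normal and complementary nodes correspond to some subset of $\mathcal{S}_{red}$ taking their $\mathcal{A}$-values; performing the network reduction driven by $M_0$, Lemma~\ref{Stablemotifs-reduction2}(i) ensures that every newly fixed node lies in $\mathcal{S}_{red}$ and Lemma~\ref{Stablemotifs-reduction2}(ii) ensures each such node is fixed at precisely its $\mathcal{A}$-value. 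The attractor-preservation property of the reduction method (ref.~\cite{AssiehMath}) guarantees that the restriction of $\mathcal{A}$ to the reduced network $(V^{(1)},\Sigma^{(1)},F^{(1)})$ remains an attractor.

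For the inductive step, after $k$ rounds with unfixed members of $\mathcal{S}_{red}$ remaining in $(V^{(k)},\Sigma^{(k)},F^{(k)})$, Lemma~\ref{Stablemotifs-reduction2}(ii) tells us that each such node retains a conjunctive clause depending only on the $\mathcal{A}$-values of still-unfixed $\mathcal{S}_{red}$-nodes, which is precisely the hypothesis needed by the iterative construction underlying Lemma~\ref{Stablemotifs-reduction1} on the current reduced network. Thus Lemma~\ref{Stablemotifs-reduction1} again produces a stable motif, Lemma~\ref{Stablemotifs-reduction2} again constrains the ensuing reduction to fix only $\mathcal{S}_{red}$-nodes at their $\mathcal{A}$-values, and the invariant is preserved. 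Because $\mathcal{S}_{red}$ is finite, the process halts after at most $|\mathcal{S}_{red}|$ rounds in a reduced network $(V_{\mathrm{final}},\Sigma_{\mathrm{final}},F_{\mathrm{final}})$ whose surviving nodes are exactly $(\mathcal{S}\setminus\mathcal{S}_{red})\cup\mathcal{O}$.

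At the terminal step I apply Lemma~\ref{Oscillatingmotifs-reduction} to $(V_{\mathrm{final}},\Sigma_{\mathrm{final}},F_{\mathrm{final}})$ with its restricted attractor, whose fixed-state and oscillating nodes are $\mathcal{S}\setminus\mathcal{S}_{red}$ and $\mathcal{O}$, respectively. Its conclusion is exactly the theorem's claim: in the expanded representation of $(V_{\mathrm{final}},\Sigma_{\mathrm{final}},F_{\mathrm{final}})$, the normal and complementary nodes of $\mathcal{O}$-elements together with the $\mathcal{A}$-state nodes of the surviving members of $\mathcal{S}\setminus\mathcal{S}_{red}$ all lie within, or downstream of, an oscillating motif containing at least one element of $\mathcal{O}$.

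The main obstacle is verifying the hypothesis of Lemma~\ref{Oscillatingmotifs-reduction} at the terminal step, namely that the iterative construction of Lemma~\ref{Stablemotifs-reduction1}, re-run on the reduced network with its restricted attractor, produces an empty $\mathcal{S}_{red}^{\mathrm{final}}$. This is a stability-of-partition property: no node of $\mathcal{S}\setminus\mathcal{S}_{red}$ should become ``self-sufficient'' after reduction. Proving it requires tracking which conjunctive clauses of each $f_s$ survive reduction, and using Proposition~\ref{Stablemotifs-inputs} together with the iterative definition of $\mathcal{S}_{red}$ to show that any clause of a $(\mathcal{S}\setminus\mathcal{S}_{red})$-node which becomes $\mathcal{O}$-free after reduction must itself depend on a node that was dropped from $\mathcal{S}_{red}$ at an earlier stage of the Lemma~\ref{Stablemotifs-reduction1} construction, so the re-run iteration collapses to $\varnothing$. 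Once this bookkeeping is discharged, the theorem follows by a clean assembly of the three lemmas.
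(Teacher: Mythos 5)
Your proposal is correct and follows essentially the same route as the paper: Theorem \ref{Attractorconservation} is presented there precisely as the combination of Lemmas \ref{Stablemotifs-reduction1}, \ref{Stablemotifs-reduction2}, and \ref{Oscillatingmotifs-reduction} — iterating Lemma \ref{Stablemotifs-reduction1} to extract stable motifs inside $\mathcal{S}_{red}$, using Lemma \ref{Stablemotifs-reduction2} to confine each reduction to $\mathcal{S}_{red}$-nodes at their $\mathcal{A}$-values, and invoking Lemma \ref{Oscillatingmotifs-reduction} once no stable motifs remain — with the detailed verification deferred to ref.~\cite{ReductionChaos}. The step you single out as the main obstacle (checking that the re-run of the $\mathcal{S}_{red}$ construction on the final reduced network is empty, so that Lemma \ref{Oscillatingmotifs-reduction} applies) is indeed the only nontrivial bookkeeping, and your sketch of discharging it by tracking the stage at which each node drops out of the iterative construction is sound.
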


\subsection{The stable motif control method} \label{sec:S2B}

\subsubsection{A sequence of stable motifs uniquely determines an equivalence class of attractors} \label{sec:S2B1}

The main step in proving that our stable motif control method works is to show that a sequence of stable motifs obtained from the attractor-finding method uniquely determines an attractor. Since the attractor-finding method yields quasi-attractors, we need to be more precise with what ``uniquely determines an attractor'' refers to in this context. Let $(V,\Sigma,F)$ be a Boolean network, and let $\mathbb{A}=\left\{\mathcal{A}_i\right\}, i=1, 2, \ldots, n_{\mathbb{A}}$ be the set of general asynchronous attractors of $(V,\Sigma,F)$. We define $\mathbb{A}^{(red)}=\{\mathcal{A}^{(red)}_j\}, j=1, 2, \ldots, n_{\mathbb{A}^{(red)}}$ as the partition of the attractors $\mathbb{A}$ generated by the equivalence relation $\sim$, where $\mathcal{A}_k \sim \mathcal{A}_l$ if the $\mathcal{S}_{red}$ for $\mathcal{A}_k$ (as defined in Text \hyperref[sec:S2]{S2} subsection \hyperref[sec:S2A7]{A.7}) is the same as the $\mathcal{S}_{red}$ for $\mathcal{A}_l$, and the state of each node $v \in \mathcal{S}_{red}$ is the same in both $\mathcal{A}_k$ and $\mathcal{A}_l$. Consequently, each $\mathcal{A}_i \in \mathbb{A}$ is an element of only one $\mathcal{A}^{(red)}_j \in \mathbb{A}^{(red)}$ (since it is a partition generated by an equivalence relation), and $\forall \mathcal{A}_k, \mathcal{A}_l \in \mathcal{A}^{(red)}_j$, we have $\mathcal{A}_k \sim \mathcal{A}_l$.

Using the above we can now be more precise: For a Boolean network $(V,\Sigma,F)$ under general asynchronous updating, a sequence of stable motifs obtained from the attractor-finding method uniquely determines an equivalence class of attractors $\mathcal{A}^{(red)}$, each of which is the set of all attractors of $(V,\Sigma,F)$ that share the same $\mathcal{S}_{red}$ and the state of each node in $\mathcal{S}_{red}$ ($\mathcal{S}_{red}$ corresponds to the set of fixed-state nodes in an attractor $\mathcal{A}$ that do not depend in any way on the state of nodes whose state oscillates in $\mathcal{A}$ to have their node state fixed in their state on $\mathcal{A}$). We prove this below.

\begin{mylemma} \label{SequenceStableMotifs}
Let $\mathcal{B}=(V,\Sigma,F)$ be a Boolean network, let $\mathbb{A}^{(red)}$ be the set of equivalence classes of attractors defined above, and let $S_{red}$ and $S_{red, \Sigma}$ denote, respectively, the set of nodes and node states which define an equivalence class of attractors $\mathcal{A}^{(red)} \in \mathbb{A}^{(red)}$. Let $\mathcal{S}_{seq}=\left(\mathcal{M}_1, \ldots, \mathcal{M}_L\right)$ be a sequence of stable motifs of the Boolean network obtained by the attractor-finding method (section \hyperref[sec:A]{A}), let $Q$ be its associated quasi-attractor, and let $\mathcal{S}_Q$ and $Q_{\Sigma}$ be the set of fixed-state nodes in $Q$ and the state of the fixed-state nodes in $Q$, respectively. Then $\mathcal{S}_Q$ and $Q_{\Sigma}$ are such that $\mathcal{S}_Q=\mathcal{S}_{red}$ and $Q_{\Sigma}=S_{red, \Sigma}$ for only one $\mathcal{A}^{(red)} \in \mathbb{A}^{(red)}$.
\end{mylemma}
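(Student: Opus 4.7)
The plan is to split the claim into existence and uniqueness. Uniqueness is automatic from the definition of $\sim$: two distinct classes $\mathcal{A}_i^{(red)} \neq \mathcal{A}_j^{(red)}$ are distinguished precisely by their defining pair $(\mathcal{S}_{red},S_{red,\Sigma})$, so at most one class can equal $(\mathcal{S}_Q,Q_\Sigma)$. The content lies in exhibiting some attractor $\mathcal{A}$ whose equivalence class matches $(\mathcal{S}_Q,Q_\Sigma)$, which I would establish in three stages.

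First, I would show by induction on $L$ that $Q_\Sigma$ is a partial fixed point of $\mathcal{B}$ on $\mathcal{S}_Q$. At step $j$ of the induction, Proposition~\ref{PartFixedPoints} applied to the freshly selected motif $\mathcal{M}_{j+1}$ of the current reduced model yields that its states form a partial fixed point of the reduced Boolean rules; network reduction then promotes to fixed only those nodes whose reduced functions have collapsed to a constant, and these are trivially partial fixed points by a one-line disjunctive normal form check. Pulling the result back through the previously performed substitutions preserves the partial-fixed-point property with respect to the original $F$. Because $Q_\Sigma$ is then a partial fixed point of $\mathcal{B}$, any general-asynchronous trajectory whose initial state agrees with $Q_\Sigma$ on $\mathcal{S}_Q$ keeps those coordinates fixed for all time, so the induced dynamics on $V \setminus \mathcal{S}_Q$ (governed by the final reduced network with finite state space) must enter some attractor $\mathcal{A}'$; concatenating $\mathcal{A}'$ with $Q_\Sigma$ produces an attractor $\mathcal{A}$ of the full network with $\mathcal{S}_Q \subseteq \mathcal{S}(\mathcal{A})$ at values $Q_\Sigma$.

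Second, I would verify $\mathcal{S}_{red}(\mathcal{A}) = \mathcal{S}_Q$ with matching states, placing $\mathcal{A}$ in the target equivalence class. The inclusion $\mathcal{S}_Q \subseteq \mathcal{S}_{red}(\mathcal{A})$ with matching values follows by iterating Lemma~\ref{Stablemotifs-reduction2} along $\mathcal{S}_{seq}$: each stable motif contributes only nodes whose fixing is independent of oscillating nodes, placing them in $\mathcal{S}_{red}(\mathcal{A})$. For the reverse inclusion, assume for contradiction that some $v \in \mathcal{S}_{red}(\mathcal{A}) \setminus \mathcal{S}_Q$ exists. By the construction of $\mathcal{S}_{red}$ (cf.\ Proposition~\ref{Stablemotifs-inputs}), either $f_v$ or $\overline{f}_v$ carries a conjunctive clause depending only on the attractor values of nodes in $\mathcal{S}_{red}(\mathcal{A})$. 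Substituting the consistent $\mathcal{S}_Q$-values into this clause yields a surviving clause of the reduced function depending only on nodes of $\mathcal{S}_{red}(\mathcal{A}) \setminus \mathcal{S}_Q$; Proposition~\ref{Stablemotifs-fixedpoints} applied to the final reduced network then produces a stable motif in its expanded representation, contradicting the termination criterion that $\mathcal{S}_{seq}$ leaves no further stable motifs.

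The hardest part will be the bookkeeping in the second stage: tracking how a given conjunctive clause of $f_v$ transforms under the cascade of partial substitutions along $\mathcal{S}_{seq}$ and verifying that the surviving clause genuinely witnesses a stable motif via Proposition~\ref{Stablemotifs-fixedpoints}. The delicate case split is whether substitutions collapse the clause to a constant---in which case $v$ itself would have entered $\mathcal{S}_Q$ during reduction, ruling out the contradiction branch---or whether a nontrivial clause survives intact. Once this case analysis is executed cleanly, the three stages combine to exhibit the required $\mathcal{A}$ in the unique equivalence class matching $(\mathcal{S}_Q,Q_\Sigma)$.
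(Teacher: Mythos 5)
Your proposal is correct and follows essentially the same route as the paper's proof: the paper likewise obtains existence from the fact that $Q_{\Sigma}$ is a partial fixed point (so the trapping subspace it defines must contain an attractor, making the family $\mathbb{A}^{(red)}_Q$ of classes whose defining pair contains $(\mathcal{S}_Q,Q_{\Sigma})$ nonempty by finiteness), and likewise derives $\mathcal{S}_{red}=\mathcal{S}_Q$ by showing that any leftover node of $\mathcal{S}_{red}\setminus\mathcal{S}_Q$ would, via its surviving conjunctive clause (Lemma~\ref{Stablemotifs-reduction2}), force a stable motif in the terminal reduced network $\mathcal{B}_L$, contradicting the termination of the sequence. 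The only cosmetic differences are that you construct an explicit witness attractor and route the final contradiction through Proposition~\ref{Stablemotifs-fixedpoints}, whereas the paper quantifies over all equivalence classes containing $(\mathcal{S}_Q,Q_{\Sigma})$ and builds the offending stable motif directly in the expanded network representation.
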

\begin{proof}
Let $\mathcal{B}_i, i=1, 2, \ldots, L$ be the reduced Boolean network obtained from the Boolean network $\mathcal{B}$ after applying network reduction up to and including the stable motif $\mathcal{M}_i$ in the sequence $\mathcal{S}_{seq}$, and define $\mathcal{B}_0\equiv\mathcal{B}$. By construction, one of the stable motifs of the Boolean network $\mathcal{B}_i$ is $\mathcal{M}_{i+1}$. Let $\mathcal{R}_i, i=1, 2, \ldots, L$  be the node state of the nodes in the Boolean network $\mathcal{B}_{i-1}$ whose node state becomes fixed after applying network reduction with motif $\mathcal{M}_i$. By the definition of quasi-attractor $Q$, $\mathcal{S}_Q$ is given by the set of nodes whose state is specified in $\mathcal{M}_i$ or $\mathcal{R}_i$, and $Q_{\Sigma}$ is given by the nodes states specified in $\mathcal{M}_i$ or $\mathcal{R}_i$, that is,
\begin{align*}
  Q_{\Sigma} &\equiv \bigcup_{j=1}^{L} \mathcal{M}_j \cup \mathcal{R}_j = \left\{\sigma_{q_1}=b_{q_1}, \sigma_{q_2}=b_{q_2}, \ldots, \sigma_{q_{n_{Q}}}=b_{q_{n_{Q}}}\right\}, \\
  \mathcal{S}_Q &= \left\{v_{q_1}, v_{q_2}, \ldots, v_{q_{n_{Q}}}\right\} .
\end{align*}
Let $\mathcal{A}^{(red)} \in \mathbb{A}^{(red)}$, and let $\mathcal{S}_{red}$ and $\mathcal{S}_{red, \Sigma}$ be the set of fixed-state nodes and the state of the fixed-states nodes, respectively, which define the equivalence class $\mathcal{A}^{(red)}$. Let $\mathbb{A}^{(red)}_Q \subset \mathbb{A}^{(red)}$ be all the equivalence classes of attractors $\mathcal{A}^{(red)} \in \mathbb{A}^{(red)}$ for which their defining $\mathcal{S}_{red}$ and $\mathcal{S}_{red, \Sigma}$ satisfy $\mathcal{S}_Q\subseteq \mathcal{S}_{red}$ and $Q_{\Sigma} \subseteq \mathcal{S}_{red, \Sigma}$. Note that $\mathbb{A}^{(red)}_Q$ cannot be an empty set; this is a consequence of stable motifs being partial fixed points of the dynamics (Proposition \ref{PartFixedPoints}), and the finite size of the state space spanned by all $\Sigma$ which satisfy $\sigma_{q_i}=b_{q_i} \forall v_{q_i} \in \mathcal{S}$.

Let $\mathcal{A}'^{(red)} \in \mathbb{A}^{(red)}_Q$ and let $\mathcal{S}'_{red}\neq \mathcal{S}_Q$ and $\mathcal{S}'_{red, \Sigma}$ its defining set of fixed-state nodes and node states. We now show, by contradiction, that $\mathcal{S}'_{red} \equiv \mathcal{S}_Q$. Lemma \ref{Stablemotifs-reduction2} (with $\mathcal{S}'_{red}$) guarantees that network reduction of the stable motifs in $\mathcal{S}_{seq}$ can only fix the state of nodes in $\mathcal{S}_Q$ on their corresponding state in $Q_{\Sigma}$, since $\mathcal{S}_Q\subseteq \mathcal{S}'_{red}$ and $Q_{\Sigma} \subseteq \mathcal{S}'_{red, \Sigma}$. Lemma \ref{Stablemotifs-reduction2} also guarantees that each node $v_i \in \mathcal{S}'_{red}-\mathcal{S}_Q$ has a conjunctive clause in its Boolean function in $\mathcal{B}_L$ if $\sigma_i=1$ in $\mathcal{S}'_{red, \Sigma}$, or in the negation of their Boolean function if $\sigma_i=0$ in $\mathcal{S}'_{red, \Sigma}$, that only depends on the specific state of the nodes of $\mathcal{S}'_{red}-\mathcal{S}_Q$ in $\mathcal{S}'_{red, \Sigma}$.

From the above, Lemma \ref{Stablemotifs-reduction2} implies that each node $v_i \in \mathcal{S}'_{red}-\mathcal{S}_Q$ has an associated normal node (if $\sigma_i=1$ in $\mathcal{S}'_{red, \Sigma}$) or complementary node (if $\sigma_i=0$ in $\mathcal{S}'_{red, \Sigma}$) in the expanded network representation of $\mathcal{B}_L$ with, at least, one expanded network input node $v_j$, where $v_j$ is either (a) the associated normal node of a node in $\mathcal{S}'_{red}-\mathcal{S}_Q$ whose node state is $\sigma_j=1$ in $\mathcal{S}'_{red, \Sigma}$, (b) the associated complementary node of a node in $\mathcal{S}'_{red}-\mathcal{S}_Q$ whose node state is $\sigma_j=0$ in $\mathcal{S}'_{red, \Sigma}$, or (c) a composite node with only nodes in (a) and/or (b) as input nodes. Consequently , the expanded network representation of $\mathcal{B}_L$ must have a stable motif composed only of expanded network nodes $v_j$ satisfying either (a), (b) and (c). But this is not possible, since $\mathcal{B}_L$ has no stable motifs (if it had, then it would be part of the sequence $\mathcal{S}_{seq}$). By contradiction, we must have $\mathcal{S}'_{red} \equiv \mathcal{S}_Q$.

From the previous paragraph we have $\mathcal{S}'_{red} \equiv \mathcal{S}_Q$. This implies that $\mathbb{A}^{(red)}_Q \subset \mathbb{A}^{(red)}$ is composed of a single equivalence class of attractors $\mathcal{A}^{(red)}$, which has $\mathcal{S}_{red}=\mathcal{S}_Q$ and $\mathcal{S}_{red, \Sigma}=Q_{\Sigma}$, thus concluding our proof.
\end{proof}

\subsubsection{Fixing the node states specified by a sequence of stable motifs} \label{sec:S2B2}

Lemma \ref{SequenceStableMotifs} shows that a sequence of stable motifs uniquely determines an equivalence class of attractors, but this does not directly show the result of fixing the node states specified by a sequence of stable motifs. For this, we first need to define what we mean with a Boolean network in which a set of node states is fixed. For a Boolean network $\mathcal{B}=(V,\Sigma,F)$ under general asynchronous updating and a set of node states $P=\left\{\sigma_{p_1}=b_{p_1}, \sigma_{p_2}=b_{p_2}, \ldots, \sigma_{p_l}=b_{p_l}\right\}$, we denote $\mathcal{B}_P=(V,\Sigma,F')$, with $f'_i \in F'$ such that $f'_i=f_i|_P$ if $i \not \in \{p_1, \ldots, p_l\}$ or $f'_i=b_i$ if $i \in \{p_1, \ldots, p_l\}$, as the Boolean network in which $P$ is fixed.

Note that, formally, $\sigma_{p_i}\neq b_{p_i}$ for any $\sigma_{p_i} \in P$ is an allowed state of $\Sigma$ in the Boolean network $\mathcal{B}_P$. However, no attractors in $\mathcal{B}_P$ have $\sigma_{p_i} \neq b_{p_i}$ for any $\sigma_{p_i} \in P$ since $f'_i=b_i, \forall i \in \{p_1, \ldots, p_l\}$. Furthermore, if we restrict ourselves to the substate space of $\mathcal{B}_P=(V,\Sigma,F')$ with $\sigma_i=b_i, \forall \sigma_i \in P$, it can be shown that this substate space is equivalent to having taken $\mathcal{B}$, restricting it to $\sigma_i=b_i, \forall \sigma_i \in P$, and removing the transitions from network states with $\sigma_i=b_i, \forall \sigma_i \in P$ to network states with $\sigma_i\neq b_i$ for at least one $\sigma_i \in P$. We choose to work with $\mathcal{B}_P$ instead of a restricted $\mathcal{B}$ because of its similarity with the network reduction process of the attractor-finding method.

We now show that for a sequence of stable motifs $S_{seq}$ with associated quasi-attractor $Q$ and fixed-node states $Q_{\Sigma}$, the Boolean network in which $Q_{\Sigma}$ is fixed has the same attractors as the equivalence class of attractors specified by $S_{seq}$. For this, we use a more general statement for which the above is a special case.

\begin{myprop} \label{ReducedFixedNetworks1}
Let $\mathcal{S}_{seq}=\left(\mathcal{M}_1, \mathcal{M}_2, \ldots, \mathcal{M}_L\right)$ be a sequence of stable motifs of $\mathcal{B}=(V,\Sigma,F)$ obtained by the attractor-finding method, and let $\mathcal{B}_{\lambda}, \lambda\in\{1,2,\ldots,L\}$ be the reduced Boolean network obtained from $\mathcal{B}$ after applying network reduction up to and including the stable motif $\mathcal{M}_{\lambda}$ in the sequence $\mathcal{S}_{seq}$. Let $Q_{\Sigma,\lambda}$ be the following set of node states
\begin{equation*}
  Q_{\Sigma,\lambda} = \bigcup_{j=1}^{\lambda} \mathcal{M}_j \cup \mathcal{R}_j,
\end{equation*}
where $\mathcal{R}_i, i=1, 2, \ldots, L$ is defined in the proof of Lemma \ref{SequenceStableMotifs}. Then, the attractors in the reduced network $\mathcal{B}_{\lambda}$ are the same as the attractors of the Boolean network $\mathcal{B}_{Q_{\Sigma,\lambda}}=(V,\Sigma,F')$ when comparing only the state of nodes present in both $\mathcal{B}_{\lambda}$ and $\mathcal{B}_{Q_{\Sigma,\lambda}}$, and the nodes $v_i$ present only in $\mathcal{B}_{Q_{\Sigma,\lambda}}$ are such that their state in all attractors is given by $\sigma_i=b_i, \sigma_i \in Q_{\Sigma,\lambda}$.
\end{myprop}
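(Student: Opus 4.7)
The plan is to prove the proposition by induction on $\lambda$, with the base case carrying most of the conceptual weight: it asserts that a single step of network reduction (substituting the stable motif's node states and propagating) is dynamically equivalent to fixing exactly the resulting node states $\mathcal{M}_1\cup\mathcal{R}_1$ in the Boolean network via $\mathcal{B}_{Q_{\Sigma,1}}$. The inductive step then applies the base-case argument to the reduction of $\mathcal{B}_{\lambda-1}$ by $\mathcal{M}_\lambda$, using that $\mathcal{B}_{\lambda-1}$ and $\mathcal{B}_{Q_{\Sigma,\lambda-1}}$ already have the same attractors by the inductive hypothesis.

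For the base case, I would first show that the substate space $\mathcal{U}=\{\Sigma: \sigma_i=b_i \ \forall v_i\in Q_{\Sigma,1}\}$ is absorbing and contains every attractor of $\mathcal{B}_{Q_{\Sigma,1}}$. For $v_i\in\mathcal{M}_1$ this is immediate from $f'_i=b_i$, so any update drives $\sigma_i$ to $b_i$ and keeps it there. For $v_j\in\mathcal{R}_1$, I would induct over the propagation order in which such nodes are added by the reduction procedure: if all nodes that must be fixed before $v_j$ are already at their prescribed values, then $f_j|_{Q_{\Sigma,1}}$ is a constant $b_j$ by the very definition of $\mathcal{R}_1$, so again any update of $v_j$ sends it to $b_j$. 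Combined with Proposition 1 (the partial-fixed-point property of $\mathcal{M}_1$), this shows that from any initial state the trajectory reaches $\mathcal{U}$ in finitely many update steps and cannot leave it, so every attractor of $\mathcal{B}_{Q_{\Sigma,1}}$ lies in $\mathcal{U}$ and has $\sigma_i=b_i$ on $Q_{\Sigma,1}$.

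Next I would establish a bijection between attractors of $\mathcal{B}_{Q_{\Sigma,1}}$ restricted to $\mathcal{U}$ and attractors of $\mathcal{B}_1$ via the projection that drops the fixed-state nodes. On $\mathcal{U}$, updating a node $v_i\in Q_{\Sigma,1}$ leaves the state unchanged, while updating $v_j\notin Q_{\Sigma,1}$ uses $f'_j=f_j|_{Q_{\Sigma,1}}$, which depends only on the remaining nodes and coincides with the Boolean function assigned to $v_j$ in $\mathcal{B}_1$ by construction of the network reduction. Hence the set of allowed transitions between states of the remaining nodes is identical in the two systems, so the attractor structures (which depend only on the transition graph, not on update probabilities) match. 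The inductive step then reuses this argument verbatim on $\mathcal{B}_{\lambda-1}$ and its stable motif $\mathcal{M}_\lambda$, since the inductive hypothesis identifies the attractors of $\mathcal{B}_{\lambda-1}$ with those of $\mathcal{B}_{Q_{\Sigma,\lambda-1}}$.

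The main obstacle, as I see it, is the bookkeeping needed to justify that successive substitutions compose correctly, i.e.\ that $f_i|_{Q_{\Sigma,\lambda}} = (f_i|_{Q_{\Sigma,\lambda-1}})|_{\mathcal{M}_\lambda\cup\mathcal{R}_\lambda}$ in the sense relevant to both the reduced network and the fixed-state network, and that the propagation order defining $\mathcal{R}_\lambda$ is consistent when applied either to $\mathcal{B}_{\lambda-1}$ (as in the reduction procedure) or directly to $\mathcal{B}$ (as in the statement). A minor additional subtlety is that the asynchronous updating probability per node differs between $\mathcal{B}_\lambda$ (with fewer nodes) and $\mathcal{B}_{Q_{\Sigma,\lambda}}$ (with all $N$ nodes), but since attractors are determined by the transition graph rather than by transition probabilities, this does not affect the attractor correspondence, and the trailing clause of the proposition — that each $v_i$ present only in $\mathcal{B}_{Q_{\Sigma,\lambda}}$ has $\sigma_i=b_i$ in every attractor — follows directly from the absorbing property of $\mathcal{U}$ established in the base-case argument.
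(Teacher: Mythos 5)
Your proposal is correct and follows essentially the same route as the paper's own (sketched) proof: both rest on the observations that the fixed-state nodes of $\mathcal{B}_{Q_{\Sigma,\lambda}}$ have constant functions $f'_i=b_i$, so every attractor lies in the absorbing substate space $\sigma_i=b_i$; that the surviving nodes carry identical functions in $\mathcal{B}_{\lambda}$ and $\mathcal{B}_{Q_{\Sigma,\lambda}}$ because successive partial evaluations compose to $f_i|_{Q_{\Sigma,\lambda}}$; and that attractors depend only on the transition graph, not on update probabilities. The only differences are organizational (you phrase the composition identity as an induction on $\lambda$, which the paper does in one shot) and that your propagation argument for the $\mathcal{R}_1$ nodes is unnecessary under the paper's definition of $\mathcal{B}_P$, which already assigns $f'_j=b_j$ to every node of $Q_{\Sigma,\lambda}$, not only to the motif nodes.
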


Let us sketch the proof for this proposition. By construction, the Boolean function $f'_i$ of node $v_i$ in $\mathcal{B}_{Q_{\Sigma,\lambda}}$ is the same as the Boolean function of node $v_i$ of $\mathcal{B}_{\lambda}$ if $v_i$ is present in both $\mathcal{B}_{Q_{\Sigma,\lambda}}$ and $\mathcal{B}_{\lambda}$; this is because the Boolean functions of a reduced network are given by
\begin{equation*}
f_i|_{\mathcal{M}_1 \cup \mathcal{R}_1}|_{\mathcal{M}_2 \cup \mathcal{R}_2}|\ldots |_{\mathcal{M}_{\lambda} \cup \mathcal{R}_{\lambda}} = f_i|_{Q_{\Sigma,\lambda}} \equiv f'_i.
\end{equation*}
The nodes present in $\mathcal{B}_{Q_{\Sigma,\lambda}}$ but not in $\mathcal{B}_{\lambda}$ are the nodes whose state is specified in $Q_{\Sigma,\lambda}$. The Boolean function of each of these nodes is $f'_i=b_i$, where $b_i$ is specified in $Q_{\Sigma,\lambda}$. Since this implies that there is always a transition from any network state with at least one $\sigma_i \neq b_i, \sigma_i \in Q_{\Sigma,\lambda}$ to a network state with $\sigma_i=b_i, \sigma_i \in Q_{\Sigma,\lambda}$, but not the other way around, an attractor of $\mathcal{B}_{Q_{\Sigma,\lambda}}$ must have $\sigma_i=b_i, \forall \sigma_i \in Q_{\Sigma,\lambda}$. Since the functions of all nodes present in both $\mathcal{B}_{Q_{\Sigma,\lambda}}$ and $\mathcal{B}_{\lambda}$ are the same, and the attractors of $\mathcal{B}_{Q_{\Sigma,\lambda}}$ must have $\sigma_i=b_i, \forall \sigma_i \in Q_{\Sigma,\lambda}$, then the attractors in the reduced network $\mathcal{B}_{\lambda}$ must be the same as the attractors of the Boolean network $\mathcal{B}_{Q_{\Sigma,\lambda}}$ when comparing only the state of nodes present in both $\mathcal{B}_{\lambda}$ and $\mathcal{B}_{Q_{\Sigma,\lambda}}$.

Using Lemma \ref{SequenceStableMotifs}, Proposition \ref{ReducedFixedNetworks1}, and the fact that $Q_{\Sigma,\lambda=L}\equiv Q_{\Sigma}$ (as defined in Proposition \ref{ReducedFixedNetworks1} and Lemma \ref{SequenceStableMotifs}, respectively), we can prove that the attractors of the Boolean network $\mathcal{B}_{Q_{\Sigma}}$ are the same as the attractors in the equivalence class defined by quasi-attractor $Q$.

\begin{myprop} \label{ReducedFixedNetworks2}
Let $\mathcal{S}_{seq}=\left(\mathcal{M}_1, \ldots, \mathcal{M}_L\right)$ be a sequence of stable motifs of $\mathcal{B}=(V,\Sigma,F)$ obtained by the attractor-finding method, let $Q$ be its associated quasi-attractor, and let $\mathcal{S}_Q$ and $Q_{\Sigma}$ be the set of fixed-state nodes in $Q$ and the state of the fixed-state nodes in $Q$, respectively. Let $\mathcal{A}^{(red)}$ be the equivalence class of attractors with $\mathcal{S}_{red}=\mathcal{S}_Q$ and $\mathcal{S}_{red, \Sigma}=Q_{\Sigma}$ given by Lemma \ref{SequenceStableMotifs}. Then, the attractors of the Boolean network $\mathcal{B}_{Q_{\Sigma}}=(V,\Sigma,F')$ are the same as the attractors in $\mathcal{A}^{(red)}$.
\end{myprop}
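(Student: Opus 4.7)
The plan is to use Proposition~\ref{ReducedFixedNetworks1} with $\lambda=L$ (so that $Q_{\Sigma,L}=Q_\Sigma$) to reduce the problem to comparing attractors of the reduced Boolean network $\mathcal{B}_L$ with attractors of $\mathcal{B}$ belonging to the equivalence class $\mathcal{A}^{(red)}$. Concretely, Proposition~\ref{ReducedFixedNetworks1} already establishes a bijection between attractors of $\mathcal{B}_{Q_\Sigma}$ and attractors of $\mathcal{B}_L$: every attractor of $\mathcal{B}_L$ extends to an attractor of $\mathcal{B}_{Q_\Sigma}$ by appending the fixed values $\sigma_i=b_i$ for $\sigma_i\in Q_\Sigma$, and vice versa. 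So the statement reduces to the identity $\{\text{attractors of }\mathcal{B}_L\}\leftrightarrow \mathcal{A}^{(red)}$.

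For the forward inclusion, I would take an attractor $\tilde{\mathcal{A}}$ of $\mathcal{B}_L$ and extend each of its network states by the values prescribed by $Q_\Sigma$ on $\mathcal{S}_Q$. The key point is that $Q_\Sigma$ is a partial fixed point of $F$: each $\mathcal{M}_j$ is a partial fixed point by Proposition~\ref{PartFixedPoints}, and each $\mathcal{R}_j$ is fixed deterministically by network reduction once $\bigcup_{i\le j}\mathcal{M}_i\cup \mathcal{R}_i$ is in place, so any update of a node in $\mathcal{S}_Q$ in $\mathcal{B}$ returns its $Q_\Sigma$-value, while every update of a node outside $\mathcal{S}_Q$ evaluates its $\mathcal{B}$-function to the same value as its $\mathcal{B}_L$-function $f_i|_{Q_\Sigma}$. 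Hence the extended trajectory is a genuine attractor of $\mathcal{B}$. To see that this attractor lies in $\mathcal{A}^{(red)}$, I would reuse the proof pattern of Lemma~\ref{SequenceStableMotifs}: if $\mathcal{S}_{red}$ of the resulting attractor strictly contained $\mathcal{S}_Q$, then Lemma~\ref{Stablemotifs-reduction2} together with Proposition~\ref{Stablemotifs-fixedpoints} would force the expanded network of $\mathcal{B}_L$ to contain a further stable motif, contradicting the terminality of the sequence $\mathcal{S}_{seq}$.

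For the reverse inclusion, I would take an attractor $\mathcal{A}\in \mathcal{A}^{(red)}$ and invoke Theorem~\ref{Attractorconservation} to obtain some sequence of stable motifs whose reduction fixes exactly the nodes of $\mathcal{S}_{red}(\mathcal{A})=\mathcal{S}_Q$ at their states $Q_\Sigma$ and leaves the remaining nodes live. Because the network-reduction operation depends only on the accumulated set of fixed node states, and not on the order in which the constituent stable motifs are processed, this resulting reduced network coincides with $\mathcal{B}_L$. The restriction of $\mathcal{A}$ to $V\setminus \mathcal{S}_Q$ is then an attractor of $\mathcal{B}_L$, and applying Proposition~\ref{ReducedFixedNetworks1} once more yields the corresponding attractor of $\mathcal{B}_{Q_\Sigma}$, which agrees with $\mathcal{A}$ after re-attaching the $Q_\Sigma$-values.

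The main obstacle I anticipate is the careful verification that an attractor of $\mathcal{B}_L$ really does lift to a dynamically closed attractor of $\mathcal{B}$, rather than only to a set of states with residual transitions escaping the $Q_\Sigma$-hyperplane. This hinges on the partial-fixed-point property of the full union $\bigcup_j(\mathcal{M}_j\cup \mathcal{R}_j)$, not only of each stable motif in isolation; once that closure is in hand, the asynchronous dynamics on $\mathcal{B}$ restricted to the $Q_\Sigma$-hyperplane coincide with those of $\mathcal{B}_L$, and the three-way correspondence among attractors of $\mathcal{B}_L$, $\mathcal{B}_{Q_\Sigma}$, and $\mathcal{A}^{(red)}$ follows immediately.
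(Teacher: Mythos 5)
Your proposal is correct and follows essentially the route the paper intends: the paper justifies this proposition only by the one\hyphenation{-}sentence remark that it follows from Lemma~\ref{SequenceStableMotifs}, Proposition~\ref{ReducedFixedNetworks1} with $\lambda=L$ (so $Q_{\Sigma,L}\equiv Q_\Sigma$), and the attractor-conservation results of section A, which is exactly the reduction-to-$\mathcal{B}_L$ argument you spell out. Your extra care about the closure of the $Q_\Sigma$-hyperplane and the reuse of the Lemma~\ref{SequenceStableMotifs} contradiction to rule out a strictly larger $\mathcal{S}_{red}$ supplies details the paper leaves implicit, and your detour through Theorem~\ref{Attractorconservation} in the reverse inclusion, while slightly more than necessary (membership in $\mathcal{A}^{(red)}$ already forces $\mathcal{A}$ into the $Q_\Sigma$-hyperplane), is sound.
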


Before proceeding, we want to emphasize the role of Lemma \ref{SequenceStableMotifs}, Proposition \ref{ReducedFixedNetworks1}, and Proposition \ref{ReducedFixedNetworks2} in proving that the stable motif control algorithm works. Lemma \ref{SequenceStableMotifs} is the main result of section \hyperref[sec:B]{B}, and shows that a sequence of stable motifs $\mathcal{S}_{red}$ uniquely determines an equivalence class of attractors $\mathcal{A}^{(red)}$. Proposition \ref{ReducedFixedNetworks2} shows that the Boolean network obtained by fixing the node states specified by $\mathcal{S}_{red}$ has the attractors in $\mathcal{A}^{(red)}$ as its only attractors, and is a direct consequence of Lemma \ref{SequenceStableMotifs} and the attractor-finding method (section \hyperref[sec:A]{A}). Lemma \ref{SequenceStableMotifs} and Proposition \ref{ReducedFixedNetworks2} guarantee the effectiveness of the stable motif control algorithm: each sequence of stable motifs $\mathcal{S}_{red}$ obtained from the attractor-finding method singles out one equivalent class of attractors $\mathcal{A}^{(red)}$ (Lemma \ref{SequenceStableMotifs}), and Boolean network obtained by fixing the node states specified by $\mathcal{S}_{red}$ has the attractors in $\mathcal{A}^{(red)}$ as its only attractors (Proposition \ref{ReducedFixedNetworks2}).

Proposition \ref{ReducedFixedNetworks2} shows that the attractors of the reduced Boolean networks obtained using a shortened subsequence of $\mathcal{S}_{red}$ are equivalent to the attractors of the Boolean network obtained by fixing the node states specified by this shortened subsequence. This allows us to consider only the attractors of reduced Boolean networks when showing that a sequence simplified by steps 2 and 3 of the stable motif control algorithm (which we reproduce below) has the same effect as the full sequence $\mathcal{S}_{red}$.

\subsubsection{Simplifying the sequences of stable motifs} \label{sec:S2B3}

For completeness, we reproduce the stable motif control algorithm (see Methods and Text \hyperref[sec:S7]{S7} for more details):
\begin{itemize}
\item[-] \textit{Step 1}: Identify the sequences of stable motifs that lead to $\mathcal{A}$. These can be obtained from the stable motif succession diagram (see Fig. \hyperref[fig:ReductionMethod]{2}) by choosing the attractor of interest in the right-most part and selecting all of the attractor's predecessors in the succession diagram.

\item[-] \textit{Step 2}: Shorten each sequence $\mathcal{S} \in Sequences$ by identifying the minimum number of motifs in $\mathcal{S}$ required for reaching $\mathcal{A}$ and removing the remaining motifs from the sequence. This minimum number of motifs can be identified from the stable motif succession diagram (Fig. \hyperref[fig:ReductionMethod]{2}); they are the motifs after which all consequent motif choices lead to the same attractor $\mathcal{A}$.

\item[-] \textit{Step 3}: For each stable motif state $\mathcal{M}=\left(\sigma_{\mathcal{M}_1} = b_{\mathcal{M}_1}, \sigma_{\mathcal{M}_2} = b_{\mathcal{M}_2} , \ldots, \sigma_{\mathcal{M}_m} = b_{\mathcal{M}_m} \right)$ corresponding to node $v$, find the subsets of stable motif's states $O=\left\{M_i\right\}, M_i \subseteq \mathcal{M}$ that, when fixed, are enough to force the state of the whole motif into $\mathcal{M}$. At worst, there will only be one subset, which will equal the whole stable motif's state $\mathcal{M}$. If any of these subsets is fully contained in another subset, remove the larger of the subsets. In each stable motif sequence $\mathcal{S}=\left(\mathcal{M}_1, \ldots, \mathcal{M}_L\right)$, substitute every stable motif $\mathcal{M}_j$ with the subsets of the stable motif states obtained, that is, $\mathcal{S}=\left(O_1, \ldots, O_L\right)$.

\item[-] \textit{Step 4}: For each sequence $\mathcal{S}=\left(O_1, \ldots, O_L\right)$ create a set of states $\mathcal{C}$ by choosing one of the subsets of stable motif's states $M_{k_j}$ in each $O_j$ and taking their union, that is, $\mathcal{C}=M_{k_1} \cup \cdots \cup M_{k_L}, M_{k_j} \in O_j$. The network control set for attractor $\mathcal{A}$ is the set of node states $C_\mathcal{A}=\left\{\mathcal{C}_i\right\}$ obtained from all possible combinations of subsets of stable motif's states $M_{k_j}$'s for every sequence $\mathcal{S}$. To avoid any redundancy, we additionally prune $C_\mathcal{A}$ of duplicates and remove each set of node states $\mathcal{C}_i$ which is a superset of any of the other sets of node states $\mathcal{C}_j$ (i.e. $\mathcal{C}_j \subset \mathcal{C}_i$).
\end{itemize}

To justify that step 2 of the stable motif control algorithm yields a simplified sequence that leads to the same equivalence class of attractors as the original sequence, it suffices to show that a Boolean network with a stable motif succession diagram in which each sequence of stable motifs leads to the same equivalence class of attractors $\mathcal{A}^{(red)}$ has $\mathcal{A}^{(red)}$ as its only equivalence class of attractors.

\begin{myprop} \label{SimplifySequenceStep1}
Let $\mathcal{B}=(V,\Sigma,F)$ be a Boolean network in which all sequences of stable motifs obtained from the attractor-finding method lead to the same equivalence class of attractors $\mathcal{A}^{(red)}$. Then, $\mathcal{A}^{(red)}$ is the only equivalence class of attractors in $\mathcal{B}$.
\end{myprop}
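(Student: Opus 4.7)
The plan is to show that any equivalence class of attractors of $\mathcal{B}$ must coincide with $\mathcal{A}^{(red)}$, from which the uniqueness claim follows immediately. The argument is a short chaining of Theorem \ref{Attractorconservation} (every attractor of $\mathcal{B}$ is represented by a sequence of stable motifs through the attractor-finding method) with Lemma \ref{SequenceStableMotifs} (each such sequence picks out exactly one equivalence class).

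First I would pick an arbitrary equivalence class $\mathcal{A}'^{(red)} \in \mathbb{A}^{(red)}$ of $\mathcal{B}$ and any representative attractor $\mathcal{A}' \in \mathcal{A}'^{(red)}$, and let $\mathcal{S}'_{red}$ together with the corresponding node states $\mathcal{S}'_{red,\Sigma}$ be the defining data of $\mathcal{A}'^{(red)}$. Applying Theorem \ref{Attractorconservation} to $\mathcal{A}'$ yields a set of stable motifs whose network reduction fixes precisely the nodes of $\mathcal{S}'_{red}$ on their states in $\mathcal{A}'$, while every remaining node is part of, or downstream of, an oscillating motif. Reading these motifs in the order produced by the iterative algorithm of Section \hyperref[sec:S2A6]{A.6} gives a sequence $\mathcal{S}'_{seq}$ which is a complete branch of the stable motif succession diagram of $\mathcal{B}$, so its associated quasi-attractor $Q'$ satisfies $\mathcal{S}_{Q'} = \mathcal{S}'_{red}$ and $Q'_{\Sigma} = \mathcal{S}'_{red,\Sigma}$.

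Next I would invoke Lemma \ref{SequenceStableMotifs} on $\mathcal{S}'_{seq}$. It states that $\mathcal{S}'_{seq}$ uniquely determines one equivalence class in $\mathbb{A}^{(red)}$, and by matching defining data this class must be $\mathcal{A}'^{(red)}$ itself. By the hypothesis of the proposition, however, every branch of the succession diagram of $\mathcal{B}$ leads to $\mathcal{A}^{(red)}$; in particular $\mathcal{S}'_{seq}$ leads to $\mathcal{A}^{(red)}$. Combining uniqueness with this assumption forces $\mathcal{A}'^{(red)} = \mathcal{A}^{(red)}$, and since $\mathcal{A}'^{(red)}$ was arbitrary, $\mathcal{A}^{(red)}$ is the only equivalence class of attractors of $\mathcal{B}$.

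The main point that needs care is verifying that the sequence extracted from Theorem \ref{Attractorconservation} really is a full (terminated) branch of the succession diagram rather than a partial path, i.e., that after fixing $\mathcal{S}'_{red}$ on its states in $\mathcal{A}'$ the residual reduced network contains no further stable motifs. The maximality of $\mathcal{S}_{red}$ in its construction in Section \hyperref[sec:S2A7]{A.7}, together with Lemma \ref{Stablemotifs-reduction2}, should supply this: a stable motif in the residual reduced network would either enlarge $\mathcal{S}_{red}$ (contradicting maximality) or be composed solely of nodes that lie downstream of an oscillating motif, which cannot support a stable motif consistent with Lemma \ref{Oscillatingmotifs-reduction} and the definition in Section \hyperref[sec:S2A2]{A.2}. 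I expect this bookkeeping to be the only nontrivial step; once it is in place, the proposition follows in one line.
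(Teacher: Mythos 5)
Your proposal is correct and follows essentially the same route as the paper: the paper's proof also rests on Theorem \ref{Attractorconservation} guaranteeing that every attractor of $\mathcal{B}$ is reached by some sequence of stable motifs, and then concludes (by contradiction rather than your direct phrasing) that the hypothesis forces every such class to coincide with $\mathcal{A}^{(red)}$. Your additional verification that the extracted sequence is a terminated branch of the succession diagram is a detail the paper leaves implicit, but it does not change the argument.
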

\begin{proof}
By Theorem \ref{Attractorconservation}, every attractor $\mathcal{A}$ has a sequence of stable motifs for which the attractor-finding method fixes all the nodes in the $\mathcal{S}_{red}$ obtained from $\mathcal{A}$ to their fixed state in $\mathcal{A}$. Let $\mathcal{A}'$ be an attractor not in the equivalence class $\mathcal{A}^{(red)}$. Then, Theorem \ref{Attractorconservation} guarantees that there must be a sequence of stable motifs that lead to the equivalence class specified by $\mathcal{A}'$. But, this is a contradiction, since all sequences of stable motifs lead to the equivalence class of attractors $\mathcal{A}^{(red)}$. Hence, $\mathcal{A}^{(red)}$ is the only equivalence class of attractors in $\mathcal{B}$.
\end{proof}

To justify step 3 of the stable motif control algorithm we need to show that, given a Boolean network with a stable motif $\mathcal{M}$, the Boolean network obtained by fixing the state of the nodes specified in the stable motif $\mathcal{M}$ has the same attractors as the the Boolean network obtained by fixing the state of the nodes given by the subsets of $\mathcal{M}$ specified by step 3.

\begin{myprop} \label{SimplifySequenceStep2}
Let $\mathcal{B}=(V,\Sigma,F)$ be a Boolean network, let $\mathcal{M}$ be a stable motif of $\mathcal{B}$, and let $\mathcal{B}_{\mathcal{M}}$ be the Boolean network $\mathcal{B}$ with the node states specified in $\mathcal{M}$ fixed. Let $\mathcal{M}'\subset\mathcal{M}$ be a set of node states such that the network reduction of $\mathcal{B}$ using $\mathcal{M}'$ and network reduction of network reduction of $\mathcal{B}$ using $\mathcal{M}$ yield the same reduced network $\mathcal{B}_{red}$, and let $\mathcal{B}_{\mathcal{M}'}$ be the Boolean network $\mathcal{B}$ with the node states specified in $\mathcal{M}'$ fixed. Then, the attractors in $\mathcal{B}_{\mathcal{M}}$ are the same as the attractors of the Boolean network $\mathcal{B}_{\mathcal{M'}}$.
\end{myprop}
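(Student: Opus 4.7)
The plan is to introduce a common reference network $\mathcal{B}_Q$ and to show that the attractors of both $\mathcal{B}_{\mathcal{M}}$ and $\mathcal{B}_{\mathcal{M}'}$ coincide with those of $\mathcal{B}_Q$, after which equality between the two follows by transitivity. Let $\mathcal{R}_{\mathcal{M}}$ denote the set of additional node states fixed by the iterative propagation step of network reduction when started from $\mathcal{M}$, and define $\mathcal{R}_{\mathcal{M}'}$ analogously. Since by hypothesis both reductions terminate at the same $\mathcal{B}_{red}$, the total fixed-state sets must agree: $Q \equiv \mathcal{M}\cup\mathcal{R}_{\mathcal{M}} = \mathcal{M}'\cup\mathcal{R}_{\mathcal{M}'}$. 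Define $\mathcal{B}_Q$ to be the Boolean network obtained from $\mathcal{B}$ by fixing every node state in $Q$, so that $\mathcal{B}_Q$ is a common "simplified" endpoint for both reductions.

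I would treat the case of $\mathcal{B}_{\mathcal{M}'}$ in detail; the case of $\mathcal{B}_{\mathcal{M}}$ is identical with $\mathcal{M}'$ replaced by $\mathcal{M}$ (and shorter propagation chains). Stratify $Q$ into propagation layers $Q_0 = \mathcal{M}'$ and $Q_{k+1}$ the nodes whose Boolean function becomes a constant once restricted to the prescribed values on $Q_0\cup\cdots\cup Q_k$. The central claim is that in every state of every general-asynchronous attractor of $\mathcal{B}_{\mathcal{M}'}$, each $v_i\in Q$ is pinned to its $Q$-value $b_i$. I would prove this by induction on $k$, invoking at each layer the following basic fact: if the Boolean function of a node $v_i$ is identically the constant $b_i$ (unconditionally, or after conditioning on layers already known to be pinned throughout the attractor), then $\sigma_i = b_i$ in every attractor state. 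This pinning fact follows because the subset of the attractor on which $\sigma_i = b_i$ is forward-closed (an update of $v_i$ keeps $\sigma_i$ at $b_i$, and updates of other nodes do not touch $\sigma_i$) and nonempty (updating $v_i$ from any attractor state lands there by closure), so by minimality of the attractor as a forward-closed set it must equal the whole attractor.

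With every node of $Q$ pinned throughout every attractor of $\mathcal{B}_{\mathcal{M}'}$, the transition relation of $\mathcal{B}_{\mathcal{M}'}$ restricted to the subspace $\{\sigma_i = b_i : v_i\in Q\}$ coincides with that of $\mathcal{B}_Q$ on the same subspace, because $f_j|_{\mathcal{M}'}$ and $f_j|_Q$ agree whenever the nodes of $Q$ hold their $Q$-values (the former simplifies to the latter under that restriction). This yields a bijection between the attractors of $\mathcal{B}_{\mathcal{M}'}$ and those of $\mathcal{B}_Q$. Repeating the argument for $\mathcal{B}_{\mathcal{M}}$, whose propagation layers $\mathcal{M},\mathcal{R}_{\mathcal{M},1},\ldots$ exhaust the same $Q$, gives the analogous bijection with $\mathcal{B}_Q$; composing the two bijections proves the proposition.

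The main obstacle I anticipate is a clean formulation and proof of the constant-function pinning lemma used in the induction, and of its use across successive layers. In particular, one must argue correctly that the asynchronous nature of the updates does not interfere with the layered pinning: a node $v_j\in Q_{k+1}$ may in principle be updated before nodes in earlier layers have settled, but because the conclusion concerns attractor states rather than transient trajectories, and because the induction hypothesis already pins layers $Q_0,\ldots,Q_k$ throughout every attractor, the function $f_j|_{\mathcal{M}'}$ evaluated on any attractor state is guaranteed to equal its propagation constant, so the pinning lemma applies. Once this technical step is in place, the remainder of the argument is essentially bookkeeping that mirrors the reasoning used in the proof sketch of Proposition \ref{ReducedFixedNetworks1}.
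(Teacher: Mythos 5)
Your proof is correct and follows essentially the same route as the paper's: a layered propagation argument showing that each successively-fixed node state must be pinned in every attractor (via the observation that a node whose function is constant on the attractor admits transitions into, but never out of, the pinned subspace), followed by identifying the transition relations of the two networks on that pinned subspace. The only cosmetic difference is that you route both $\mathcal{B}_{\mathcal{M}}$ and $\mathcal{B}_{\mathcal{M}'}$ through a common reference network $\mathcal{B}_Q$ fixing the entire propagated set $Q$, whereas the paper pins only the states of $\mathcal{M}$ in the attractors of $\mathcal{B}_{\mathcal{M}'}$ and compares directly to $\mathcal{B}_{\mathcal{M}}$; both work, and yours has the minor virtue that your strata coincide exactly with the layers of the network reduction procedure.
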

Let us sketch the proof of this proposition. Since $\mathcal{M}'\subset\mathcal{M}$ yields the same reduced network as $\mathcal{M}$ then $f_i|_{\mathcal{M}'}=b_i$ for a set of $\sigma_i$'s, which we denote $\mathcal{M}'_1$, such that $\sigma_i \in \mathcal{M}-\mathcal{M}'$. If $\mathcal{M}-\mathcal{M}'-\mathcal{M}'_1$ is not empty, we can follow the same reasoning, and find $f_i|_{\mathcal{M}'}|_{\mathcal{M}'_1}=b_i$ for a set of $\sigma_i$'s, which we denote $\mathcal{M}'_2$, such that $\sigma_i \in \mathcal{M}-\mathcal{M}'-\mathcal{M}'_1$. If $\mathcal{M}-\mathcal{M}'-\mathcal{M}'_1-\mathcal{M}'_2$ is not empty, and we follow the same reasoning, we can do this iteratively until $\mathcal{M}-\mathcal{M}'-\mathcal{M}'_1-\ldots-\mathcal{M}'_m$ is empty. Defining $\mathcal{M}'_0=\mathcal{M}'$, the result is a group of $\mathcal{M}'_i, i=0,1,\ldots,m$, which are such that
\begin{align*}
\mathcal{M}-\mathcal{M}'_0-\mathcal{M}'_1-\ldots-\mathcal{M}'_{j} =\mathcal{M}''_{j} \neq \varnothing &, \ j=0,1,\ldots,m-1 \\
\mathcal{M}-\mathcal{M}'_0-\mathcal{M}'_1-\ldots-\mathcal{M}'_{m} =\mathcal{M}''_{m} =\varnothing &,\\
f_i|_{\mathcal{M}'_0}|_{\mathcal{M}'_1} | \ldots |_{\mathcal{M}'_k} = b_i, \ \sigma_i \in \mathcal{M}'_{k+1}&, \ k=0, 1,\ldots,m-1.
\end{align*}
The Boolean network $\mathcal{B}_{\mathcal{M}'}$ has Boolean functions given by $f'_i=f_i|_{\mathcal{M}'}, \forall \sigma_i \not\in \mathcal{M}'$ and $f'_i=b_i, \forall \sigma_i \in \mathcal{M}'$. This implies that there is always transition from any network state with at least one $\sigma_j \neq b_j, \sigma_j \in \mathcal{M}'$ to a network state with $\sigma_j=b_j, \sigma_j \in \mathcal{M}'$, but not the other way around. This implies that the attractors in $\mathcal{B}_{\mathcal{M}'}$ must have $\sigma_j=b_j, \forall \sigma_j \in \mathcal{M}'$. Similarly, since $f'_i=f_i|_{\mathcal{M}'}=b_i, \forall \sigma_i \in \mathcal{M}'_1$, then the attractors in $\mathcal{B}_{\mathcal{M}'}$ must have $\sigma_j=b_j, \forall \sigma_j \in \mathcal{M}'_1$.

Since we are interested in the attractors, we can restrict ourselves to network states with $\sigma_j=b_j, \forall \sigma_j \in \mathcal{M}'\cup\mathcal{M}'_{1}$. The Boolean functions evaluated at network states with $\sigma_j=b_j \in \mathcal{M}'\cup\mathcal{M}'_{1}$ are equivalent to $f_i|_{\mathcal{M}_0'}|_{\mathcal{M}_1'}$, which implies $f_i|_{\mathcal{M}_0'}|_{\mathcal{M}_1'}=b_i, \forall \sigma_i \in \mathcal{M}'_2$ and that the attractors in $\mathcal{B}_{\mathcal{M}'}$ must have $\sigma_j=b_j, \forall \sigma_j \in \mathcal{M}'_2$. Doing this iteratively, we get that the attractors in $\mathcal{B}_{\mathcal{M}'}$ must have $\sigma_j=b_j, \forall \sigma_j \in \mathcal{M}'_0\cup\mathcal{M}'_1\cup\ldots\cup\mathcal{M}'_m\equiv\mathcal{M}$, and that the Boolean functions evaluated at the network states where the attractors can be ($\sigma_j=b_j, \forall \sigma_j \in \mathcal{M}$) are given by $f'_i=f_i|_{\mathcal{M}'}, \forall \sigma_i \not\in \mathcal{M}$ and $f'_i=b_i, \forall \sigma_i \in \mathcal{M}$. Since these Boolean functions at the network states where the attractor of $\mathcal{B}_{\mathcal{M}'}$ can be are equivalent to the Boolean functions of $\mathcal{B}_{\mathcal{M}}$, then the attractors of both must be the same.

\clearpage

\section*{Text S3. Time complexity and mitigation techniques for the attractor-finding method and the stable motif control approach} \label{sec:S3}

In this part we discuss the time complexity of our methods, the worst case scenarios, and mitigation techniques for when our method takes a prohibitively long amount of time. This discussion is based on our experience with Boolean models of intracellular networks \cite{Colomoto} and random Boolean networks \cite{KauffmanOriginal,KadanoffReview}. For a detailed description of each step in the attractor-finding method see Text \hyperref[sec:S2]{S2} and ref. \cite{ReductionChaos}. For an algorithmic description of each step in the stable motif control approach see Text \hyperref[sec:S7]{S7}.

In the following we use $V=\left( v_1, v_2, \ldots, v_N \right)$ to represent the $N$ nodes of the Boolean network, $\sigma_i, i=1, 2, \ldots, N$ to represent the state of node $v_i$, $\Sigma=\left(\sigma_1, \sigma_2, \ldots, \sigma_N \right)$ to represent the states of all nodes (also called a network state), $f_i, i=1, 2, \ldots, N$ to represent the Boolean function of node $v_i$, and $F=\left(f_1, f_2 , \ldots, f_N \right)$ to represent all the Boolean functions. We use $f(\Sigma)$ to denote a Boolean function $f$ evaluated at a network state $\Sigma$, and $f|_P$ to denote a Boolean function where only the state of a subset of nodes $P=\left\{\sigma_{p_1}, \sigma_{p_2}, \ldots, \sigma_{p_l}\right\}$ is evaluated. We commonly use $b_i$ to indicate that a specific value for node state $\sigma_i$ is chosen, that is, $\sigma_i=b_i$.

\setcounter{subsection}{0}
\subsection{Expanded network/network reduction attractor-finding method of ref. \cite{ReductionChaos}} \label{sec:S3A}

In this section we restrict ourselves to the steps of the attractor-finding method which we have found to be the most time consuming, namely, the creation of the expanded network representation and the identification of stable motifs from the expanded network representation.

\subsubsection{The expanded network representation} \label{sec:S3A1}

The expanded network representation requires each Boolean function $f$ to be written in a disjunctive normal form:
  \begin{align*}
  f =&\left(s_{1} \hbox{ AND } s_{2} \hbox{ AND } \cdots \hbox{ AND } s_{k} \right)\hbox{ OR }\left(s_{k+1} \hbox{ AND } s_{k+2} \hbox{ AND } \cdots s_{l} \right) \\
  & \hbox{ OR } \cdots  \hbox{ OR } \left(s_{m} \hbox{ AND } s_{m+1} \hbox{ AND } \cdots \hbox{ AND } s_{n} \right),
    \end{align*}
where the $s_{j}$'s are either the states of one of the input nodes of $f$, or one of these states' negations. Additionally, we require that if for $M$, denoting a state of a subset of the inputs of $f$, one has $f|_M=1$ (regardless of the states of the remaining inputs), then the disjunctive form of $f_i$ must have at least one of its conjunctive clauses equal to 1 when evaluated at the state $M$ of this subset of nodes. In logic minimization terms, this is equivalent to requiring that $f$ is written as the sum of all of its prime implicants \cite{McCluskey,Quine1,Quine2,LogicMinRev}.

The number of prime implicants of a Boolean function $f$ of $K$ inputs is known to be at most $O(3^K/\sqrt{K})$ \cite{NumPrimeImplicants1}. For a Boolean function that can be expressed as a disjunctive normal form of $m$ conjunctive clauses, the number of prime implicants is bounded by $2^m-1$ \cite{NumPrimeImplicants2}. The Boolean functions in the models we use have a relatively low number of inputs ($K\leq10$), thus the simplest algorithm (the Quine–McCluskey algorithm \cite{McCluskey,Quine1,Quine2}) is sufficient to find the disjunctive normal form. If one requires Boolean functions with a larger number of inputs, more sophisticated algorithms should be used (e.g. \cite{LogicMinRev,PrimeImplPolyn} and references within).

\subsubsection{Identifying stable motifs from the expanded network} \label{sec:S3A2}

A stable motif $M$ in the expanded network is any of the smallest strongly connected components (SCCs) in the expanded network representation which satisfy these two properties:

\begin{enumerate}
\item If $M$ contains a normal node $v_i$ (complementary node $\overline{v}_i$) then $M$ does not contain its corresponding complementary node $\overline{v}_i$ (normal node $v_i$).
\item If $M$ contains a composite node $v^{(comp)}$, then all input nodes of $v^{(comp)}$ are elements of $M$.
\end{enumerate}

The identification of stable motifs tends to be the most time-consuming part of our method. Specifically, it is the identification of stable motifs of the full network model that we have found to be the most time-consuming.

We are interested in identifying the smallest strongly connected components subject to two restrictions. Since most algorithms to identify stable connected components restrict themselves to the largest strongly connected components, we need to devise a way to enumerate all strongly connected components. To do this, we use the fact that any strongly connected component is composed of cycles. We first identify all directed cycles in the expanded network that do not contain both a normal node and its complementary node, which guarantees that property 1 of stable motifs is satisfied. For each of these cycles, we check whether it satisfies property 2; if it does, then the cycle is a stable motif.

For the cycles that do not satisfy property 2, we form all unions of cycles which share composite nodes, while discarding unions that do not satisfy property 1, until they satisfy property 2 for the composite node being considered. The result is a set of SCCs that satisfy property 1 and property 2 for at least one composite node inside the SCC. For each of these SCCs, we check whether it satisfies property 2 for all composite nodes; if it does, then this SCC is a stable motif candidate. For the SCCs that do not satisfy property 2 for all composite nodes, we repeat the same step as before, that is, we form all unions of SCCs which share composite nodes, while discarding the unions that do not satisfy property 1, until they satisfy property 2 for the composite node being considered. We do this iteratively, until only SCCs that satisfy property 1 and property 2 for all composite nodes are left. Finally, we take all SCCs obtained during the process that satisfy property 1 and 2, and leave only SCCs which do not have other SCCs as a subset, which yields the smallest SCCs that satisfy property 1 and 2, i.e., the stable motifs.

In the following we discuss the complexity of each step of the stable motif identification algorithm.

\subsubsection{Complexity of enumerating all cycles in the expanded network} \label{sec:S3A3}

To identify stable motifs, we first search the expanded network $G_{exp}=(V_{exp},E_{exp},F_{exp})$ for all directed cycles that do not contain both a normal node and its complementary node using a modified version of Johnson's cycle algorithm \cite{CycleAlgorithm}. The original Johnson's cycle algorithm for a graph $G=(V,E)$ has a time complexity $O\left((|V|+|E|)(C+1)\right)$, where $|V|$ is the number of nodes, $|E|$ is the number of edges, and $C$ is the total number of directed cycles in $G$. In our modified version of Johnson's algorithm, a normal/complementary node can only be added to the stack from which cycles are obtained if its respective complementary/normal node is not already part of the stack. Consequently, this modified version of Johnson's cycle algorithm has a similar time complexity as the original algorithm, specifically, $O\left((|V_{exp}|+|E_{exp}|)(c+1+|V_{exp}|+|E_{exp}|)\right)$, where $|V_{exp}|$ is the number of nodes in $G_{exp}$, $|E_{exp}|$ is the number of edges in $G_{exp}$, and $c$ is the number of directed cycles in $G_{exp}$ that satisfy property 1 of stable motifs. The number of directed cycles that satisfy property 1, $c$, is typically much smaller than total the number of directed cycles, $C$. The limiting factor of the algorithm is $c$, which is typically smaller than the limiting factor of a brute-force search to attractor-finding, which is limited by the size of the network state space $2^N$. For our test cases, the resulting number of cycles is $c_{TLGL}=18,241$ and $c_{Th}=63$ for the T-LGL leukemia network model and the helper T cell network model, respectively.

To illustrate how the number of directed cycles $C$ varies for different networks, we consider the worst-case scenario and the typical behavior of $C$ in terms of directed Erd\H{o}s-R\'{e}nyi graphs. Let $G=(V,E)$ be a directed network with $|V|=n$. The worst case scenario for the number of cycles is a fully connected network, for which $C$ grows as $O((n-1)!)$. For a directed Erd\H{o}s-R\'{e}nyi graph $G(|V|,p=K/n)$, where $K$ is the average degree of $G$, the probability to find a cycle of length $L$ is $(K/n)^L$, while the number of such cycles is $n!/[2L(n-L)!]$. The average number cycles of length $L$ is then
\begin{equation}
  \overline{C}_L=\frac{1}{2L}\frac{n!}{(n-L)!}\left(\frac{K}{n}\right)^L,
\end{equation}
which scales as $O(K^L)$ for small $L$ and large $n$, and as $O(K^n/\sqrt{n})$ for large $L$ and $n$. We note that even though the average number of large cycles for an Erd\H{o}s-R\'{e}nyi graph grows exponentially with $n$, this average is dominated by networks with large $L\sim n$, which have a low probability of appearing ($\sim(K/n)^n$) but have a large number of ways to have such cycles ($\sim(n-1)!/2$).

As the last paragraph illustrates, in some cases the number of directed cycles $c$ of the expanded network is too large to enumerate. For these cases, we propose using a cutoff $L_{max}$ in the modified Johnson's cycle algorithm for the maximum number of nodes in the stack from which cycles are obtained. The result of this cutoff is that only cycles with $L < L_{max}$ are output by the cycle algorithm. Unfortunately, this can result in overlooking cycles that are required to find all stable motifs, something which one should take into consideration. We should emphasize that even if a cutoff is used, we can still identify control interventions using a modified version of the stable motif control algorithm as long as the attractor of interest appears in the stable motif succession diagram obtained with the cutoff (see Text \hyperref[sec:S3]{S3} section \hyperref[sec:S3B3]{B.3}).

\subsubsection{Complexity of forming strongly connected components by the union of cycles} \label{sec:S3A4}

For each of the directed cycles in the expanded network that satisfy property 1, we check whether it satisfies property 2, and if it does, then the cycle is a stable motif. If it doesn't, we check that for each composite node $v^{(comp)}$ in the cycle, the complement of the inputs of $v^{(comp)}$ are not part of the cycle. The cycles that do not satisfy this condition are discarded, since they cannot form part of an SCC satisfying both property 1 and property 2. For the resulting set of cycles, we check whether for each composite node $v^{(comp)}$ the union of all cycles that have $v^{(comp)}$ as an element contains all inputs of $v^{(comp)}$. For the composite nodes for which this is not true, we discard all cycles containing such composite nodes, since SCCs containing such cycles cannot satisfy property 2. The result is a set of directed cycles in the expanded network that satisfy property 1 and that have the potential of being able to satisfy property 2. We call this set of directed cycles $S_{simp}$, and denote $c_{simp}=|S_{simp}|$ as the number of such cycles. For our test cases we have $c_{simp,TLGL}=670$ and $c_{simp,Th}=63$ for the T-LGL leukemia network model and the helper T cell network model, respectively.

At worse, all combinations of the directed cycles in $S_{simp}$ can form an SCC that satisfies property 1 and 2, which makes the worst-case scenario $O(2^{C_{simp}})$. In practice, we find that the restriction of satisfying both property 1 and property 2 makes the number of possible combinations much smaller.

To illustrate this, let us describe in detail the process we use to find the stable motifs from $S_{simp}$. Given a composite node $v^{(comp)}$, we form all unions of cycles in $S_{simp}$ which have the first input of $v^{(comp)}$ but do not have the second input of $v^{(comp)}$, and the cycles which have the second input of $v^{(comp)}$ but do not have the first input of $v^{(comp)}$, while discarding unions that do not satisfy property 1. The result is a group of SCCs that have both the first and second input of $v^{(comp)}$. We then form all unions of cycles/SCCs that have both the first and second input of $v^{(comp)}$ and the cycles which have the third input of $v^{(comp)}$ but do not have the first and second input of $v^{(comp)}$, while discarding the unions that do not satisfy property 1. We do this iteratively for all the inputs of $v^{(comp)}$. The result is $S_{simp,1}$, a set of SCCs that satisfy property 1 and property 2 for at least one composite node inside the SCC.

For each SCC in $S_{simp,1}$, we check whether it satisfies property 2 for all composite nodes; if it does, then this SCC is a stable motif candidate. If it does not, we check that for each composite node $v^{(comp)}$ in the SCC, the complement of the inputs of $v^{(comp)}$ are not part of the SCC. The SCCs that do not satisfy this condition are discarded from $S_{simp,1}$, since they cannot form part of an SCC satisfying both property 1 and property 2. For the resulting $S_{simp,1}$, we check whether for each composite node $v^{(comp)}$ the union of all SCCs that have $v^{(comp)}$ as an element contains all inputs of $v^{(comp)}$. For the composite nodes for which this is not true, we discard all SCCs in $S_{simp,1}$ containing such composite nodes, since SCCs containing such SCCs cannot satisfy property 2.

For the SCCs in $S_{simp,1}$, we repeat the same steps as before, that is, we form all unions of SCCs which share composite nodes, while discarding unions that do not satisfy property 1, until they satisfy property 2 for the composite node being considered. The result is a set of SCCs which we call $S_{simp,2}$, which we prune of SCCs that cannot satisfy both property 1 and property 2. We do this iteratively, until only SCCs that satisfy property 1 and property 2 for all composite nodes are left. Finally, we take all SCCs obtained during the process that satisfy property 1 and 2, and leave only SCCs which do not have other SCCs as a subset, which yields the smallest SCCs that satisfy property 1 and 2, i.e., the stable motifs.

As the worst case scenario illustrates, the number of combinations that can potentially become stable motifs can be too large to enumerate. For these cases, we propose using a cutoff $L_{max}$ for the max number of nodes allowed in an SCC when taking the unions of SCCs/cycles. Unfortunately, like in the case of the cutoff for cycles, this can result in overlooking SCCs that are stable motifs or required to find all stable motifs, something which one should take into consideration. We should emphasize that even if a cutoff is used, we can still identify control interventions using a modified version of the stable motif control algorithm as long as the attractor of interest appears in the stable motif succession diagram obtained with the cutoff (see Text \hyperref[sec:S3]{S3} section \hyperref[sec:S3B3]{B.3}).

\subsection{The stable motif control method} \label{sec:S3B}

A stable motif succession diagram can be represented as a directed graph $G_{diag}=(V_{diag},E_{diag})$ together with a dictionary $L$. The nodes $V_{diag}=\left(v_{diag,1}, v_{diag,2}, \ldots, v_{diag,n}\right)$ denote either stable motifs $\mathcal{M}_i$ (if the node has at least one outgoing edge) or attractors $\mathcal{A}_i$ (if the node has no outgoing edges). The dictionary $L$ stores the type of object (stable motif or attractor) of each node in $V_{diag}$ denotes. Each edge in $E_{diag}$ connects a stable motif with the stable motifs or attractor that can be obtained from the reduced network associated to it; if network reduction leads to a simplified network with at least one stable motif, then the edges point from the stable motif being considered to the stable motifs of the simplified network, otherwise, the edge points towards an attractor. It should be noted that the same stable motif/attractor may be assigned to more than one node in $V_{diag}$.

For completeness, we reproduce the stable motif control algorithm (see Methods and Text \hyperref[sec:S7]{S7} for more details):
\begin{itemize}
\item[-] \textit{Step 1}: Identify the sequences of stable motifs that lead to $\mathcal{A}$. These can be obtained from the stable motif succession diagram (see Fig. \hyperref[fig:ReductionMethod]{2}) by choosing the attractor of interest in the right-most part and selecting all of the attractor's predecessors in the succession diagram.

\item[-] \textit{Step 2}: Shorten each sequence $\mathcal{S} \in Sequences$ by identifying the minimum number of motifs in $\mathcal{S}$ required for reaching $\mathcal{A}$ and removing the remaining motifs from the sequence. This minimum number of motifs can be identified from the stable motif succession diagram (Fig. \hyperref[fig:ReductionMethod]{2}); they are the motifs after which all consequent motif choices lead to the same attractor $\mathcal{A}$.

\item[-] \textit{Step 3}: For each stable motif state $\mathcal{M}=\left(\sigma_{\mathcal{M}_1} = b_{\mathcal{M}_1}, \sigma_{\mathcal{M}_2} = b_{\mathcal{M}_2} , \ldots, \sigma_{\mathcal{M}_m} = b_{\mathcal{M}_m} \right)$ corresponding to node $v$, find the subsets of stable motif's states $O=\left\{M_i\right\}, M_i \subseteq \mathcal{M}$ that, when fixed, are enough to force the state of the whole motif into $\mathcal{M}$. At worst, there will only be one subset, which will equal the whole stable motif's state $\mathcal{M}$. If any of these subsets is fully contained in another subset, remove the larger of the subsets. In each stable motif sequence $\mathcal{S}=\left(\mathcal{M}_1, \ldots, \mathcal{M}_L\right)$, substitute every stable motif $\mathcal{M}_j$ with the subsets of the stable motif states obtained, that is, $\mathcal{S}=\left(O_1, \ldots, O_L\right)$.

\item[-] \textit{Step 4}: For each sequence $\mathcal{S}=\left(O_1, \ldots, O_L\right)$ create a set of states $\mathcal{C}$ by choosing one of the subsets of stable motif's states $M_{k_j}$ in each $O_j$ and taking their union, that is, $\mathcal{C}=M_{k_1} \cup \cdots \cup M_{k_L}, M_{k_j} \in O_j$. The network control set for attractor $\mathcal{A}$ is the set of node states $C_\mathcal{A}=\left\{\mathcal{C}_i\right\}$ obtained from all possible combinations of subsets of stable motif's states $M_{k_j}$'s for every sequence $\mathcal{S}$. To avoid any redundancy, we additionally prune $C_\mathcal{A}$ of duplicates and remove each set of node states $\mathcal{C}_i$ which is a superset of any of the other sets of node states $\mathcal{C}_j$ (i.e. $\mathcal{C}_j \subset \mathcal{C}_i$).
\end{itemize}

In this section we restrict ourselves to the steps of the stable motif control method which we have found to be the most time consuming, namely, identifying the sequences of stable motifs that lead to an attractor (step 1) and finding the subsets of stable motif's states that fix the state of the whole stable motif (step 3). We also consider the identification of stable motif control interventions for the case when only part of the stable motif succession diagram is known.

\subsubsection{Complexity of identifying the sequences of stable motifs that lead to an attractor} \label{sec:S3B1}

For a stable motif succession diagram $G_{diag}$ with $n_{sm}$ stable motifs, the worst case scenario in terms of the number sequences $n_{seq}$ is when all permutations of the $n_{sm}$ stable motifs form a sequence. In this case, the number of sequences $n_{seq}$ is $O(n_{sm}!)$. In practice, the number of motif we find tends to be small and/or many of the motifs are not independent from each other, in which case the effective $n_{sm}$ is much smaller. This allows us to manage even the worst case scenario as long as $n_{sm}$ or the effective $n_{sm}$ is smaller than ten. For example, $n_{sm,TLGL}=7$ and $n_{seq,TLGL}=144$ for the T-LGL leukemia network model, and $n_{sm,Th}=17$ and $n_{seq,Th}=697$ for the helper T cell network model. Note that the succession diagram of the T-LGL leukemia model shown in Fig. \hyperref[fig:ReductionTLGL]{4} does not include the part of the diagram associated with node P2, as motifs with P2 do not give rise to new motifs and do not influence the resulting attractor of any sequences in the succession diagram.

For the case of a network in which the number of sequences becomes computationally intractable, we suggest the following technique to simplify the stable motif succession diagram as the attractor-finding method is being applied. This technique takes into account which stable motifs are independent of each other and should significantly cut down the combinatorial explosion caused by allowing all possible permutations of independent motifs. Let $SM1=\left\{\mathcal{M}_{SM1,1}, \mathcal{M}_{SM1,2}, \ldots, \mathcal{M}_{SM1,n_{SM1}}\right\}$ be all the stable motifs of a Boolean network, which could be the full Boolean model or one of the reduced networks obtained during the attractor-finding method (see Text \hyperref[sec:S1]{S1} or Text \hyperref[sec:S2]{S2} for details). Let $SM2 \subset SM1$ be the motifs $\mathcal{M}\in SM1$ such that
\begin{itemize}
\item[(a)] $\mathcal{M}$ is still a stable motif after the first branching (i.e. if $\mathcal{M}, \mathcal{M}' \in SM2$ and $\mathcal{M}\neq\mathcal{M}'$, then $\mathcal{M}\rightarrow\mathcal{M}'$), and
\item[(b)] all successor motifs of $\mathcal{M}$ are still there after the second branching (i.e. if $\mathcal{M}\rightarrow\mathcal{M}'$, $\mathcal{M} \in SM2, \mathcal{M}' \not\in SM2$, then $\mathcal{M}\rightarrow\mathcal{M}''\rightarrow\mathcal{M}'$, $\forall \mathcal{M}''\in SM2$),
\end{itemize}
then we can simplify the succession diagram by compressing the separate motifs of $SM2$ into a single group of motifs.

\subsubsection{Complexity of finding the subsets of stable motif's states that fix the state of the whole stable motif} \label{sec:S3B2}

For a stable motif $\mathcal{M}$ composed of $m$ nodes and their state, step 3 of the stable motif control algorithm has the objective to find the subsets of stable motif's states $M \subseteq \mathcal{M}$ such that, when fixed, are enough to force the state of the whole motif into $\mathcal{M}$, restricted to the condition that the resulting subsets are not fully contained in another subset of the result. At worst, there is only one such subset, namely, the whole stable motif state $\mathcal{M}$.

In algorithm 5 in Text \hyperref[sec:S7]{S7} we give a procedure to find this subsets of motif's states $M \subseteq \mathcal{M}$. In principle, almost every possible combination of the $m$ node states in $\mathcal{M}$ could be obtained during the process we proposed, making the worst-case scenario have a complexity of $O(2^m)$. For our test cases, even the worst case scenario wasn't a problem, since the maximum motif size was seven for both the T-LGL leukemia network model and the helper T cell network model.

For the cases where the number of node states $m$ in $M$ is too large, we propose using a modified version of algorithm 5 where a cutoff $L_{max,1}$ is introduced for the maximum subset size considered (i.e., changing ``\textbf{for} $subsetSize\leftarrow1$ to length of list $\mathcal{M} - 1$'' to ``\textbf{for} $subsetSize\leftarrow1$ to $L_{max,1}$''). Additionally, we propose searching for subsets while starting from a subset size of $m-1$ and ending at a subset size of $L_{max,2}>L_{max,1}$ (i.e., repeat the instructions in the loop ``\textbf{for} $subsetSize\leftarrow1$ to length of list $\mathcal{M} - 1$'' just after the it ends, but starting the loop with ``\textbf{for} $subsetSize\leftarrow\mathcal{M} - 1$ to length of list $L_{max,2}$''). Unlike the other cases where a cutoff is introduced, the cutoffs $L_{max,1}$ and $L_{max,2}$ do not require a modification to the control method to guarantee the method's effectiveness (see Text \hyperref[sec:S3]{S3} section \hyperref[sec:S3B3]{B.3}), their only effect is that the control interventions obtained with these cutoffs may involve more nodes or have more redundancy than the interventions obtained without the cutoffs.

\subsubsection{Identifying stable motif control interventions with partial knowledge of the stable motif succession diagram} \label{sec:S3B3}

In Text \hyperref[sec:S2]{S2} sections \hyperref[sec:S2A3]{A.3} and \hyperref[sec:S2A4]{A.4} we considered the scenario in which the number of cycles and/or SCCs becomes computationally intractable, which prompted us to introduce a cutoff in the maximum length of cycles allowed and/or the maximum SCC size allowed. In both of these cases, the introduction of this cutoff may cause the overlooking of some stable motifs. If this is the case, the result of the attractor-finding method is not the full stable motif succession diagram, but only a part of it. Here we look at how to identify stable motif control interventions when only a part of the stable motif succession diagram is known.

Our results in Text \hyperref[sec:S2]{S2} section \hyperref[sec:S2B]{B} show that a sequence of stable motifs in the stable motif succession diagram uniquely determines an attractor, and that fixing the subset of the states determining a motif specified in step 3 has the same effect as fixing all node states in the stable motif  (see Text \hyperref[sec:S2]{S2} for more details). This implies that step 1, 3 and 4 are still applicable even if only a part of the stable motif succession diagram is known. On the other hand, step 2 requires knowing whether all consequent motif choices after a certain motif lead to the same attractor and, thus, needs the knowledge of all the stable motif decision diagram. The result is that the modified stable motif control algorithm is the same as the original algorithm except for step 2, which is skipped.

\clearpage

\section*{Text S4. Logical rules and classification of attractors in the T-LGL leukemia network model} \label{sec:S4}
\setcounter{subsection}{0}
\subsection{Logical rules of the T-LGL leukemia network model}  \label{sec:S4A}
These rules dictate the dynamics of the T-LGL leukemia survival signaling network depicted in Fig. \hyperref[fig:TLGLnetwork]{3}. For simplicity, the node states are represented by the node names. The Boolean rules were constructed based on experimental results of the corresponding intracellular components in normal and leukemic cytotoxic T cells, in such a way that that the model reproduces the known experimental behavior. The interested reader is referred to \cite{TLGLPNAS} for the detailed explanation of the rules. For transparency of interpretation we slightly diverge from \cite{TLGLPNAS} by not allowing a single transient activation of the Apoptosis node to drive cell death. For this reason these rules do not include the "AND NOT Apoptosis" clause on each node, and the auto-activation of Apoptosis that \cite{TLGLPNAS} has. This slight change, also used in \cite{AssiehPCB}, does not change the results qualitatively.
\ \\
\ \\
$f_{CTLA4}$ = TCR \\
$f_{TCR}$ = Stimuli AND NOT CTLA4 \\
$f_{PDGFR}$ = S1P OR PDGF \\
$f_{FYN}$ = TCR OR IL2RB \\
$f_{Cytoskeleton\hbox{ }signaling}$ = FYN \\
$f_{LCK}$ = CD45 OR ((TCR OR IL2RB) AND NOT ZAP70) \\
$f_{ZAP70}$ = LCK AND NOT FYN \\
$f_{GRB2}$ = IL2RB OR ZAP70 \\
$f_{PLCG1}$ = GRB2 OR PDGFR\\
$f_{RAS}$ = (GRB2 OR PLCG1) AND NOT GAP \\
$f_{GAP}$ = (RAS OR (PDGFR AND GAP)) AND NOT (IL15 OR IL2) \\
$f_{MEK}$ = RAS \\
$f_{ERK}$ = MEK AND PI3K \\
$f_{PI3K}$ = PDGFR OR RAS \\
$f_{NFKB}$ = (TPL2 OR PI3K) OR (FLIP AND TRADD AND IAP) \\
$f_{NFAT}$ = PI3K \\
$f_{RANTES}$ = NFKB \\
$f_{IL2}$ = (NFKB OR STAT3 OR NFAT) AND NOT TBET \\
$f_{IL2RBT}$ = ERK AND TBET \\
$f_{IL2RB}$ = IL2RBT AND (IL2 OR IL15) \\
$f_{IL2RAT}$ = IL2 AND (STAT3 OR NFKB) \\
$f_{IL2RA}$ = IL2 AND IL2RAT AND NOT IL2RA \\
$f_{JAK}$ = (IL2RA OR IL2RB OR RANTES OR IFNG) AND NOT (SOCS OR CD45) \\
$f_{SOCS}$ = JAK AND NOT (IL2 OR IL15) \\
$f_{STAT3}$ = JAK \\
$f_{P27}$ = STAT3 \\
$f_{Proliferation}$ = STAT3 AND NOT P27 \\
$f_{TBET}$ = JAK OR TBET \\
$f_{CREB}$ = ERK AND IFNG \\
$f_{IFNGT}$ = TBET OR STAT3 OR NFAT \\
$f_{IFNG}$ = ((IL2 OR IL15 OR Stimuli) AND IFNGT) AND NOT (SMAD OR P2) \\
$f_{P2}$ = (IFNG OR P2) AND NOT Stimuli2 \\
$f_{GZMB}$ = (CREB AND IFNG) OR TBET \\
$f_{TPL2}$ = TAX OR (PI3K AND TNF) \\
$f_{TNF}$ = NFKB \\
$f_{TRADD}$ = TNF AND NOT (IAP OR A20) \\
$f_{FasL}$ = STAT3 OR NFKB OR NFAT OR ERK\\
$f_{FasT}$ = NFKB \\
$f_{Fas}$ = FasT AND FasL AND NOT sFas  \\
$f_{sFas}$ = FasT AND S1P AND NOT Apoptosis \\
$f_{Ceramide}$ = Fas AND NOT S1P \\
$f_{DISC}$ = FasT AND ((Fas AND IL2) OR Ceramide OR (Fas AND NOT FLIP)) \\
$f_{Caspase}$ = (((TRADD OR GZMB) AND BID) AND NOT IAP) OR DISC \\
$f_{FLIP}$ = (NFKB OR (CREB AND IFNG)) AND NOT DISC \\
$f_{A20}$ = NFKB \\
$f_{BID}$ = (Caspase OR GZMB) AND NOT (BclxL OR MCL1) \\
$f_{IAP}$ = NFKB AND NOT BID \\
$f_{BclxL}$ = (NFKB OR STAT3) AND NOT (BID OR GZMB OR DISC) \\
$f_{MCL1}$ = (IL2RB AND STAT3 AND NFKB AND PI3K) AND NOT DISC \\
$f_{Apoptosis}$ = Caspase \\
$f_{GPCR}$ = S1P \\
$f_{SMAD}$ = GPCR \\
$f_{SPHK1}$ = PDGFR \\
$f_{S1P}$ = SPHK1 AND NOT Ceramide\\

\subsection{Classification of attractors in the T-LGL leukemia network model} \label{sec:ClassTLGL} 

To classify the attractors in the T-LGL leukemia network we use the state of the node Apoptosis; ON for apoptosis and OFF for T-LGL leukemia. This is the same criterion used by Saadatpour et al. \cite{AssiehPCB}. This criterion groups several attractors into the T-LGL leukemia attractor class and several others into the apoptosis attractor class. Thus, stable motif blocking is not successful by default.

The attractor states classified as T-LGL leukemia attractors differ from one another in the activity of some nodes (e.g. IL2RB, IL2RBT, IL2, and IL2RA), but most of them are characterized by the inhibition of Fas-induced apoptosis pathway elements (e.g. Caspase=OFF, DISC=OFF, TRADD=OFF, Fas=OFF, FasL=ON, FasT=ON and Ceramide=OFF), and the activation of transcription factors (e.g. NFKB=ON, TPL2=ON and IFNGT=ON), receptors (e.g. PDGFR=ON and GPCR=ON), or kinases (e.g. S1P=ON, SPHK1=ON, and PI3K=ON). The attractor states classified as Apoptosis attractors are characterized by the activation of Caspase (Caspase=ON) by Fas-induced apoptosis pathway elements such as DISC=ON, Ceramide=ON, Fas=ON, IAP=OFF, GZMB=ON, and BID=ON.

\clearpage

\section*{Text S5. Logical rules, classification of attractors, and analysis of the stable motif decision diagram in the helper T cell differentiation network model} \label{sec:S5}
\setcounter{subsection}{0}
\subsection{Logical rules of the helper T cell differentiation network model developed by Naldi et al. \cite{ThCellDifferentiation}} \label{sec:S5A}
For our study we use one of environmental conditions studied by Naldi et al. \cite{ThCellDifferentiation}, namely, the presence of antigen presenting cells (APC=ON), external TGF$\beta$ (TGFB\_e=ON), external IL2 (IL2\_e=ON), and the absence of other external cytokines (IFNB\_e=OFF, IFNG\_e=OFF, IL4\_e=OFF, IL6\_e=OFF, IL10\_e=OFF, IL12\_e=OFF, IL15\_e=OFF, IL21\_e =OFF, IL23\_e=OFF, and IL27\_e=OFF). The helper T cell differentiation network with only the considered input signals is shown in Fig. \hyperref[fig:Thnetwork]{5}. For simplicity, the node states are represented by the node names. For the nodes that have three states (0, 1 and 2), we created an extra node (denoted by Nodename\_2) that represents the third state (2) and adapted the rules accordingly. The interested reader is referred to the work of Naldi et al. \cite{ThCellDifferentiation} for a detailed justification of the logical rules.
\ \\
\ \\
$f_{CD28}$ = APC \\
$f_{IFNBR}$ = IFNB\_e \\
$f_{IFNGR}$ = IFNGR1 AND IFNGR2 AND (IFNG OR IFNG\_e) \\
$f_{IL2R}$ = CGC AND IL2RB AND (NOT IL2RA) AND (IL2 OR IL2\_e) \\
$f_{IL2R\_2}$ =CGC AND IL2RB AND IL2RA AND (IL2 OR IL2\_e) \\
$f_{IL4R}$ = CGC AND NOT IL4RA AND (IL4 OR IL4\_e) \\
$f_{IL4R\_2}$ =CGC AND IL4RA AND (IL4 OR IL4\_e) \\
$f_{IL6R}$= GP130 AND IL6RA AND IL6\_e \\
$f_{IL10R}$ = IL10RA AND IL10RB AND (IL10 OR IL10\_e) \\
$f_{IL12R}$ = IL12RB1 AND IL12RB2 AND IL12\_e \\
$f_{IL15R}$ = CGC AND IL15RA AND IL2RB AND IL15\_e \\
$f_{IL21R}$ = GP130 AND CGC AND (IL21 OR IL21\_e) \\
$f_{IL23R}$ = GP130 AND (IL23 OR IL23\_e) AND STAT3 AND RORGT \\
$f_{IL27R}$ = GP130 AND IL27RA AND IL27\_e \\
$f_{TCR}$ = APC \\
$f_{TGFBR}$ = TGFB OR TGFB\_e \\
$f_{IL12RB1}$ = IRF1 \\
$f_{IL4RA}$ = STAT5\_2 \\
$f_{IL2RA}$ =(SMAD3 OR FOXP3 OR (STAT5 OR STAT5\_2) OR NFKB) AND NFAT \\
$f_{IFNG}$ = Proliferation AND NOT FOXP3 AND NOT STAT3 AND NFAT AND ((TBET AND RUNX3) OR STAT4) \\
$f_{IL2}$ = ((NFAT AND NOT FOXP3) OR NFKB) AND (NOT (STAT5 OR STAT5\_2) OR NOT STAT6) AND (NOT NFKB OR NOT TBET) \\
$f_{IL4}$ = NFAT AND Proliferation AND GATA3 AND NOT FOXP3 AND NOT ((TBET AND RUNX3) OR IRF1) \\
$f_{IL10}$ = (GATA3 OR STAT3) AND NFAT AND Proliferation \\
$f_{IL21}$ = NFAT AND Proliferation AND STAT3 \\
$f_{IL23}$ = NFAT AND Proliferation AND STAT3 \\
$f_{TGFB}$ = NFAT AND Proliferation AND FOXP3 \\
$f_{TBET}$ = (TBET OR STAT1) AND NOT GATA3 \\
$f_{GATA3}$ = (GATA3 OR STAT6) AND NOT TBET \\
$f_{FOXP3}$ = (STAT5 OR STAT5\_2) AND NFAT AND (FOXP3 OR (SMAD3 AND NOT STAT1 AND NOT (STAT3 AND RORGT))) \\
$f_{NFAT}$ = CD28 AND TCR \\
$f_{STAT1}$ = IFNBR OR IFNGR OR IL27R \\
$f_{STAT3}$ = IL6R OR IL10R OR IL21R OR IL23R OR IL27R \\
$f_{STAT4}$ = IL12R AND NOT GATA3 \\
$f_{STAT5}$ = (IL4R OR IL2R OR IL15R) AND NOT IL2R\_2 AND NOT IL4R\_2 \\
$f_{STAT5\_2}$ = IL2R\_2 OR IL4R\_2 \\
$f_{STAT6}$ = IL4R OR IL4R\_2 \\
$f_{SMAD3}$ = TGFBR \\
$f_{IRF1}$ = STAT1 \\
$f_{RUNX3}$ = TBET \\
$f_{Proliferation}$ = STAT5\_2 OR Proliferation \\
$f_{NFKB}$ = NOT IKB AND NOT FOXP3 \\
$f_{IKB}$ = NOT TCR \\
$f_{RORGT}$ = (TGFBR AND STAT3) OR (RORGT AND (TGFBR OR STAT3)) \\
$f_{IL17}$ = NFAT AND Proliferation AND RORGT AND NOT FOXP3 AND NFKB AND STAT3 AND NOT ((STAT5 OR STAT5\_2) OR STAT1 OR STAT6) \\
$f_{IFNGR1}$ = ON \\
$f_{IFNGR2}$ = ON \\
$f_{GP130}$ = ON \\
$f_{IL6RA}$ = ON \\
$f_{CGC}$ = ON \\
$f_{IL12RB2}$ = ON \\
$f_{IL10RB}$ = ON \\
$f_{IL10RA}$ = ON \\
$f_{IL15RA}$ = ON \\
$f_{IL2RB}$ = ON \\
$f_{IL27RA}$ = ON \\

\subsection{Classification of attractors in the helper T cell differentiation network model} \label{sec:ClassTh} 

To classify the attractors in the helper T cell differentiation network we use the same criteria used by Naldi et al. \cite{ThCellDifferentiation}: TBET=ON for Th1; TBET=OFF and GATA3=ON for Th2; TBET=OFF, GATA3=OFF, and FOXP3=ON for Treg; and TBET=OFF, GATA3=OFF, FOXP3=OFF, and RORGT=ON for Th17. These criteria group several attractors into each attractor class. Consequently, stable motif blocking is not successful by default.

As explained in the work by Naldi et al., the attractor states in an attractor class share the expression of many nodes apart from their master regulator (TBET, GATA3, FOXP3, and/or RORGT), but can also be very similar to attractor states of other attractor classes. This gives rise to hybrid cells types co-expressing markers of more than one canonical cell type which are classified according to the above criteria. These criteria are used because they are consistent with the differentiation of naive helper T cells into the different helper T cell subtypes under the appropriate environmental signals.

\subsection{Analysis of the stable motif decision diagram for helper T cell differentiation network} \label{sec:S5C}

Using the attractor-finding method on the helper T cell differentiation network we obtain 17 stable motifs and a stable motif decision diagram composed of 697 sequences. Despite the size of the decision diagram, a closer look at it suggests a simple explanation: the stable motifs associated with each attractor regulate the characteristic transcription factor of each helper T cell subtype. To check this, we look at the minimal subsets of stable motifs that are sufficient for a sequence to lead to a single differentiated helper T cell subtype. Each subset is minimal because removing any stable motif allows sequences with that subset to lead to more than one helper T cell subtype.

The minimal subsets of stable motifs associated to each helper T cell subtype are shown in Fig. \hyperref[fig:ReductionTh]{6}. Most of these subsets contain a motif with the defining transcription factor of each helper T cell subtype: TBET=ON for Th1, GATA3=ON for Th2, RORGT=ON for Th17, and FOXP3=ON for Treg. The motif subsets that do not contain a subtype's characteristic transcription factor can be shown to depend on the stabilization of a motif that includes this transcription factor (e.g., the motif IFNGR=IFNG=STAT=ON, FOXP3=OFF in Fig. \hyperref[fig:ReductionTh]{6}(a) requires a stable motif with TBET=ON to stabilize) or that causes the differential expression of this transcription factor (e.g., the stable motif subsets in Fig. \hyperref[fig:ReductionTh]{6}(c) are sufficient to cause RORGT=ON and a Th17 subtype when taken together with the motifs they depend on, despite RORGT not being part of any of the motifs). This shows that the minimal subsets of stable motifs in the decision diagram regulate the characteristic transcription factor of each helper T cell subtype.

\clearpage

\section*{Text S6. Translating the logical network models into ordinary differential equation models, and intervention target validation for the ordinary differential equation models} \label{sec:S6}
\setcounter{subsection}{0}
\subsection{Translating the logical network models into ordinary differential equation models} \label{sec:S6A}

We use the method described by Wittmann et al. \cite{BooleantoODE} and its MATLAB implementation \cite{Odefy} to translate the studied Boolean network models into ordinary differential equation (ODE) models. In the ODE models, the node state variables $\widetilde\sigma_i$ take continuous values in the range $\left[0,1\right]$ and their time evolution is given by
\begin{equation} \label{eq:ODE}
  \frac{d\widetilde\sigma_i}{dt} = \frac{1}{\tau_i} \left[\widetilde{f}_i(\widetilde\sigma_{i_1}, \widetilde\sigma_{i_2}, \dots, \widetilde\sigma_{i_{k_i}})-\widetilde\sigma_i\right], \ i=1, 2, \ldots, N,
\end{equation}
where $\widetilde\sigma_{i_1}, \dots, \widetilde\sigma_{i_{k_i}}$ are the state variables of the inputs of node $i$, $\widetilde{f}_i$ is a smooth Hill-type function parameterized by a set of Hill coefficient $\{n_{i_1}, \ldots, n_{i_{k_i}}\}$ and threshold parameters $\{\theta_{i_1}, \ldots, \theta_{i_{k_i}}\}$ for each input, and $\tau_i$ is a time-scale parameter. The function $\widetilde{f}_i$, a so-called normalized Hillcube \cite{BooleantoODE,Odefy}, is a continuous analogue of the Boolean function $f_i$ and is such that it matches $f_i$ whenever all its input variables are either 0 or 1.

This type of logical-to-ODE model conversion is far from trivial, as evidenced by the great number of studies in this topic \cite{GlassKauffman,ThomasReview,deJong} and the fact that some aspects of this type of conversion are still not fully understood \cite{Snoussi,Casey,Farcot,ChavesChaos}. For example, even though the fixed point attractors of the Boolean model are guaranteed to be preserved in the ODE model, this is not necessarily the case for complex attractors. The ODE model may also have attractors that have no Boolean equivalent. Some of these not fully understood aspects, namely, the appearance of attractors of the ODE system with no Boolean equivalent and the mapping of Boolean complex attractors to the ODE system, play a role on our test cases; for example, the T-LGL leukemia network model has an ODE attractor with no Boolean equivalent, and the T-LGL leukemia attractor is a complex attractor (albeit with only two oscillating nodes). These types of ODE attractors can still be classified using the criteria explained in subsection \hyperref[sec:ClassTLGL]{B} of Text \hyperref[sec:S4]{S4} and subsection \hyperref[sec:ClassTh]{B} of Text \hyperref[sec:S5]{S5}. Despite this, we find that the control interventions are remarkably effective in the ODE version of the model, though not always 100\% effective.

\subsection{Intervention target validation for the ordinary differential equation models} \label{sec:S6B}
To validate an intervention target in the ordinary differential equation model, we fix the node states in the continuous equivalent of the states in the logical model interventions (1 for ON and 0 for OFF), choose a random uniformly chosen initial condition in the continuous interval $[0,1]^N$, and evolve the system using Eq. \ref{eq:ODE}. The system of ordinary differential equations is solved using MATLAB's ode45 function, based on the Runge-Kutta method by Dormand and Prince \cite{RungeKutta}. The error tolerances in the ode45 function are chosen between $10^{-2}$ and $10^{-3}$, while the rest of the function's parameters are left at their default value.

Each initial condition is evolved from $t=0$ to $t=15$ or until it reaches an attractor. We repeat this for a large number of initial conditions (25,000, unless otherwise specified) and calculate the probability of reaching each attractor from a uniformly chosen initial condition. For the case when the intervention is not permanent, we evolve the system with the intervention from $t=0$ to $t=15$, remove the intervention, and evolve the system from $t=15$ to $t=30$. The number of initial conditions we use is enough to estimate the probabilities $p_{Attr}$ of reaching the attractor of interest with an error (standard deviation of the estimated probability $p_{Attr}$) of $6\cdot10^{-3}\left[p_{Attr} (1-p_{Attr})\right]^{1/2}$. Equivalently, if $p_{Attr}$ is expressed as a percentage (which we denote as  $\%p_{Attr}$ for clarity), the error in it is estimated as $6\cdot10^{-3}\left[\%p_{Attr}(100\%-\%p_{Attr})\right]^{1/2}\%$ (e.g. 0.06\% for a $\%p_{Attr}$ of 1\%, and 0.3\% for a $\%p_{Attr}$ of 50\%). The number of time steps we use is enough to show no changes in $p_{Attr}$ beyond what is expected from the standard deviation of the estimated probability $p_{Attr}$, and is also found to be enough for the initial conditions to reach the attractors when no interventions are applied.

For the results in Tables S3 and S4 we use the default parameters: a Hill coefficient of $n=3$, a threshold parameter of $\theta=0.5$, and a time-scale parameter $\tau=1$ for all nodes. For Table S5 we use the Hill coefficient specified in the table ($n=1, 1.5, 2,$ or $2.5$), a threshold parameter of $\theta=0.5$, and a time-scale parameter $\tau=1$ for all nodes. For Table S7 we fix the intervention at the values specified (0.9, 0.8, 0.7 or 0.6 if the Boolean intervention was 1, or 0.1, 0.2, 0.3, or 0.4 if the Boolean intervention was 0), and use a Hill coefficient of $n=3$, a threshold parameter of $\theta=0.5$, and a time-scale parameter $\tau=1$ for all nodes. Finally, for the results in Table S6 we use 1,000 initial conditions and 40 different networks in which the Hill coefficients take the values specified in the table ($n=1, 1.5, 2, 2.5,$ or $3$), and the thresholds $\theta_i$ and time-scale parameters $\tau_i$ for each node are chosen uniformly from the interval $[0, 1]$.
\clearpage

\section*{Text S7. Pseudocode for the stable motif control algorithm and the stable motif blocking algorithm} \label{sec:S7}
\setcounter{subsection}{0}
For the pseudocodes we assume that one starts with a target attractor $\mathcal{A}$, the logical functions $F=\left(f_1, f_2, \ldots, f_N\right)$ for the logical network model of interest, and the stable motif succession diagram for the logical network model of interest (see Fig. \hyperref[fig:ReductionMethod]{2}). A stable motif succession diagram can be represented as a directed graph $G_{diag}=(V_{diag},E_{diag})$ together with a dictionary $L$. The nodes $V_{diag}=\left(v_{diag,1}, v_{diag,2}, \ldots, v_{diag,n}\right)$ denote either stable motifs $\mathcal{M}_i$ (if the node has at least one outgoing edge) or attractors $\mathcal{A}_i$ (if the node has no outgoing edges). The dictionary $L$ stores the type of object (stable motif or attractor) each node in $V_{diag}$ denotes. Each edge in $E_{diag}$ connects a stable motif with the stable motifs or attractors that can be obtained from the reduced network associated to it; if network reduction leads to a simplified network with at least one stable motif, then the edges points from the stable motif being considered to the stable motifs of the simplified network, otherwise, the edges point towards an attractor. It should be noted that stable motifs/attractors may be assigned to more than one node in $V_{diag}$. For example, in Fig. \hyperref[fig:ReductionMethod]{2} there are three nodes that denote the motif $\{A=0\}$, and two nodes that denote the attractor $\mathcal{A}_2$.

\subsection{Pseudocode for the stable motif control algorithm} \label{sec:S7A}

\textit{Step 1}: Identify the sequences of stable motifs that lead to $\mathcal{A}$. These can be obtained from the stable motif succession diagram (see Fig. \hyperref[fig:ReductionMethod]{2}) by choosing the attractor of interest in the right-most part and selecting all of the attractor's predecessors in the succession diagram. The stable motif diagram is represented by the directed graph $G_{diag}=(V_{diag},E_{diag})$ together with the list $L$.

\begin{pseudocode}[ruled]{GetSequences}{G,L,\mathcal{A}}
\label{algorithm1}
\mbox{\textbf{comment:} Sequences, SequencesLeft, and NewSequences are sets.} \\
\phantom{\mbox{\textbf{comment:} }}\mbox{$\mathcal{S}$ is a sequence (ordered list).} \\
Sequences \GETS \mbox{empty set}\\
SequencesLeft \GETS \mbox{empty set}\\
\FOREACH v \in \mbox{sink nodes of $G$} \DO
    \BEGIN
        \COMMENT{$L(v)$ gives the motif or attractor denoted by $v$.} \\
        \IF \mbox{$L(v)$ equals $\mathcal{A}$} \THEN
        \BEGIN
        \mathcal{S} \GETS \mbox{empty sequence} \\
        \mbox{add $v$ to the beginning of $\mathcal{S}$} \\
        \mbox{add $\mathcal{S}$ to $SequencesLeft$}
        \END
    \END
    \\
\REPEAT
\BEGIN
NewSequences \GETS \mbox{empty set} \\
\FOREACH \mathcal{S} \in SequencesLeft \DO
\BEGIN
v \GETS \mbox{first item of $\mathcal{S}$} \\
\IF \mbox{$v$ has input nodes}
    \THEN
    \BEGIN
        \FOREACH v' \in \mbox{input nodes of $v$} \DO
        \BEGIN
        \mathcal{S}' \GETS \mbox{copy $\mathcal{S}$} \\
        \mbox{add $v'$ to the beginning of $\mathcal{S}'$}\\
        \mbox{add $\mathcal{S}'$ to $NewSequences$}
        \END
    \END
\ELSE
     \mbox{add $\mathcal{S}$ to $Sequences$}\\
\mbox{remove $\mathcal{S}$ from $SequencesLeft$} \\
\END
\\
\FOREACH \mathcal{S}' \in NewSequences \DO
    \mbox{add $\mathcal{S}'$ to $SequencesLeft$}
\END
\UNTIL \mbox{$NewSequences$ is empty} \\
\RETURN{Sequences}
\end{pseudocode}

\clearpage

\textit{Step 2}: Shorten each sequence $\mathcal{S} \in Sequences$ by identifying the minimum number of motifs in $\mathcal{S}$ required for reaching $\mathcal{A}$ and removing the remaining motifs from the sequence. This minimum number of motifs can be identified from the stable motif succession diagram (Fig. \hyperref[fig:ReductionMethod]{2}); they are the motifs after which all consequent motif choices lead to the same attractor $\mathcal{A}$.

\begin{pseudocode}[ruled]{ShortenSequences1}{G,L,\mathcal{A},Sequences}
\mbox{\textbf{comment:} $ShortenedSequences1$ is a set.} \\
\phantom{\mbox{\textbf{comment:} }}\mbox{$\mathcal{S}'$ is a sequence (ordered list).}\\
\phantom{\mbox{\textbf{comment:} }}\mbox{$pathFound$ is a Boolean variable}\\
ShortenedSequences1 \GETS \mbox{empty set} \\
\FOREACH \mathcal{S} \in Sequences \DO
    \BEGIN
        \mathcal{S}' \GETS \mbox{copy $\mathcal{S}$}\\
        \FOR \textbf{each } v \in \mbox{$\mathcal{S}$ in reverse order} \DO
            \BEGIN
                pathFound \GETS \FALSE \\
                \FOR v' \in \mbox{sink nodes of $G$} \DO
                \BEGIN
                    \COMMENT{$L(v')$ gives the motif or attractor denoted by $v'$.} \\
                    \IF L(v') \mbox{ is not } \mathcal{A} \THEN
                        \BEGIN
                        \IF \mbox{there exists a directed path from $v$ to $v'$} \THEN
                            \BEGIN
                            pathFound \GETS \TRUE  \\
                            \mbox{exit \textbf{for} loop}
                            \END
                        \END
                \END
                \\
                \IF pathFound
                \THEN \mbox{exit \textbf{for} loop}
                \ELSE \mbox{remove $v$ from $\mathcal{S}'$}
            \END
            \\
            \IF \mbox{ShortenedSequences1 does not contain $\mathcal{S}'$} \THEN
                    \mbox{add $\mathcal{S}'$ to $ShortenedSequences1$}
    \END
    \\
\RETURN{ShortenedSequences1}
\end{pseudocode}

\clearpage

\textit{Step 3}: For each stable motif state $\mathcal{M}=\left(\sigma_{\mathcal{M}_1} = b_{\mathcal{M}_1}, \sigma_{\mathcal{M}_2} = b_{\mathcal{M}_2} , \ldots, \sigma_{\mathcal{M}_m} \right)$ corresponding to node $v$, find the subsets of stable motif's states $O=\left\{M_i\right\}, M_i \subseteq \mathcal{M}$ that, when fixed, are enough to force the state of the whole motif into $\mathcal{M}$. At worst, there will only be one subset, which will equal the whole stable motif state $\mathcal{M}$. If any of these subsets is fully contained in another subset, remove the larger of the subsets. In each stable motif sequence $\mathcal{S}=\left(\mathcal{M}_1, \ldots, \mathcal{M}_L\right)$, substitute every stable motif $\mathcal{M}_j$ with the subsets of the stable motif states obtained, that is, $\mathcal{S}=\left(O_1, \ldots, O_L\right)$.

\begin{pseudocode}[ruled]{SequencesWithMotifControlSets}{ShortenedSequences1,SequenceDictionary,F,L}
\mbox{\textbf{comment:} $F=\left(f_1, f_2, \ldots, f_N\right)$ contains the Boolean functions of the logical model.} \\
\phantom{\COMMENT{}}\mbox{$ShortenedSequences2$ is a set.} \\
\phantom{\COMMENT{}}\mbox{$O$ and $Subsequence$ are sequences (ordered lists).} \\
ShortenedSequences2 \GETS \mbox{empty set} \\
\FOREACH \mathcal{S} \in ShortenedSequences1 \DO
\BEGIN
    \mbox{\textbf{comment:} $index$ is an integer. It stores the index of the first element of $\mathcal{S}'$}\\
    \phantom{\mbox{\textbf{comment:} }}\mbox{that will be visited in the \textbf{for} loop below.}\\
    \phantom{\mbox{\textbf{comment:} }}\mbox{$\mathcal{S}'$ and $\mathcal{S}''$ are sequences (ordered lists).}\\
    \phantom{\mbox{\textbf{comment:} }}\mbox{$F'$ is a sequence (ordered list) of Boolean functions.}\\
    index \GETS 0 \\
    \mathcal{S}' \GETS \mbox{sequence assigned to $\mathcal{S}$ in $SequenceDictionary$}\\
    \mathcal{S}'' \GETS \mbox{empty sequence}{}\\
    F' \GETS \mbox{copy $F$} \\
    \FOREACH v \in \mathcal{S} \DO
        \BEGIN
        \mbox{\textbf{comment:} $\mathcal{S}'$ has more motifs than $\mathcal{S}$,}\\
        \phantom{\mbox{\textbf{comment:} }}\mbox{we need the extra motifs to find the reduced network from which the motif}\\
        \phantom{\mbox{\textbf{comment:} }}\mbox{$L(v)$ was obtained. These extra motifs are stored in $Subsequence$} \\
        Subsequence \GETS \mbox{empty sequence} \\
        \FOR i \GETS index \TO \mbox{length of list $\mathcal{S}'$}-1 \DO
            \BEGIN
            v' \GETS \mbox{get element of $\mathcal{S}'$ in position $i$} \\
            \IF \mbox{$v'$ equals $v$} \THEN
                \BEGIN
                index \GETS i+1\\
                \mbox{exit \textbf{for} loop}
                \END
                \\
                \mbox{add $v'$ to the end of $Subsequence$}\\
            \END
            \\
            \mbox{\textbf{comment:} $\CALL{DownstreamEffect}{L(v'),F'}$ is described in Algorithm \ref{algorithm5}.}\\
            \phantom{\mbox{\textbf{comment:} }}\mbox{$\CALL{DownstreamEffect}{L(v'),F'}$ evaluates the states of motif $L(v')$ into $F'$.}\\
            \phantom{\mbox{\textbf{comment:} }}\mbox{If any $f \in F'$  becomes a constant Boolean function after the evaluation,}\\
            \phantom{\mbox{\textbf{comment:} }}\mbox{it evaluates the resulting Boolean state of the node corresponding to}\\
            \phantom{\mbox{\textbf{comment:} }}\mbox{$f$ in every $F'$. This is done iteratively until no new constant Boolean} \\
            \phantom{\mbox{\textbf{comment:} }}\mbox{functions are found, at which point the resulting $F'$ is returned.}\\
            \FOREACH v' \in Subsequence
                \DO F' \GETS \CALL{DownstreamEffect}{L(v'),F'}\\
            \mbox{\textbf{comment:} $\CALL{MotifControlSet}{L(v),F'}$ is described in Algorithm \ref{algorithm6}}\\
            \phantom{\mbox{\textbf{comment:} }}\mbox{$\CALL{MotifControlSet}{L(v),F'}$ finds the subsets of stable motif's states}\\
            \phantom{\mbox{\textbf{comment:} }}\mbox{of $L(v)$ that, when fixed, are enough to force the state of the whole motif}\\
            \phantom{\mbox{\textbf{comment:} }}\mbox{into $L(v)$.}\\
            O \GETS \CALL{MotifControlSet}{L(v),F'} \\
            \mbox{add $O$ to end of $\mathcal{S}''$}\\
            F' \GETS \CALL{DownstreamEffect}{L(v),F'}\\
        \END
    \\
    \mbox{add $S''$ to $ShortenedSequences2$}
\END
\\
\RETURN{ShortenedSequences2}
\end{pseudocode}

\clearpage

\begin{pseudocode}[ruled]{DownstreamEffect}{\mathcal{M},F'}
\label{algorithm5}
\mbox{\textbf{comment:} $\CALL{DownstreamEffect}{\mathcal{M},F'}$ evaluates the states of motif $\mathcal{M}$ into $F'$.}\\
\phantom{\mbox{\textbf{comment:} }}\mbox{If any $f \in F'$  becomes a constant Boolean function after the evaluation,}\\
\phantom{\mbox{\textbf{comment:} }}\mbox{it evaluates the resulting Boolean state of the node corresponding to}\\
\phantom{\mbox{\textbf{comment:} }}\mbox{$f$ in every $F'$. This is done iteratively until no new constant Boolean} \\
\phantom{\mbox{\textbf{comment:} }}\mbox{functions are found, at which point the resulting $F'$ is returned.}\\
\phantom{\mbox{\textbf{comment:} }}\mbox{$M'$ and $M''$ are sets containing nodes in the logical model together with a}\\
\phantom{\mbox{\textbf{comment:} }}\mbox{Boolean variable with their state.}\\
\phantom{\mbox{\textbf{comment:} }}\mbox{$F''$ is a sequence (ordered lists) of Boolean functions.}\\
M' \GETS \mbox{empty set};M'' \GETS \mbox{copy $M$};F'' \GETS \mbox{copy $F'$}\\
\REPEAT
    \BEGIN
        \FOREACH f \in F'' \DO
            \BEGIN
                \IF \mbox{$f$ is not a constant Boolean function} \THEN
                \BEGIN
                    f \GETS \mbox{$f$ with the states in $M'$ evaluated on it} \\
                    \IF \mbox{$f$ is a constant Boolean function} \THEN
                    \BEGIN
                        \mbox{\textbf{comment:} $\sigma$ is a node in the logical model together}\\
                        \phantom{\mbox{\textbf{comment:} }}\mbox{with a Boolean variable with its state.}\\
                        \sigma \GETS \mbox{node in the logical model whose function is $f$ and}\\
                        \phantom{\sigma \GETS }\mbox{ the value of $f$ as its state.}\\
                        \mbox{add $\sigma$ to $M''$}
                    \END
                \END
            \END
            \\
            \mbox{$M' \leftarrow$ copy $M''$}\\
            \mbox{$M'' \leftarrow$ empty set}\\
    \END
\UNTIL \mbox{$M'$ is empty}\\
\RETURN{F''}
\end{pseudocode}

\begin{pseudocode}[ruled]{MotifControlSet}{\mathcal{M},F'}
\label{algorithm6}
\mbox{\textbf{comment:} $\CALL{MotifControlSet}{\mathcal{M},F'}$ finds the subsets of stable motif's states of $\mathcal{M}$ that,}\\
\phantom{\mbox{\textbf{comment:} }}\mbox{when fixed, are enough to force the state of the whole motif into $\mathcal{M}$.}\\
\phantom{\mbox{\textbf{comment:} }}\mbox{$F'$ and $F''$ are sequences (ordered lists) of Boolean functions.}\\
\phantom{\mbox{\textbf{comment:} }}\mbox{$F'$ are the logical functions of the nodes in the model whose states are specified in $\mathcal{M}$.}\\
\phantom{\mbox{\textbf{comment:} }}\mbox{$O$ is a sequence (ordered list).} \\
\phantom{\mbox{\textbf{comment:} }}\mbox{$isMotifControlSet$ and $validSubset$ are Boolean variables.}\\
O \GETS \mbox{empty sequence} \\
\FOR subsetSize \GETS 1 \TO \mbox{length of list $\mathcal{M}$} -1 \DO
\BEGIN
    \FOREACH M \in \mbox{subsets of size $subsetSize$ in $\mathcal{M}$} \DO
        \BEGIN
        validSubset \GETS \TRUE \\
        \FOREACH M' \in O \DO
            \BEGIN
                \IF \mbox{$M'$ is a subset of $M$} \THEN
                \BEGIN
                validSubset \GETS \FALSE \\
                \mbox{exit \textbf{for} loop}
                \END
            \END
        \\
        \IF \NOT validSubset \THEN
        \mbox{exit \textbf{for} loop} \\
        \mbox{\textbf{comment:} $\CALL{DownstreamEffect}{\mathcal{M},F'}$ is described in Algorithm \ref{algorithm5}.}\\
        F'' \GETS \CALL{DownstreamEffect}{\mathcal{M},F'}\\
        isMotifControlSet \GETS \TRUE \\
        \FOREACH f \in F'' \DO
        \BEGIN
            \IF \mbox{$f$ is not a constant Boolean function} \THEN
            \BEGIN
                isMotifControlSet \GETS \FALSE \\
                \mbox{exit $\FOR$ loop}
            \END
        \END
        \\
        \IF isMotifControlSet
        \THEN \mbox{add $M$ to $O$}
        \END
\END
\\
\IF \mbox{O is empty} \THEN
\mbox{add $\mathcal{M}$ to $O$}\\
\RETURN{O}
\end{pseudocode}

\clearpage

\textit{Step 4}: For each sequence $\mathcal{S}=\left(O_1, \ldots, O_L\right)$ create a set of states $\mathcal{C}$ by choosing one of the subsets of stable motif states $M_{k_j}$ in each $O_j$ and taking their union, that is, $\mathcal{C}=M_{k_1} \cup \cdots \cup M_{k_L}, M_{k_j} \in O_j$. The network control set for attractor $\mathcal{A}$ is the set of states $C_\mathcal{A}=\left\{\mathcal{C}_i\right\}$ obtained from all possible combinations of $M_{k_j}$'s for every sequence $\mathcal{S}$. To avoid any redundancy, we additionally prune $C_\mathcal{A}$ of duplicates and remove the states $\mathcal{C}_i$ which are supersets of any of the other states $\mathcal{C}_j$ (i.e. $\mathcal{C}_j \subset \mathcal{C}_i$).

\begin{pseudocode}[ruled]{StableMotifControlSets}{ShortenedSequences2}
\mbox{\textbf{comment:} $ControlSets$, $ControlSet$, and $M$ are sets} \\
\phantom{\mbox{\textbf{comment:} }}\mbox{$O$ is a sequence (ordered list).} \\
\phantom{\mbox{\textbf{comment:} }}\mbox{$L$ and $index$ are integers.} \\
\phantom{\mbox{\textbf{comment:} }}\mbox{$countArray$ and $countArrayMax$ are arrays of integers.}\\
ControlSets \GETS \mbox{empty set} \\
\FOREACH \mathcal{S} \in ShortenedSequences2 \DO
\BEGIN
    L \GETS \mbox{length of list $\mathcal{S}$} \\
    \mbox{\textbf{comment:} $countArray$ and $countArrayMax$ keep track of the combinations of motifs}\\
    \phantom{\mbox{\textbf{comment:} }}\mbox{in $\mathcal{S}$ that we have tried and that we have left.}\\
    countArray \GETS \mbox{array of integers of length $L$}\\
    countArrayMax \GETS \mbox{array of integers of length $L$}\\
    \FOR i \GETS 0 \TO L-1 \DO
        \BEGIN
        O \GETS \mbox{get element of $\mathcal{S}$ in position $i$}\\
        countArrayMax[i] \GETS \mbox{length of list $O$}\\
        countArray[i] \GETS 0
        \END
    \\
    \REPEAT
    \BEGIN
        ControlSet \GETS \mbox{empty set} \\
        \FOR i \GETS 0 \TO L-1 \DO
        \BEGIN
            O \GETS \mbox{get element of $\mathcal{S}$ in position $i$}\\
            M \GETS \mbox{get element of $O$ in position $countArray[i]$} \\
            \FOREACH \sigma \in M \DO
            \mbox{add $\sigma$ to $ControlSet$}
        \END
        \\
        \mbox{add $ControlSets$ to $ControlSets$}\\
        \mbox{\textbf{comment:} $index$ gets increased whenever $countArray[index]$ reaches its}\\
        \phantom{\mbox{\textbf{comment:} }}\mbox{max value, $countArrayMax[index]$.}\\
        index \GETS 0\\
        \REPEAT
            \BEGIN
                \COMMENT{$increasedIndex$ breaks the \textbf{repeat} loop.}\\
                increasedIndex \GETS \FALSE \\
                countArray[index] \GETS countArray[index]+1 \\
                \IF \mbox{$countArray[index]$ equals $countArrayMax[index]$}
                \THEN
                \BEGIN
                    countArray[index] \GETS 0 \\
                    index \GETS index+1 \\
                    increasedIndex \GETS \TRUE
                \END
                \\
                \IF \mbox{$index$ equals $L$}
                \THEN \mbox{exit \textbf{repeat} loop}
            \END
        \UNTIL \NOT increasedIndex
    \END
    \UNTIL \mbox{$index$ equals $L$}
\END
\\
\RETURN{ControlSets}
\end{pseudocode}

\clearpage

\begin{pseudocode}[ruled]{PruneControlSets}{ControlSets}
\COMMENT{$PrunedControlSets$ is a set} \\
PrunedControlSets \GETS \mbox{copy $ControlSets$} \\
\FOREACH ControlSet \in ControlSets \DO
    \BEGIN
    \FOREACH ControlSet' \in ControlSets \DO
        \BEGIN
        \IF \mbox{$ControlSet'$ is not $ControlSet$} \THEN
            \BEGIN
            \IF \mbox{$ControlSet'$ is a subset of $ControlSet$} \THEN
                \BEGIN
                \mbox{remove $ControlSet$ from $PrunedControlSets$}\\
                \mbox{exit \textbf{for} loop}
                \END
            \END
        \END
    \END
\\
\RETURN{PrunedControlSets}
\end{pseudocode}

\subsection{Pseudocode for the stable motif blocking algorithm} \label{sec:S7B}

\textit{Step 1}: Identify the sequences of stable motifs that lead to $\mathcal{A}$. This step is the same as the first step in the stable motif control algorithm (Algorithm \ref{algorithm1}), and can be obtained from the stable motif succession diagram (Fig. \hyperref[fig:ReductionMethod]{2}).
\\
\\
\indent \textit{Step 2}: Take each stable motif's state $\mathcal{M}_i$ in the sequences obtained in the previous step ($Sequences$). Create a new set $\mathbf{M}_{\mathcal{A}}$ with all of these stable motif states, $\mathbf{M}_{\mathcal{A}}=\left\{\mathcal{M}_i\right\}$.

\begin{pseudocode}[ruled]{MotifStates}{Sequences,L}
\COMMENT{$\mathbf{M}_{\mathcal{A}}$ and $\mathcal{M}$ are sets.} \\
\mathbf{M}_{\mathcal{A}} \GETS \mbox{empty set} \\
\FOREACH \mathcal{S} \in Sequences \DO
    \BEGIN
    \FOREACH v \in \mathcal{S} \mbox{ s.t. $v$ is not a sink node} \DO
        \BEGIN
            \COMMENT{$\mathcal{M}$ stores the states of the motif $L(v)$.}\\
            \mathcal{M} \GETS L(v)\\
            \mbox{add $\mathcal{M}$ to $\mathbf{M}_{\mathcal{A}}$}
        \END
    \END
\\
\RETURN{\mathbf{M}_{\mathcal{A}}}
\end{pseudocode}

\textit{Step 3}: Take each node state $\sigma_j \subset \mathcal{M}_i$ of the stable motif's states $\mathcal{M}_i$ in $\mathbf{M}_{\mathcal{A}}$. Create a new set $\mathcal{B}_{\mathcal{A}}$ with the negation of each node state, $\mathcal{B}_{\mathcal{A}}=\left\{\overline{\sigma}_j\right\}$. The node states in $\mathcal{B}_{\mathcal{A}}$ and any combination of them are identified as potential interventions to block attractor $\mathcal{A}$.

\begin{pseudocode}[ruled]{StableMotifBlocking}{\mathbf{M}_{\mathcal{A}}}
\COMMENT{$\mathcal{B}_{\mathcal{A}}$ is a set.} \\
\mathcal{B}_{\mathcal{A}} \GETS \mbox{empty set} \\
\FOREACH \sigma \in \mathbf{M}_{\mathcal{A}} \DO
    \BEGIN
    \COMMENT{$\sigma'$ is a node in the logical model together with a Boolean variable with its state.}\\
    \sigma' \GETS \mbox{reverse node state of $\sigma$}\\
    \mbox{add $\sigma'$ to $\mathcal{B}_{\mathcal{A}}$}
    \END
\\
\RETURN{\mathcal{B}_{\mathcal{A}}}
\end{pseudocode}

\clearpage
\renewcommand{\thefigure}{S\arabic{figure}}
\setcounter{figure}{0}

\section*{Supporting Information Figures} \label{sec:SFigs}
\ \\
\begin{figure}[!h]
\centerline{\includegraphics[width=\textwidth]{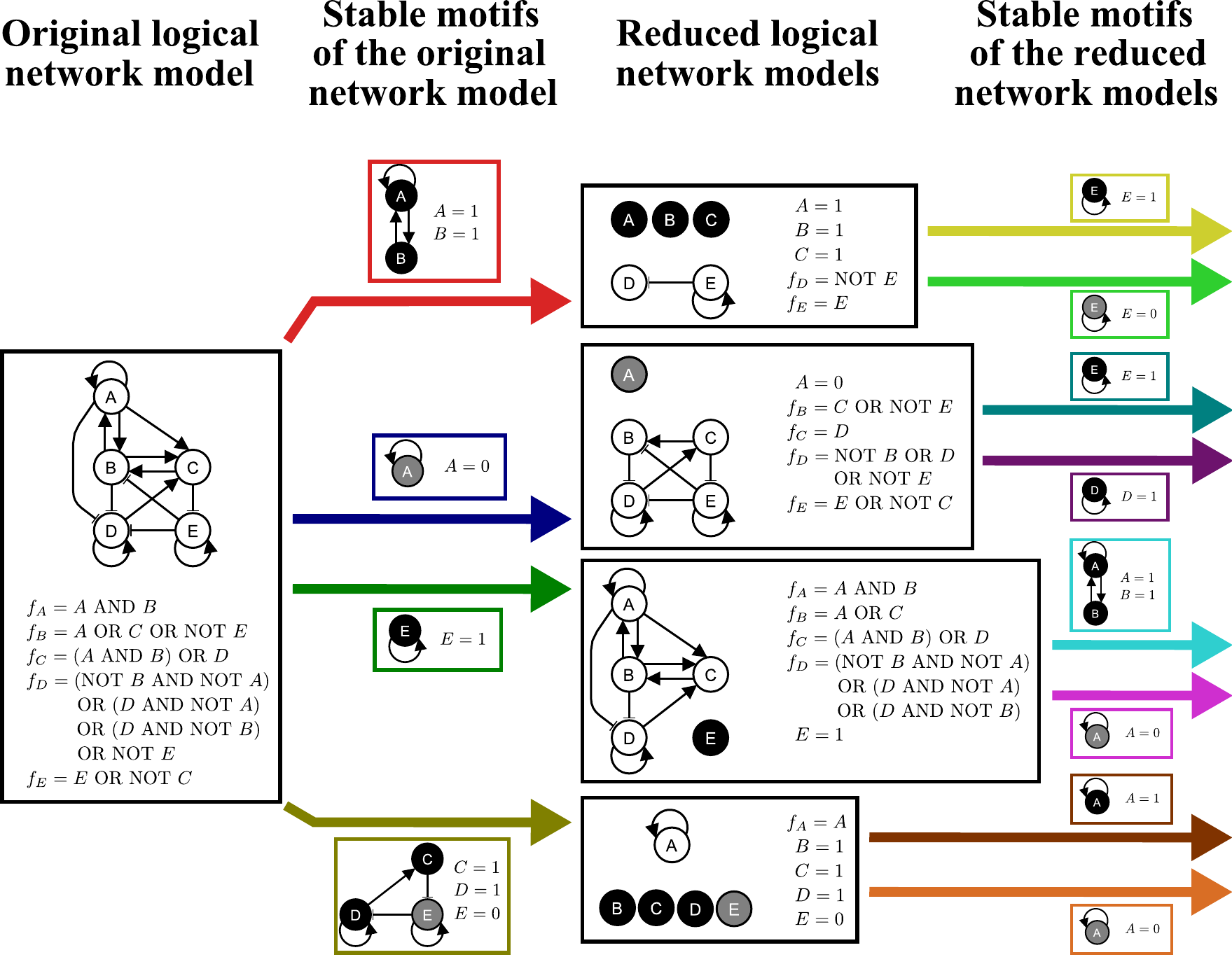}}
\caption{Stable motifs and simplified logical networks for the logical network in Fig. \ref{fig:NetworkExample}. Read from left to right, the figure shows the logical network in Fig. \ref{fig:NetworkExample}, the stable motifs of this logical network, the simplified networks obtained from tracing the downstream effect of each of the original logical network's stable motifs, and the stable motifs obtained from these simplified networks. Nodes are colored based on their respective node state: gray for 0, black for 1, and white for nodes whose state is not yet determined. Each large arrow has an associated stable motif sharing the arrow's color. These large arrows stand for the use of a network reduction technique on the network they start from by tracing the downstream effect of their associated stable motifs on this network.}
\label{fig:FigS1}
\end{figure}

\begin{figure}[!h]
\centerline{\includegraphics[width=0.6\textwidth]{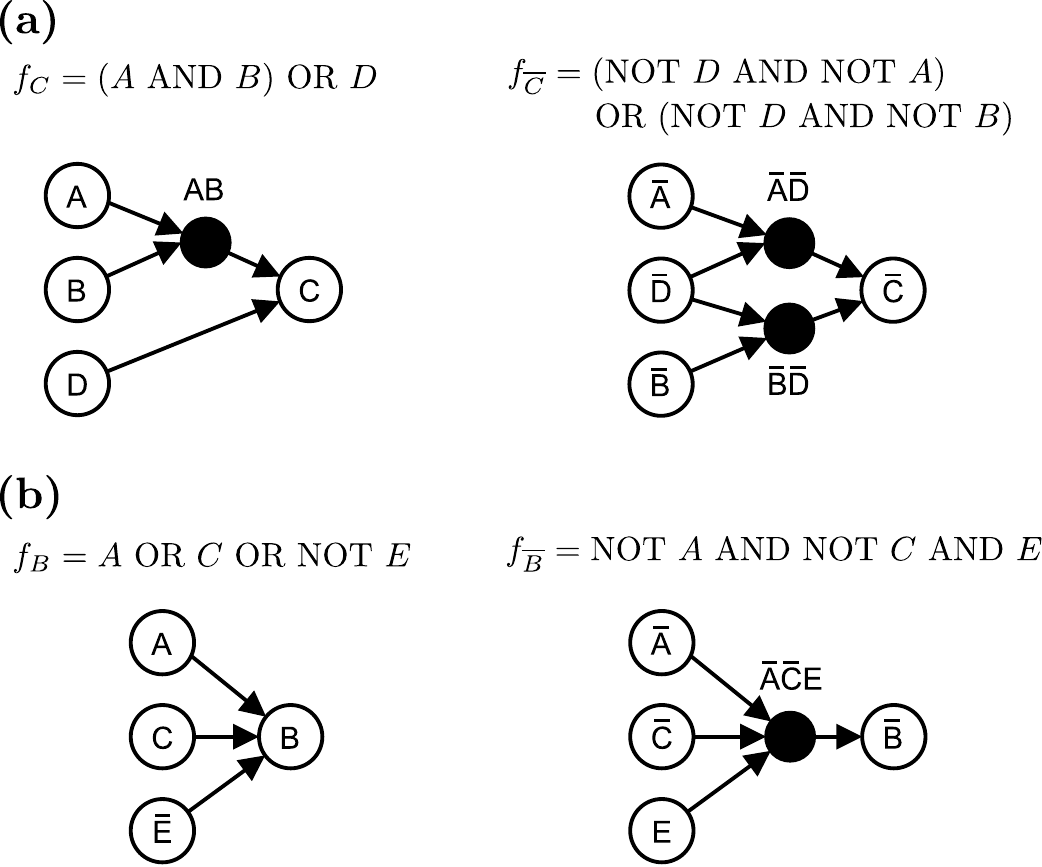}}
\caption{Example of the expanded network representation of selected nodes of the logical network in Fig. \ref{fig:NetworkExample}. The logical function of each example node is shown above its expanded network representation. Nodes are colored white if they denote normal nodes or complementary node (complementary nodes have a bar above their name, while normal nodes do not), and colored black if they denote composite nodes. For more details see Text \hyperref[sec:S1]{S1} and Text \hyperref[sec:S2]{S2}. (a) Expanded network representation for normal node $C$, complementary node $\overline C$, and their inputs. (b) Expanded network representation for normal node $B$, complementary node $\overline B$, and their inputs.}
\label{fig:FigS2}
\end{figure}

\begin{figure}[!h]
\centerline{\includegraphics[width=0.6\textwidth]{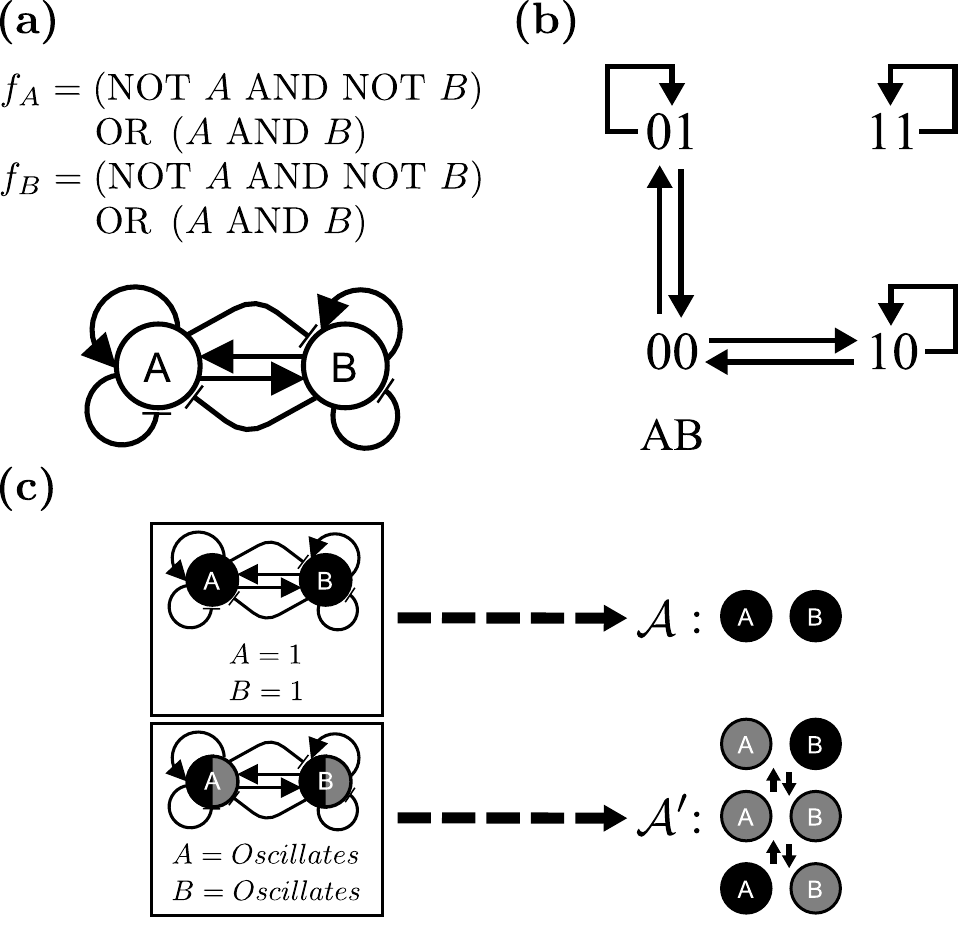}}
\caption{Example logical network displaying unstable oscillations. The figure shows (a) a two node Boolean network whose logical functions are given by an XOR function, (b) the network's state transition graph, i.e., all combinations of network states and the allowed transitions between them under the general asynchronous updating scheme, and (c) the network's stable motif succession diagram. This Boolean network is the simplest example (up to a relabeling of node states) of so-called unstable oscillations. Unstable oscillations refer to a subset of nodes whose node states oscillate in an attractor while their node states are fixed in a different attractor, even though both attractors are the same except for the state of this subset of nodes. In the example Boolean network shown in this figure, we have the states of nodes $A$ and $B$ oscillate between three network states in attractor $\mathcal{A}'=\left\{(A=1, B=0), (A=0, B=0), (A=0, B=1)\right\}$, while they are fixed in attractor $\mathcal{A}=\left\{A=1, B=1\right\}$. Unstable oscillations are treated with special care when using our attractor-finding method, since ignoring them can lead to missing attractors displaying this behavior; for more details see Text \hyperref[sec:S1]{S1} and Text \hyperref[sec:S2]{S2}.}
\label{fig:FigS3}
\end{figure}

\begin{figure}[!h]
\centerline{\includegraphics[width=0.6\textwidth]{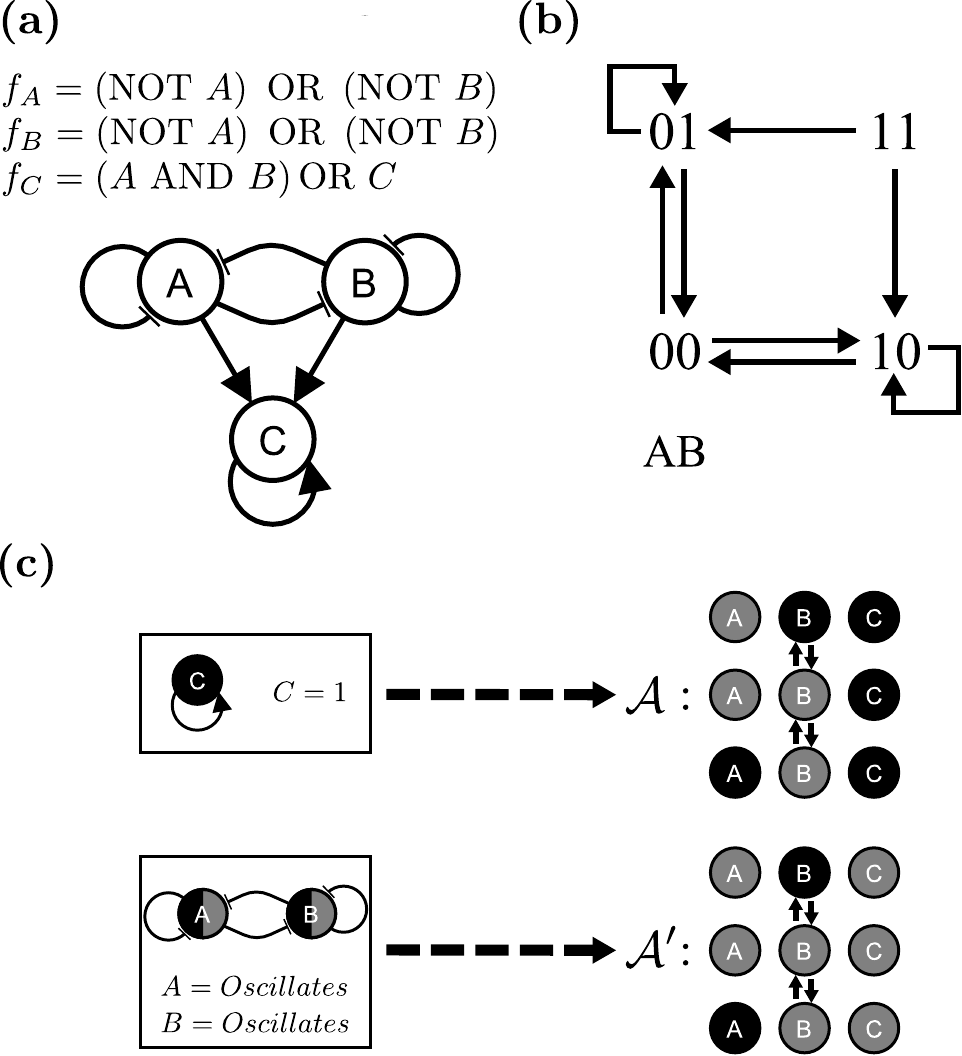}}
\caption{Example logical network displaying incomplete oscillations. The figure shows (a) a three node Boolean network that displays incomplete oscillations, (b) the sub-state-space of nodes $A$ and $B$ in the network's state transition graph (i.e., all combinations of network states and the allowed transitions between them) under the general asynchronous updating scheme, and (c) the network's stable motif succession diagram. Incomplete oscillations refer to a subset of nodes whose node states oscillate in an attractor but do not visit all possible states of their sub-state-space in the attractor. In the example Boolean network shown in this figure, we have the states of nodes $A$ and $B$ oscillate between three subnetwork states $\left\{(A=1, B=0), (A=0, B=0), (A=0, B=1)\right\}$ in the attractors $\mathcal{A}$ and $\mathcal{A}'$. Incomplete oscillations are treated with special care when using our attractor-finding method, since ignoring them can lead to missing attractors displaying this behavior; for more details see Text \hyperref[sec:S1]{S1} and Text \hyperref[sec:S2]{S2}.}
\label{fig:FigS4}
\end{figure}

\begin{figure}[!h]
\centerline{\includegraphics[width=\textwidth]{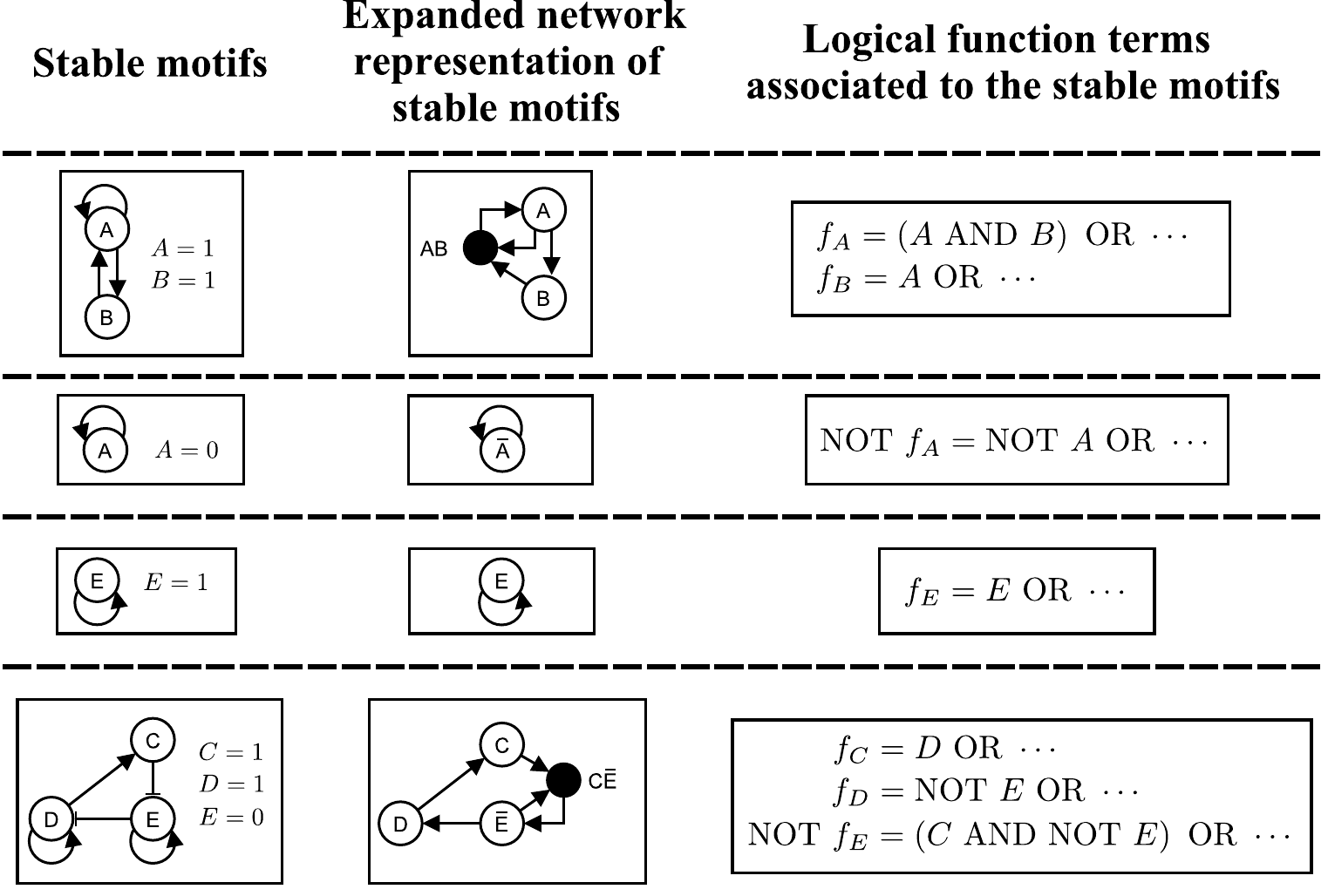}}
\caption{Stable motifs and simplified logical networks for the logical network in Fig. \ref{fig:NetworkExample}. Read from left to right, the figure shows the logical network in Fig. \ref{fig:NetworkExample}, the stable motifs of this logical network, the simplified networks obtained from tracing the downstream effect of each of the original logical network's stable motifs, and the stable motifs obtained from these simplified networks. Nodes are colored based on their respective node state: gray for 0, black for 1, and white for nodes whose state is not yet determined. Each large arrow has an associated stable motif sharing the arrow's color. These large arrows stand for the use of a network reduction technique on the network they start from by tracing the downstream effect of their associated stable motifs on this network.}
\label{fig:FigS5}
\end{figure}

\clearpage

\section*{Supporting Information Tables}

\begin{table*}[!h]
\caption{Validation of the intervention targets in Table 1 and single interventions from control sets with more than one node in Table 1 for the T-LGL leukemia network model. The relative apoptosis \% change is defined as $(\hbox{Apoptosis \%}-\hbox{Normal apoptosis \%})/(\hbox{Normal apoptosis \%})$, where Normal apoptosis \% = 62.1 \% is the percentage of initial conditions that go to apoptosis when no intervention is applied. Interventions marked with $\dag$ appear in more than one control strategy or target attractor in Table 1. The percentages are significant in the digits shown and have an estimated absolute error (standard deviation of the mean) of $3\cdot10^{-3}[\%p_{Attr}(100\%-\%p_{Attr})]^{1/2}\ \%$, where $\%p_{Attr}$ is the percentage shown (e.g. 0.03\% for a $\%p_{Attr}$ of 1\%, and 0.15\% for a $\%p_{Attr}$ of 50\%). }
\label{tab:SimulationsTLGL}
\begin{tabular*}{\hsize}{@{\extracolsep{\fill}}ccccccc}
\hline
Intervention & Successful? & Long-term? & Apoptosis & Relative apoptosis & Apoptosis & Relative apoptosis \\
     &  &  & \%  & \% change& \% & \% change \\
& & & (permanent & (permanent  & (nonpermanent & (nonpermanent \\
& & & intervention) & intervention) & intervention) & intervention) \\
\hline
\multicolumn7c{T-LGL stable motif control interventions ($C_{TLGL}$)} \\
\hline
\{S1P=ON\}$^\dag$ & Yes & Yes & 0.0 & -100 & 0.0 & -100 \\
\{Ceramide=OFF, & Yes & Yes &0.0 & -100 & 0.0 & -100 \\
SPHK1=ON\} &  &  & &  &  &  \\
\{Ceramide=OFF, & Yes & Yes &0.0 & -100 & 0.0 & -100 \\
PDGFR=ON\} &  &  & &  &  &  \\
\hline
\multicolumn7c{Apoptosis stable motif control interventions ($C_{Apoptosis}$)} \\
\hline
\{S1P=OFF\}$^\dag$ & Yes & Yes &100.0 & 61 & 100.0 & 61 \\
\{SPHK1=OFF\}$^\dag$ & Yes & Yes &100.0 & 61 & 100.0 & 61 \\
\{PDGFR=OFF\}$^\dag$ & Yes & Yes &100.0 & 61 & 100.0 & 61 \\
\{TBET=ON, & Yes & Yes &100.0 & 61 & 100.0 & 61 \\
Ceramide=ON, &  &  & &  &  &  \\
RAS=ON\} &  &  & &  &  &  \\
\{TBET=ON, & Yes & Yes &100.0 & 61 & 100.0 & 61 \\
Ceramide=ON, &  &  & &  &  &  \\
GRB2=ON\} &  &  & &  &  &  \\
\{TBET=ON, & Yes & Yes &100.0 & 61 & 100.0 & 61 \\
Ceramide=ON, &  &  & &  &  &  \\
IL2RB=ON\} &  &  & &  &  &  \\
\{TBET=ON, & Yes & Yes &100.0 & 61 & 100.0 & 61 \\
Ceramide=ON,  &  &  & &  &  &  \\
IL2RBT=ON\} &  &  & &  &  &  \\
\{TBET=ON, & Yes & Yes &100.0 & 61 & 100.0 & 61 \\
Ceramide=ON, &  &  & &  &  &  \\
ERK=ON\} &  &  & &  &  &  \\
\{TBET=ON, & Yes & Yes &100.0 & 61 & 100.0 & 61 \\
Ceramide=ON, &  &  & &  &  &  \\
MEK=ON, &  &  & &  &  &  \\
PI3K=ON\} &  &  & &  &  &  \\
\hline
\multicolumn7c{T-LGL stable motif blocking interventions ($B_{TLGL}$)} \\
\hline
\{Ceramide=ON\} & Yes & Yes &100.0 & 61 & 100.0 & 61 \\
\{PI3K=OFF\}$^\dag$ & Yes & No & 89.0 & 43 & 61.1 & 2 \\
\{RAS=OFF\}$^\dag$ & Yes & No & 95.2 & 53 & 62.0 & 0 \\
\{GRB2=OFF\}$^\dag$ & No & No & 58.5 & -6 & 62.1 & 0 \\
\{MEK=OFF\}$^\dag$ & Yes & No & 100.0 & 61 & 62.4 & 1 \\
\{ERK=OFF\}$^\dag$ & Yes & No & 100.0 & 61 & 62.1 & 0 \\
\{IL2RBT=OFF\}$^\dag$ & Yes & No & 100.0 & 61 & 62.1 & 0 \\
\{IL2RB=OFF\}$^\dag$ & Yes & No & 100.0 & 61 & 62.1 & 0 \\
\hline
\multicolumn7c{Apoptosis stable motif blocking interventions ($B_{Apoptosis}$)} \\
\hline
\{SPHK1=ON\} & Yes & Yes &12.4 & -80 & 12.3 & -80 \\
\{PDGFR=ON\} & Yes & Yes &23.6 & -62 & 23.8 & -62 \\
\{Ceramide=OFF\} & Yes & Partial &10.2 & -84 & 50.0 & -20 \\
\{sFas=ON\} & Yes & No &0.0 & -100 & 59.7 & -4 \\
\{Fas=OFF\} & Yes & No &0.0 & -100 & 56.9 & -9 \\
\{TBET=OFF\}$^\dag$ & Yes & No & 9.7 & -85 & 61.9 & 0 \\ \hline
\end{tabular*}
\end{table*}

\begin{table}[t]
\begin{tabular*}{\hsize}{@{\extracolsep{\fill}}ccccccc} \hline
Intervention & Successful? & Long-term? & Apoptosis & Relative apoptosis & Apoptosis & Relative apoptosis \\
     &  &  & \%  & \% change & \% & \% change \\
& & & (permanent & (permanent  & (nonpermanent & (nonpermanent \\
& & & intervention) & intervention) & intervention) & intervention) \\
\hline
\multicolumn7c{Single interventions of T-LGL stable motif control sets} \\
\hline
\{SPHK1=ON\} & Yes & Yes &8.2 & -87 & 12 & -80 \\
\{PDGFR=ON\} & Yes & Yes &23.9 & -62 & 23.8 & -62 \\
\{Ceramide=OFF\} & Yes & Partial &9.4 & -84 & 50.0 & -20 \\
\hline
\multicolumn7c{Single interventions of apoptosis stable motif control sets} \\ \hline
\{TBET=ON\} & No & No &62.2 & 0 & 62.3 & 0 \\
\{Ceramide=ON\} & Yes & Yes &100.0 & 61 & 100.0 & 61 \\
\{RAS=ON\} & No & No &62.4 & 0 & 62.6 & 1 \\
\{GRB2=ON\} & No & No &62.2 & 0 & 62.3 & 0 \\
\{IL2RB=ON\} & No & No &62.1 & 0 & 62.2 & 0 \\
\{IL2RBT=ON\} & No & No &62.1 & 0 & 62.3 & 0 \\
\{ERK=ON\} & No & No &62.1 & 0 & 62.3 & 0 \\
\{MEK=ON\} & No & No &62.2 & 0 & 62.0  & 0 \\
\{PI3K=ON\} & No & No &62.3 & 0 & 62.6 & 1 \\ \hline
\end{tabular*}
\end{table}

\clearpage

\begin{table*}[t]
\caption{Validation of the intervention targets in Table 2 and single interventions from control sets with more than one node in Table 2 for the helper T cell network. The relative attractor \% change is defined as $(\hbox{attractor \%}-\hbox{normal attractor \%})/(\hbox{normal attractor \%})$, where the normal attractor \%  is the percentage of initial conditions that go to the attractor of interest when no intervention is applied. The normal attractor percentages are 48.6 \%, 47.5 \%, 1.3 \%, and 2.6 \% for the Th1, Th2, Th17, and Treg helper T cell subtypes, respectively. Interventions marked with $\dag$ appear in more than one control strategy or target attractor in Table 2. The percentages are significant in the digits shown and have an estimated absolute error (standard deviation of the mean) of $3\cdot10^{-3}[\%p_{Attr}(100\%-\%p_{Attr})]^{1/2}\ \%$, where $\%p_{Attr}$ is the percentage shown (e.g. 0.03\% for a $\%p_{Attr}$ of 1\%, and 0.15\% for a $\%p_{Attr}$ of 50\%).}
\label{tab:SimulationsTh}
\begin{tabular*}{\hsize}{@{\extracolsep{\fill}}ccccccc} \hline
Intervention & Successful? & Long-term? & Attractor & Relative attractor & Attractor & Relative attractor \\
     &  &  & \%  & \% change& \% & \% change \\
& & & (permanent & (permanent  & (nonpermanent & (nonpermanent \\
& & & intervention) & intervention) & intervention) & intervention) \\
\hline
\multicolumn7c{Th1 stable motif control interventions ($C_{Th1}$)} \\ \hline
\{TBET=ON\} & Yes & Yes & 100.0 & 106 & 100.0 & 106 \\
\hline
\multicolumn7c{Th2 stable motif control interventions ($C_{Th2}$)} \\
\hline
\{GATA3=ON\} & Yes & Yes & 100.0 & 111 & 100.0 & 111 \\
\hline
\multicolumn7c{Th17 stable motif control interventions ($C_{Th17}$)} \\
\hline
\{GATA3=OFF, & Yes & Yes &100.0 & 7357 & 100.0 & 7357 \\
FOXP3=OFF, &  &  & &  &  &  \\
TBET=OFF, &  &  & &  &  &  \\
STAT3=ON\} &  &  & &  &  &  \\
\{GATA3=OFF, & Yes & Yes &100.0 & 7357 & 100.0 & 7357 \\
FOXP3=OFF, &  &  & &  &  &  \\
TBET=OFF, &  &  & &  &  &  \\
IL10=ON\} &  &  & &  &  &  \\
\{GATA3=OFF, & Yes & Yes &100.0 & 7357 & 100.0 & 7357 \\
FOXP3=OFF, &  &  & &  &  &  \\
TBET=OFF, &  &  & &  &  &  \\
IL10R=ON\} &  &  & &  &  &  \\
\{GATA3=OFF, & Yes & Yes &100.0 & 7357 & 100.0 & 7357 \\
FOXP3=OFF, &  &  & &  &  &  \\
TBET=OFF, &  &  & &  &  &  \\
IL21=ON\} &  &  & &  &  &  \\
\{GATA3=OFF, & Yes & Yes &100.0 & 7357 & 100.0 & 7357 \\
FOXP3=OFF, &  &  & &  &  &  \\
TBET=OFF, &  &  & &  &  &  \\
IL21R=ON\} &  &  & &  &  &  \\
\{GATA3=OFF, & Yes & Yes &100.0 & 7357 & 100.0 & 7357 \\
FOXP3=OFF, &  &  & &  &  &  \\
TBET=OFF, &  &  & &  &  &  \\
IL23R=ON, &  &  & &  &  &  \\
RORGT=ON\} &  &  & &  &  &  \\
\hline
\multicolumn7c{Treg stable motif control interventions ($C_{Treg}$)} \\
\hline
\{GATA3=OFF, & Yes & Yes &100.0 & 3706 & 100.0 & 3706 \\
FOXP3=ON, &  &  & &  &  &  \\
TBET=OFF\} &  &  & &  &  &  \\
\{GATA3=OFF, & Yes & Yes &100.0 & 3706 & 100.0 & 3706 \\
TBET=OFF, &  &  & &  &  &  \\
STAT3=OFF\} &  &  & &  &  &  \\
\{GATA3=OFF, & Yes & Yes &100.0 & 3706 & 100.0 & 3706 \\
TBET=OFF, &  &  & &  &  &  \\
IL23R=OFF, &  &  & &  &  &  \\
IL10R=OFF, &  &  & &  &  &  \\
IL21R=OFF\} &  &  & &  &  &  \\
\{GATA3=OFF, & Yes & Yes &100.0 & 3706 & 100.0 & 3706 \\
TBET=OFF, &  &  & &  &  &  \\
IL23R=OFF, &  &  & &  &  &  \\
IL10=OFF, &  &  & &  &  &  \\
IL21R=OFF\} &  &  & &  &  &  \\
\{GATA3=OFF, & Yes & Yes &100.0 & 3706 & 100.0 & 3706 \\
TBET=OFF, &  &  & &  &  &  \\
IL23R=OFF, &  &  & &  &  &  \\
IL10R=OFF, &  &  & &  &  &  \\
IL21=OFF\} &  &  & &  &  &  \\
\hline
\end{tabular*}
\end{table*}

\begin{table*}[t]
\begin{tabular*}{\hsize}{@{\extracolsep{\fill}}ccccccc} \hline
Intervention & Successful? & Long-term? & Attractor & Relative attractor & Attractor & Relative attractor \\
     &  &  & \%  & \% change& \% & \% change \\
& & & (permanent & (permanent  & (nonpermanent & (nonpermanent \\
& & & intervention) & intervention) & intervention) & intervention) \\
\hline
\multicolumn7c{Treg stable motif control interventions ($C_{Treg}$) (continuation)} \\
\hline
\{GATA3=OFF, & Yes & Yes &100.0 & 3706 & 100.0 & 3706 \\
TBET=OFF, &  &  & &  &  &  \\
IL23R=OFF, &  &  & &  &  &  \\
IL10=OFF, &  &  & &  &  &  \\
IL21=OFF\} &  &  & &  &  &  \\
\hline
\multicolumn7c{Th1 stable motif blocking interventions ($B_{Th1}$)} \\
\hline
\{GATA3=ON\}$^\dag$  &  Yes  &  Yes  & 0.0 & -100 & 0.0 & -100\\
\{IL4=ON\}$^\dag$  &  No  &  No  & 48.2 & -1 & 48.1 & -1\\
\{IL4R 2=ON\}$^\dag$  &  No  &  No  & 47.2 & -3 & 47.4 & -2\\
\{STAT6=ON\}$^\dag$  &  No  &  No  & 45.3 & -7 & 45.0 & -7\\
\{STAT1=OFF\}  &  Yes  &  Yes  & 37.2 & -23 & 37.5 & -23\\
\{IFNG=OFF\}  &  No  &  No  & 48.2 & -1 & 48.0 & -1\\
\{IFNGR=OFF\}  &  No  &  No  & 47.0 & -3 & 46.8 & -4\\
\{IL23=OFF\}$^\dag$  &  No  &  No  & 48.7 & 0 & 48.9 & 1\\
\{IL10=ON\}$^\dag$  &  No  &  No  & 48.6 & 0 & 48.8 & 1\\
\{IL10=OFF\}$^\dag$  &  No  &  No  & 48.9 & 1 & 48.7 & 0\\
\{IL10R=ON\}$^\dag$  &  No  &  No  & 48.8 & 1 & 48.6 & 0\\
\{IL10R=OFF\}$^\dag$  &  No  &  No  & 48.6 & 0 & 48.9 & 1\\
\{IL21=ON\}$^\dag$  &  No  &  No  & 48.0 & -1 & 48.8 & 0\\
\{IL21=OFF\}$^\dag$  &  No  &  No  & 48.7 & 0 & 48.4 & 0\\
\{IL21R=ON\}$^\dag$  &  No  &  No  & 48.8 & 0 & 48.6 & 0\\
\{IL21R=OFF\}$^\dag$  &  No  &  No  & 48.6 & 0 & 48.7 & 0\\
\{STAT3=ON\}$^\dag$  &  No  &  No  & 48.8 & 1 & 48.8 & 0\\
\{IL23R=ON\}$^\dag$  &  No  &  No  & 48.6 & 0 & 48.6 & 0\\
\{IL23R=OFF\}$^\dag$  &  No  &  No  & 48.7 & 0 & 49.1 & 1\\
\{RORGT=ON\}$^\dag$  &  No  &  No  & 48.7 & 0 & 48.9 & 1\\
\{RORGT=OFF\}$^\dag$  &  No  &  No  & 48.7 & 0 & 48.6 & 0\\
\{FOXP3=ON\}$^\dag$  &  No  &  No  & 48.6 & 0 & 48.4 & 0\\
\{FOXP3=OFF\}$^\dag$  &  No  &  No  & 48.7 & 0 & 48.7 & 0\\
\hline
\multicolumn7c{Th2 stable motif blocking interventions ($B_{Th2}$)} \\
\hline
\{TBET=ON\}$^\dag$  &  Yes  &  Yes  & 0.0 & -100 & 0.0 & -100\\
\{GATA3=OFF\}  &  Yes  &  Yes  & 0.0 & -100 & 0.0 & -100\\
\{STAT1=ON\}$^\dag$  &  No  &  No  & 44.7 & -6 & 44.7 & -6\\
\{IFNG=ON\}$^\dag$  &  No  &  No  & 47.3 & 0 & 47.0 & -1\\
\{IFNGR=ON\}$^\dag$  &  No  &  No  & 46.6 & -2 & 46.6 & -2\\
\{IL23=OFF\}$^\dag$  &  No  &  No  & 47.5 & 0 & 47.3 & 0\\
\{IL23R=OFF\}$^\dag$  &  No  &  No  & 47.5 & 0 & 47.1 & -1\\
\{STAT3=OFF\}$^\dag$  &  No  &  No  & 47.3 & 0 & 47.3 & 0\\
\{IL10=OFF\}$^\dag$  &  No  &  No  & 47.3 & 0 & 47.5 & 0\\
\{IL10R=OFF\}$^\dag$  &  No  &  No  & 47.6 & 0 & 47.3 & 0\\
\{RORGT=ON\}$^\dag$  &  No  &  No  & 47.5 & 0 & 47.3 & 0\\
\{FOXP3=ON\}$^\dag$  &  No  &  No  & 47.5 & 0 & 47.7 & 0\\
\{FOXP3=OFF\}$^\dag$  &  No  &  No  & 47.6 & 0 & 47.5 & 0\\
\hline
\end{tabular*}
\end{table*}

\begin{table*}[t]
\begin{tabular*}{\hsize}{@{\extracolsep{\fill}}ccccccc} \hline
Intervention & Successful? & Long-term? & Attractor & Relative attractor & Attractor & Relative attractor \\
     &  &  & \%  & \% change& \% & \% change \\
& & & (permanent & (permanent  & (nonpermanent & (nonpermanent \\
& & & intervention) & intervention) & intervention) & intervention) \\
\hline
\multicolumn7c{Th17 stable motif blocking interventions ($B_{Th17}$)} \\
\hline
\{GATA3=ON\}$^\dag$  &  Yes  &  Yes  & 0.0 & -100 & 0.0 & -100\\
\{TBET=ON\}$^\dag$  &  Yes  &  Yes  & 0.0 & -100 & 0.0 & -100\\
\{IL4=ON\}$^\dag$  &  Yes  &  Yes  & 0.0 & -100 & 0.0 & -100\\
\{IL4R\_2=ON\}$^\dag$  &  Yes  &  Yes  & 0.0 & -100 & 0.0 & -100\\
\{STAT6=ON\}$^\dag$  &  Yes  &  Yes  & 0.0 & -100 & 0.0 & -100\\
\{STAT1=ON\}$^\dag$  &  Yes  &  Yes  & 0.0 & -100 & 0.0 & -100\\
\{IFNG=ON\}$^\dag$  &  Yes  &  Yes  & 0.0 & -100 & 0.0 & -100\\
\{IFNGR=ON\}$^\dag$  &  Yes  &  Yes  & 0.0 & -100 & 0.0 & -100\\
\{STAT3=OFF\}$^\dag$  &  Yes  &  Yes  & 0.0 & -100 & 0.0 & -100\\
\{FOXP3=ON\}$^\dag$  &  Yes  &  Yes  & 0.0 & -100 & 0.0 & -100\\
\{RORGT=OFF\}$^\dag$  &  Yes  &  Yes  & 0.0 & -100 & 0.0 & -100\\
\{IL21=OFF\}$^\dag$  &  Yes  &  Yes  & 1.1 & -20 & 1.1 & -20\\
\{IL21R=OFF\}$^\dag$  &  Yes  &  Yes  & 1.0 & -28 & 1.0 & -23\\
\{IL23=OFF\}$^\dag$  &  Partial  &  Partial  & 1.2 & -11 & 1.2 & -12\\
\{IL23R=OFF\}$^\dag$  &  Yes  &  Yes  & 1.1 & -19 & 1.1 & -19\\
\{IL10=OFF\}$^\dag$  &  Yes  &  Yes  & 1.1 & -20 & 1.1 & -20\\
\{IL10R=OFF\}$^\dag$  &  Yes  &  Yes  & 1.0 & -28 & 1.0 & -29\\
\hline
\multicolumn7c{Treg stable motif blocking interventions ($B_{Treg}$)} \\
\hline
\{GATA3=ON\}$^\dag$  &  Yes  &  Yes  & 0.0 & -100 & 0.0 & -100\\
\{TBET=ON\}$^\dag$  &  Yes  &  Yes  & 0.0 & -100 & 0.0 & -100\\
\{IL4=ON\}$^\dag$  &  Yes  &  Yes  & 0.0 & -100 & 0.0 & -100\\
\{IL4R\_2=ON\}$^\dag$  &  Yes  &  Yes  & 0.0 & -100 & 0.0 & -100\\
\{STAT6=ON\}$^\dag$  &  Yes  &  Yes  & 0.0 & -100 & 0.0 & -100\\
\{STAT1=ON\}$^\dag$  &  Yes  &  Yes  & 0.0 & -100 & 0.0 & -100\\
\{IFNG=ON\}$^\dag$  &  Yes  &  Yes  & 0.0 & -100 & 0.0 & -100\\
\{IFNGR=ON\}$^\dag$  &  Yes  &  Yes  & 0.0 & -100 & 0.0 & -100\\
\{STAT3=ON\}$^\dag$  &  Yes  &  Yes  & 0.6 & -76 & 0.6 & -76\\
\{STAT3=OFF\}$^\dag$  &  No  &  No  & 3.7 & 41 & 3.7 & 42\\
\{FOXP3=OFF\}$^\dag$  &  Yes  &  Yes  & 0.0 & -100 & 2.0 & -23\\
\{RORGT=ON\}$^\dag$  &  No  &  No  & 2.4 & -10 & 2.4 & -9\\
\{RORGT=OFF\}$^\dag$  &  No  &  No  & 3.9 & 48 & 3.9 & 50\\
\{IL21=ON\}$^\dag$  &  Yes  &  Yes  & 1.1 & -60 & 1.0 & -61\\
\{IL21=OFF\}$^\dag$  &  No  &  No  & 2.7 & 2 & 2.7 & 3\\
\{IL21R=ON\}$^\dag$  &  Yes  &  Yes  & 0.8 & -70 & 0.8 & -71\\
\{IL21R=OFF\}$^\dag$  &  No  &  No  & 2.9 & 12 & 2.8 & 7\\
\{IL23=OFF\}$^\dag$  &  No  &  No  & 2.6 & -2 & 2.7 & 2\\
\{IL23R=ON\}$^\dag$  &  Yes  &  Yes  & 0.8 & -70 & 0.8 & -69\\
\{IL23R=OFF\}$^\dag$  &  No  &  No  & 2.8 & 6 & 2.7 & 3\\
\{IL10=ON\}$^\dag$  &  Yes  &  Yes  & 1.1 & -60 & 1.0 & -60\\
\{IL10=OFF\}$^\dag$  &  No  &  No  & 2.7 & 4 & 2.7 & 4\\
\{IL10R=ON\}$^\dag$  &  Yes  &  Yes  & 0.7 & -72 & 0.8 & -70\\
\{IL10R=OFF\}$^\dag$  &  No  &  No  & 2.9 & 10 & 2.8 & 8\\
\hline
\end{tabular*}
\end{table*}

\begin{table*}[t]
\begin{tabular*}{\hsize}{@{\extracolsep{\fill}}ccccccc} \hline
Intervention & Successful? & Long-term? & Attractor & Relative attractor & Attractor & Relative attractor \\
     &  &  & \%  & \% change& \% & \% change \\
& & & (permanent & (permanent  & (nonpermanent & (nonpermanent \\
& & & intervention) & intervention) & intervention) & intervention) \\
\hline
\multicolumn7c{Single interventions of Th17 stable motif control sets} \\
\hline
\{GATA3=OFF\}  &  Yes  &  Yes  & 6.3 & 369 & 6.2 & 359 \\
\{FOXP3=OFF\}  &  Partial  &  Partial  & 1.7 & 25 & 1.8 & 31 \\
\{TBET=OFF\}  &  Yes  &  Yes  & 7.5 & 461 & 7.6 & 468 \\
\{STAT3=ON\}  &  Yes  &  Yes  & 3.3 & 146 & 3.2 & 142 \\
\{IL10=ON\}  &  Yes  &  Yes  & 2.8 & 110 & 2.9 & 114 \\
\{IL10R=ON\}  &  Yes  &  Yes  & 3.1 & 132 & 3.1 & 130 \\
\{IL21=ON\}  &  Yes  &  Yes  & 3.0 & 120 & 2.8 & 107 \\
\{IL21R=ON\}  &  Yes  &  Yes  & 3.1 & 127 & 3.1 & 133 \\
\{IL23R=ON\}  &  Yes  &  Yes  & 3.1 & 134 & 3.1 & 130 \\
\{RORGT=ON\}  &  No  &  No  & 1.5 & 9 & 1.4 & 6 \\
\hline
\multicolumn7c{Single interventions of Treg stable motif control sets} \\
\hline
\{GATA3=OFF\}  &  Yes  &  Yes  & 12.0 & 358 & 11.9 & 354\\
\{FOXP3=ON\}  &  Partial  &  Partial  & 3.9 & 49 & 3.9 & 49\\
\{TBET=OFF\}  &  Yes  &  Yes  & 13.5 & 415 & 13.7 & 423\\
\{STAT3=OFF\}  &  Partial  &  Partial  & 3.7 & 41 & 3.7 & 42\\
\{IL21=OFF\}  &  No  &  No  & 2.3 & -13 & 2.7 & 3\\
\{IL21R=OFF\}  &  No  &  No  & 2.6 & -2 & 2.8 & 7\\
\{IL23R=OFF\}  &  No  &  No  & 2.8 & 6 & 2.7 & 3\\
\hline
\end{tabular*}
\end{table*}

\clearpage

\begin{table*}[t]
\caption{Validation of the intervention targets in Table 1 for the T-LGL leukemia differential equation network model and single interventions from control sets with more than one node in Table 1 for the T-LGL leukemia differential equation network model. The relative apoptosis \% change is defined as $(\hbox{Apoptosis \%}-\hbox{Normal apoptosis \%})/(\hbox{Normal apoptosis \%})$, where Normal apoptosis \% = 54.7 \% is the percentage of initial conditions that go to apoptosis when no intervention is applied. Interventions marked with $\dag$ appear in more than one control strategy or target attractor in Table 1. The percentages are significant in the digits shown and have an estimated absolute error (standard deviation of the mean) of $6\cdot10^{-3}[\%p_{Attr}(100\%-\%p_{Attr})]^{1/2}\ \%$, where $\%p_{Attr}$ is the percentage shown (e.g. 0.06\% for a $\%p_{Attr}$ of 1\%, and 0.3\% for a $\%p_{Attr}$ of 50\%).}
\label{tab:SimulationsTLGLODE}
\begin{tabular*}{\hsize}{@{\extracolsep{\fill}}ccccccc}
\hline
Intervention & Successful? & Long-term? & Apoptosis & Relative apoptosis & Apoptosis & Relative apoptosis \\
     &  &  & \%  & \% change& \% & \% change \\
& & & (permanent & (permanent  & (nonpermanent & (nonpermanent \\
& & & intervention) & intervention) & intervention) & intervention) \\
\hline
\multicolumn7c{T-LGL stable motif control interventions ($C_{TLGL}$)} \\
\hline
\{S1P=ON\}$^\dag$  &  Yes  &  Yes  & 0.0 & -100 & 0.0 & -100\\
\{Ceramide=OFF,  &  Yes  &  Yes  & 0.0 & -100 & 0.0 & -100\\
SPHK1=ON\}  &    &    &   &    &    & \\
\{Ceramide=OFF,  &  Yes  &  Yes  & 0.0 & -100 & 0.0 & -100\\
PDGFR=ON\}  &    &    &   &    &    & \\
\hline
\multicolumn7c{Apoptosis stable motif control interventions ($C_{Apoptosis}$)} \\
\hline
\{S1P=OFF\}$^\dag$  &  Yes  &  Yes  & 99.9 & 83 & 99.9 & 83\\
\{SPHK1=OFF\}$^\dag$  &  Yes  &  Yes  & 99.9 & 83 & 99.9 & 83\\
\{PDGFR=OFF\}$^\dag$  &  Yes  &  Yes  & 99.8 & 83 & 99.8 & 83\\
\{TBET=ON,  &  Yes  &  Yes  & 100.0 & 83 & 100.0 & 83\\
Ceramide=ON,  &    &    &   &    &    & \\
RAS=ON\}  &    &    &   &    &    & \\
\{TBET=ON,  &  Yes  &  Yes  & 100.0 & 83 & 100.0 & 83\\
Ceramide=ON,  &    &    &   &    &    & \\
GRB2=ON\}  &    &    &   &    &    & \\
\{TBET=ON,  &  Yes  &  Yes  & 100.0 & 83 & 100.0 & 83\\
Ceramide=ON,  &    &    &   &    &    & \\
IL2RB=ON\}  &    &    &   &    &    & \\
\{TBET=ON,  &  Yes  &  Yes  & 100.0 & 83 & 100.0 & 83\\
Ceramide=ON,   &    &    &   &    &    & \\
IL2RBT=ON\}  &    &    &   &    &    & \\
\{TBET=ON,  &  Yes  &  Yes  & 100.0 & 83 & 100.0 & 83\\
Ceramide=ON,  &    &    &   &    &    & \\
ERK=ON\}  &    &    &   &    &    & \\
\{TBET=ON,  &  Yes  &  Yes  & 100.0 & 83 & 100.0 & 83\\
Ceramide=ON,  &    &    &   &    &    & \\
MEK=ON,  &    &    &   &    &    & \\
PI3K=ON\}  &    &    &   &    &    & \\
\hline
\multicolumn7c{T-LGL stable motif blocking interventions ($B_{TLGL}$)} \\
\hline
\{Ceramide=ON\}  &  Yes  &  Yes  & 99.9 & 83 & 99.9 & 83\\
\{PI3K=OFF\}$^\dag$  &  Yes  &  No  & 98.0 & 79 & 50.5 & -8\\
\{RAS=OFF\}$^\dag$  &  Yes  &  No  & 99.6 & 82 & 53.7 & -2\\
\{GRB2=OFF\}$^\dag$  &  No  &  No  & 54.6 & 0 & 54.3 & -1\\
\{MEK=OFF\}$^\dag$  &  Yes  &  No  & 100.0 & 83 & 54.4 & 0\\
\{ERK=OFF\}$^\dag$  &  Yes  &  No  & 100.0 & 83 & 54.3 & -1\\
\{IL2RBT=OFF\}$^\dag$  &  Yes  &  No  & 99.9 & 83 & 54.4 & -1\\
\{IL2RB=OFF\}$^\dag$  &  Yes  &  No  & 99.9 & 83 & 54.3 & -1\\
\hline
\multicolumn7c{Apoptosis stable motif blocking interventions ($B_{Apoptosis}$)} \\
\hline
\{SPHK1=ON\}  &  Yes  &  Yes  & 8.8 & -84 & 7.7 & -86\\
\{PDGFR=ON\}  &  Yes  &  Yes  & 13.9 & -75 & 13.7 & -75\\
\{Ceramide=OFF\}  &  Yes  &  Partial  & 11.2 & -79 & 33.5 & -39\\
\{sFas=ON\}  &  Yes  &  No  & 11.3 & -79 & 48.5 & -11\\
\{Fas=OFF\}  &  Yes  &  No  & 10.1 & -81 & 43.4 & -21\\
\{TBET=OFF\}$^\dag$  &  Yes  &  Yes  & 0.0 & -100 & 0.5 & -99\\
\hline
\end{tabular*}
\end{table*}

\begin{table*}[t]
\begin{tabular*}{\hsize}{@{\extracolsep{\fill}}ccccccc} \hline
Intervention & Successful? & Long-term? & Apoptosis & Relative apoptosis & Apoptosis & Relative apoptosis \\
     &  &  & \%  & \% change& \% & \% change \\
& & & (permanent & (permanent  & (nonpermanent & (nonpermanent \\
& & & intervention) & intervention) & intervention) & intervention) \\
\hline
\multicolumn7c{Single interventions of T-LGL stable motif control sets} \\
\hline
\{SPHK1=ON\}  &  Yes  &  Yes  & 8.8 & -84 & 7.7 & -86\\
\{PDGFR=ON\}  &  Yes  &  Yes  & 13.9 & -75 & 13.7 & -75\\
\{Ceramide=OFF\}  &  Yes  &  Partial  & 11.2 & -79 & 33.5 & -39\\
\hline
\multicolumn7c{Single interventions of apoptosis stable motif control sets} \\
\hline
\{TBET=ON\}  &  No  &  No  & 54.9 & 0 & 54.4 & 0\\
\{Ceramide=ON\}  &  Yes  &  Yes  & 99.9 & 83 & 99.9 & 83\\
\{RAS=ON\}  &  No  &  No  & 55.0 & 1 & 54.6 & 0\\
\{GRB2=ON\}  &  No  &  No  & 54.9 & 0 & 54.5 & 0\\
\{IL2RB=ON\}  &  No  &  No  & 54.8 & 0 & 54.4 & 0\\
\{IL2RBT=ON\}  &  No  &  No  & 54.8 & 0 & 54.4 & 0\\
\{ERK=ON\}  &  No  &  No  & 54.9 & 0 & 54.5 & 0\\
\{MEK=ON\}  &  No  &  No  & 54.8 & 0 & 54.4 & 0\\
\{PI3K=ON\}  &  No  &  No  & 55.3 & 1 & 54.9 & 0\\
\hline
\end{tabular*}
\end{table*}

\clearpage

\begin{table*}[t]
\caption{Validation of the stable motif control intervention targets in Table 2 for the helper T cell differential equation network model. The relative attractor \% change is defined as $(\hbox{attractor \%}-\hbox{normal attractor \%})/(\hbox{normal attractor \%})$, where the normal attractor \%  is the percentage of initial conditions that go to the attractor of interest when no intervention is applied. The normal attractor percentages are 50.0 \%, 45.4 \%, 2.8 \%, and 1.8 \% for the Th1, Th2, Th17, and Treg helper T cell subtypes, respectively. Interventions marked with $\dag$ appear in more than one control strategy or target attractor in Table 2. The percentages are significant in the digits shown and have an estimated absolute error (standard deviation of the mean) of $6\cdot10^{-3}[\%p_{Attr}(100\%-\%p_{Attr})]^{1/2}\ \%$, where $\%p_{Attr}$ is the percentage shown (e.g. 0.06\% for a $\%p_{Attr}$ of 1\%, and 0.3\% for a $\%p_{Attr}$ of 50\%).}
\label{tab:SimulationsThODE}
\begin{tabular*}{\hsize}{@{\extracolsep{\fill}}ccccccc} \hline
Intervention & Successful? & Long-term? & Attractor & Relative attractor & Attractor & Relative attractor \\
     &  &  & \%  & \% change& \% & \% change \\
& & & (permanent & (permanent  & (nonpermanent & (nonpermanent \\
& & & intervention) & intervention) & intervention) & intervention) \\
\hline
\multicolumn7c{Th1 stable motif control interventions ($C_{Th1}$)} \\ \hline
\{TBET=ON\} & Yes & Yes & 100.0 & 100 & 100.0 & 100 \\
\hline
\multicolumn7c{Th2 stable motif control interventions ($C_{Th2}$)} \\
\hline
\{GATA3=ON\} & Yes & Yes & 100.0 & 120 & 100.0 & 120 \\
\hline
\multicolumn7c{Th17 stable motif control interventions ($C_{Th17}$)} \\
\hline
\{GATA3=OFF, & Yes & Yes &100.0 & 3437 & 100.0 & 3437 \\
FOXP3=OFF, &  &  & &  &  &  \\
TBET=OFF, &  &  & &  &  &  \\
STAT3=ON\} &  &  & &  &  &  \\
\{GATA3=OFF, & Yes & Yes &100.0 & 3437 & 100.0 & 3437 \\
FOXP3=OFF, &  &  & &  &  &  \\
TBET=OFF, &  &  & &  &  &  \\
IL10=ON\} &  &  & &  &  &  \\
\{GATA3=OFF, & Yes & Yes &100.0 & 3437 & 100.0 & 3437 \\
FOXP3=OFF, &  &  & &  &  &  \\
TBET=OFF, &  &  & &  &  &  \\
IL10R=ON\} &  &  & &  &  &  \\
\{GATA3=OFF, & Yes & Yes &100.0 & 3437 & 100.0 & 3437 \\
FOXP3=OFF, &  &  & &  &  &  \\
TBET=OFF, &  &  & &  &  &  \\
IL21=ON\} &  &  & &  &  &  \\
\{GATA3=OFF, & Yes & Yes &100.0 & 3437 & 100.0 & 3437 \\
FOXP3=OFF, &  &  & &  &  &  \\
TBET=OFF, &  &  & &  &  &  \\
IL21R=ON\} &  &  & &  &  &  \\
\{GATA3=OFF, & Yes & Yes &100 & 3437 & 100 & 3437 \\
FOXP3=OFF, &  &  & &  &  &  \\
TBET=OFF, &  &  & &  &  &  \\
IL23R=ON, &  &  & &  &  &  \\
RORGT=ON\} &  &  & &  &  &  \\
\hline
\multicolumn7c{Treg stable motif control interventions ($C_{Treg}$)} \\
\hline
\{GATA3=OFF, & Yes & Yes &100.0 & 5613 & 100.0 & 5613 \\
FOXP3=ON, &  &  & &  &  &  \\
TBET=OFF\} &  &  & &  &  &  \\
\{GATA3=OFF, & Yes & Yes &100.0 & 5613 & 100.0 & 5613 \\
TBET=OFF, &  &  & &  &  &  \\
STAT3=OFF\} &  &  & &  &  &  \\
\{GATA3=OFF, & Yes & Yes &100.0 & 5613 & 100.0 & 5613 \\
TBET=OFF, &  &  & &  &  &  \\
IL23R=OFF, &  &  & &  &  &  \\
IL10R=OFF, &  &  & &  &  &  \\
IL21R=OFF\} &  &  & &  &  &  \\
\{GATA3=OFF, & Yes & Yes &100.0 & 5613 & 100.0 & 5613 \\
TBET=OFF, &  &  & &  &  &  \\
IL23R=OFF, &  &  & &  &  &  \\
IL10=OFF, &  &  & &  &  &  \\
IL21R=OFF\} &  &  & &  &  &  \\
\{GATA3=OFF, & Yes & Yes &100.0 & 5613 & 100.0 & 5613 \\
TBET=OFF, &  &  & &  &  &  \\
IL23R=OFF, &  &  & &  &  &  \\
IL10R=OFF, &  &  & &  &  &  \\
IL21=OFF\} &  &  & &  &  &  \\
\hline
\end{tabular*}
\end{table*}

\clearpage

\begin{table*}[t]
\begin{tabular*}{\hsize}{@{\extracolsep{\fill}}ccccccc} \hline
Intervention & Successful? & Long-term? & Attractor & Relative attractor & Attractor & Relative attractor \\
     &  &  & \%  & \% change& \% & \% change \\
& & & (permanent & (permanent  & (nonpermanent & (nonpermanent \\
& & & intervention) & intervention) & intervention) & intervention) \\
\hline
\multicolumn7c{Treg stable motif control interventions ($C_{Treg}$) (continuation)} \\
\hline
\{GATA3=OFF, & Yes & Yes &100.0 & 5613 & 100.0 & 5613 \\
TBET=OFF, &  &  & &  &  &  \\
IL23R=OFF, &  &  & &  &  &  \\
IL10=OFF, &  &  & &  &  &  \\
IL21=OFF\} &  &  & &  &  &  \\
\hline
\multicolumn7c{Single interventions of Th17 stable motif control sets} \\
\hline
\{GATA3=OFF\}  &  No  &  No  & 1.7 & -41 & 1.7 & -40\\
\{FOXP3=OFF\}  &  Partial  &  Partial  & 4.3 & 54 & 3.7 & 30\\
\{TBET=OFF\}  &  Partial  &  Partial   & 3.8 & 35 & 3.8 & 34\\
\{STAT3=ON\}  &  Partial  &  Partial   & 4.0 & 41 & 4.0 & 41\\
\{IL10=ON\}  &  Partial  &  Partial  & 3.8 & 33 & 3.8 & 34\\
\{IL10R=ON\}  &  Partial  &  Partial   & 3.8 & 34 & 3.8 & 35\\
\{IL21=ON\}  &  Partial  &  Partial   & 3.8 & 33 & 3.8 & 34\\
\{IL21R=ON\}  &  Partial  &  Partial  & 3.8 & 34 & 3.8 & 35\\
\{IL23R=ON\}  &  Partial  &  Partial  & 3.8 & 34 & 3.8 & 35\\
\{RORGT=ON\}  &  No  &  No  & 3.0 & 7 & 3.0 & 6\\
\hline
\multicolumn7c{Single interventions of Treg stable motif control sets} \\
\hline
\{GATA3=OFF\}  &  No  &  No  & 1.4 & -18 & 1.6 & -10\\
\{FOXP3=ON\}  &  Yes  &  Yes  & 4.8 & 172 & 4.7 & 167\\
\{TBET=OFF\}  &  Partial  &  Partial  & 2.6 & 47 & 2.6 & 49\\
\{STAT3=OFF\}  &  Yes  &  Yes  & 4.1 & 137 & 4.2 & 137\\
\{IL21=OFF\}  &  Partial  &  Partial  & 2.4 & 35 & 2.4 & 39\\
\{IL21R=OFF\}  &  Partial  &  Partial  & 2.5 & 40 & 2.6 & 46\\
\{IL23R=OFF\}  &  Yes  &  Yes  & 2.0 & 14 & 2.0 & 13\\
\hline
\end{tabular*}
\end{table*}

\clearpage

\begin{table*}
\caption{Validation of some stable motif control intervention targets in Table 1 for different Hill coefficients ($n$) in the T-LGL leukemia differential equation network model. The percentages are significant in the digits shown and have an estimated absolute error (standard deviation of the mean) of $6\cdot10^{-3}[\%p_{Attr}(100\%-\%p_{Attr})]^{1/2}\ \%$, where $\%p_{Attr}$ is the percentage shown (e.g. 0.06\% for a $\%p_{Attr}$ of 1\%, and 0.3\% for a $\%p_{Attr}$ of 50\%).}
\label{tab:SimulationsTLGLODEHill}
\begin{tabular*}{\hsize}{@{\extracolsep{\fill}}ccccccccc}
\hline
Intervention & \multicolumn{4}{c}{Apoptosis \% (permanent intervention)} & \multicolumn{4}{c}{Apoptosis \% (nonpermanent intervention)} \\
\hline
\multicolumn{9}{c}{Different Hill coefficients ($n$)} \\
\hline
& \multicolumn{8}{c}{$n$}  \\ \cline{2-9}
 & 1 & 1.5 & 2 & 2.5 & 1 & 1.5 & 2 & 2.5  \\
\hline
\multicolumn{9}{c}{T-LGL stable motif control interventions ($C_{TLGL}$)} \\
\hline
\{S1P=ON\}  & 0.0 & 0.0 & 0.0 & 0.0& 0.0 & 0.0 & 0.0 & 0.0 \\
\{Ceramide=OFF,  & 0.0 & 0.0 & 0.0 & 0.0& 0.0 & 0.0 & 0.0 & 0.0 \\
PDGFR=ON\}  & &  &  & & & & &  \\
\hline
\multicolumn{9}{c}{Apoptosis stable motif control interventions ($C_{Apoptosis}$)} \\
\hline
\{SPHK1=OFF\} & 100.0 & 100.0 & 100.0 & 100.0 & 100.0 & 100.0 & 100.0 & 100.0  \\
\{TBET=ON,  & 100.0 & 100.0 & 100.0 & 100.0 & 100.0 & 100.0 & 100.0 & 100.0 \\
Ceramide=ON,    & &  &  & & & & & \\
IL2RB=ON\}    & &  &  & & & & & \\
\hline
\end{tabular*}
\end{table*}

\begin{table*}[!h]
\caption{Validation of some stable motif control intervention targets in Table 1 for different Hill coefficients ($n$) in the T-LGL leukemia differential equation network model with randomly chosen $\tau_i$ and $\theta_i$. The percentages are significant in the digits shown and have an estimated absolute error (standard deviation of the mean) of $5\cdot10^{-3}[\%p_{Attr}(100\%-\%p_{Attr})]^{1/2}\ \%$, where $\%p_{Attr}$ is the percentage shown (e.g. 0.05\% for a $\%p_{Attr}$ of 1\%, and 0.25\% for a $\%p_{Attr}$ of 50\%).}
\label{tab:SimulationsTLGLODEParams}
\begin{tabular*}{\hsize}{@{\extracolsep{\fill}}ccccccccccc}
\hline
Intervention & \multicolumn{5}{c}{Apoptosis \% (permanent intervention)} & \multicolumn{5}{c}{Apoptosis \% (nonpermanent intervention)} \\
\cline{2-11}
& \multicolumn{10}{c}{$n$} \\ \cline{2-11}
 & 1 & 1.5 & 2 & 2.5 & 3 & 1 & 1.5 & 2 & 2.5 & 3 \\
\hline
\multicolumn{11}{c}{T-LGL stable motif control interventions ($C_{TLGL}$)} \\
\hline
\{S1P=ON\}  & 0.0 & 0.0 & 0.0 & 0.0& 0.0 & 0.0 & 0.0 & 0.0 & 0.0& 0.0 \\
\{Ceramide=OFF,  & 0.0 & 0.0 & 0.0 & 0.0& 0.0 & 0.0 & 0.0 & 0.0 & 0.0& 0.0  \\
PDGFR=ON\}  & &  &  & & & & & & & \\
\hline
\multicolumn{11}{c}{Apoptosis stable motif control interventions ($C_{Apoptosis}$)} \\
\hline
\{SPHK1=OFF\} & 100.0 & 100.0 & 100.0 & 99.9 & 99.5 & 100.0 & 100.0 & 100.0 & 99.9 & 99.5 \\
\{TBET=ON,  & 100.0 & 100.0 & 100.0 & 100.0 & 100.0 & 100.0 & 100.0 & 100.0 & 100.0 & 100.0 \\
Ceramide=ON,    & &  &  & & & & & & &  \\
IL2RB=ON\}    & &  &  & & & & & & & \\
\hline
\end{tabular*}
\end{table*}

\begin{table*}[!h]
\caption{Validation of some stable motif control intervention targets in Table 1 when fixing the intervened node variables close to the intervention-prescribed values in the T-LGL leukemia differential equation network model. If the intervention is 0 (1), the node variable is fixed at 0.1 (0.9), 0.8 (0.2), 0.7 (0.3), or 0.6 (0.4). The percentages are significant in the digits shown and have an estimated absolute error (standard deviation of the mean) of $6\cdot10^{-3}[\%p_{Attr}(100\%-\%p_{Attr})]^{1/2}\ \%$, where $\%p_{Attr}$ is the percentage shown (e.g. 0.06\% for a $\%p_{Attr}$ of 1\%, and 0.3\% for a $\%p_{Attr}$ of 50\%).}
\label{tab:SimulationsTLGLODEConcentr}
\begin{tabular*}{\hsize}{@{\extracolsep{\fill}}ccccccccc}
\hline
& \multicolumn{8}{c}{Fixed value of intervened node variable} \\ \cline{2-9}
 & 0.1/0.9 & 0.2/0.8 & 0.3/0.7 & 0.4/0.6 & 0.1/0.9 & 0.2/0.8 & 0.3/0.7 & 0.4/0.6   \\
\hline
\multicolumn{9}{c}{T-LGL stable motif control interventions ($C_{TLGL}$)} \\
\hline
\{S1P=ON\}  & 0.0 & 0.0 & 0.0 & 0.0& 0.0 & 0.0 & 0.0 & 0.0  \\
\{Ceramide=OFF,  & 0.0 & 0.0 & 0.0 & 0.0& 0.0 & 0.0 & 0.0 & 0.0  \\
PDGFR=ON\}  & &  &  & & & & & \\
\hline
\multicolumn{9}{c}{Apoptosis stable motif control interventions ($C_{Apoptosis}$)} \\
\hline
\{SPHK1=OFF\} & 99.9 & 99.9 & 99.9 & 100.0 & 99.9 & 99.9 & 99.9 & 100.0  \\
\{TBET=ON,  & 100.0 & 100.0 & 100.0 & 100.0 & 100.0 & 100.0 & 100.0 & 100.0  \\
Ceramide=ON,    & &  &  & & & & &   \\
IL2RB=ON\}    & &  &  & & & & & \\
\hline
\end{tabular*}
\end{table*}


\begin{thebibliography}{10}

	\bibitem{StemCellsTakahashi} Takahashi K and Yamanaka S
    (2006)
    Induction of Pluripotent Stem Cells from Mouse Embryonic and Adult Fibroblast Cultures by Defined Factors.
    Cell 126 (4),
    652-655.

	\bibitem{CellReprogReview1} Pera MF and Tam PPL
    (2010)
    Extrinsic regulation of pluripotent stem cells.
    Nature 465 (7299),
    713-720.

	\bibitem{CellReprogReview2} Young RA
    (2011)
    Control of the Embryonic Stem Cell State.
    Cell 144 (6),
    940-954.

	\bibitem{SystemsMedicine} Auffray C, Chen Z, and Hood L
    (2009)
    Systems medicine: the future of medical genomics and healthcare.
    Genome Med 1:2.

	\bibitem{NetworkMedicine} Barab\'asi AL, Gulbahce N, and Loscalzo J
    (2011)
    Network medicine: a network-based approach to human disease.
    Nature Reviews Genetics 12,
    56-68.

	\bibitem{SystemsBiologyMedicine} Wolkenhauer O, Auffray C, Jaster R, Steinhoff G, and Dammann O
    (2013)
    The road from systems biology to systems medicine.
    Pediatric Research 73 (4-2),
    502-507.

	\bibitem{BarabasiControllability} Liu Y, Slotine J, and Barab\'asi AL
    (2011)
    Controllability of complex networks.
    Nature 473,
    167-173.

	\bibitem{MullerSchuppertReply} M\"uller FJ and Schuppert A
    (2011)
    Few inputs can reprogram biological networks.
    Nature 478, E4.

	\bibitem{BarabasiObservability} Liu Y, Slotine J, and Barab\'asi AL
    (2013)
    Observability of complex systems.
    Proc Natl Acad Sci USA 110 (7),
    2460-2465.

	\bibitem{NodalDynamics} Cowan NJ, Chastain EJ, Vilhena DA, Freudenberg JS, and Bergstrom CT
    (2012)
    Nodal Dynamics, Not Degree Distributions, Determine the Structural Controllability of Complex Networks.
    PLoS ONE 7(6),
    e38398.

	\bibitem{MotterControl} Cornelius SP, Kath WL, and Motter AE
    (2013)
    Realistic control of network dynamics.
    Nature Communications 4,
    1942.

	\bibitem{FVS1} Fiedler B, Mochizuki A, Kurosawa G, Saito D
    (2013)
    Dynamics and control at feedback vertex sets I: Informative and determining nodes in regulatory networks.
    J. Dyn. Differential Equations 2,
    DOI 10.1007/s10884-013-9312-7.

	\bibitem{FVS2} Mochizuki A, Fiedler B, Kurosawa G, Saito D
    (2013)
    Dynamics and control at feedback vertex sets. II: A faithful monitor to determine the diversity of molecular activities in regulatory networks.
    J. Theor. Biol. 335,
    130-146.

	\bibitem{Kalman} Kalman RE
    (1963)
    Mathematical description of linear dynamical systems.
    J. Soc. Indust. Appl. Math. Ser. A 1,
    152-192.

	\bibitem{Luenberger} Luenberger DG
    (1979)
    Introduction to Dynamic Systems: Theory, Models, and Applications.
    Wiley.

	\bibitem{Slotine} Slotine JJ, Li W
    (1991)
    Applied Nonlinear Control.
    Prentice-Hall.

	\bibitem{Lin} Lin CT
    (1974)
    Structural controllability.
    IEEE Trans. Automat. Contr. 19,
    201-208.

	\bibitem{TysonDynamics1} Tyson JJ, Chen KC, and Novak B
    (2001)
    Network dynamics and cell physiology.
    Nature Rev. Mol. Cell Biol. 2 (12),
    908-916.

	\bibitem{TysonDynamics2} Tyson JJ, Chen KC, and Novak B
    (2003)
    Sniffers, buzzers, toggles and blinkers: dynamics of regulatory and signaling pathways in the cell.
    Curr. Op. Cell Biol. 15,
    221-231.

	\bibitem{Akutsu} Akutsu T, Hayashida M, Ching WK, and Ng MK
    (2007)
    Control of Boolean networks: Hardness results and algorithms for tree structured networks.
    J. Theor. Biol. 244 (4),
    670-679.

	\bibitem{Cheng} Cheng D and Qi H
    (2009)
    Controllability and observability of Boolean control networks.
    Automatica 45 (7),
    1659-1667.

	\bibitem{Tamura} Akutsu T, Yang Z, Hayashida M, and Tamura T
    (2012)
    Integer Programming-Based Approach to Attractor Detection and Control of Boolean Networks.
    IEICE TRANS. INF. \& SYST. E95-D (12),
    2960-2970.

	\bibitem{LessIsMore} Bornholdt S
    (2005)
    Systems biology: Less is more in modeling large genetic networks.
    Science 310 (5747),
    449-451.

	\bibitem{MiskovTCell} Miskov-Zivanov N, Turner MS, Kane LP, Morel PA and Faeder JR
    (2013)
    The Duration of T Cell Stimulation Is a Critical Determinant of Cell Fate and Plasticity.
    Sci. Signal. 6,
    ra97.

	\bibitem{ArabidopsisRoot} Benitez M, Espinosa-Soto C, Padilla-Longoria P and Alvarez-Buylla ER
    (2008)
    Interlinked nonlinear subnetworks underlie the formation of robust cellular patterns in Arabidopsis epidermis: a dynamic spatial model.
    BMC Systems Biology 2:98.

	\bibitem{SaezRodriguezCancer} Saez-Rodriguez J, Alexopoulos LG, Zhang M, Morris MK, Lauffenburger DA, and Sorger PK
    (2011).
    Comparing signaling networks between normal and transformed hepatocytes using discrete logical models.
    Cancer Res. 71,
    5400-11.

	\bibitem{SocolarCellCycle} Orlando DA, Lin CY, Bernard A, Wang JY, Socolar JES et al.
    (2008).
    Global control of cell-cycle transcription by coupled CDK and network oscillators.
    Nature 453,
    944-947.

	\bibitem{TLGLPNAS} Zhang R, Shah MV, Yang J, Nyland SB, Liu X, et al.
    (2008)
    Network Model of Survival Signaling in LGL Leukemia.
    Proc Natl Acad Sci USA 105,
    16308-16313.

	\bibitem{SorgerReview} Morris MK, Saez-Rodriguez J, Sorger PK, and Lauffenburger DA
    (2010).
    Logic-based models for the analysis of cell signaling networks.
    Biochemistry 49,
    3216-24.

	\bibitem{PhysBioReview} Wang RS, Saadatpour A, and Albert R
    (2012).
    Boolean modeling in systems biology: an overview of methodology and applications.
    Physical Biology 9,
    055001.

	\bibitem{KauffmanOriginal} Kauffman SA
    (1969)
    Metabolic stability and epigenesis in randomly constructed genetic nets.
    J. Theor. Biol. 22,
    437-467.

	\bibitem{GlassKauffman} Glass L and Kauffman SA
    (1973)
    The logical analysis of continous, nonlinear biochemical control networks.
    J. Theor. Biol. 39,
    103-129.

	\bibitem{GlassAsynchronous} Glass L
    (1975)
    Classification of biological networks by their qualitative dynamics.
    J. Theor. Biol. 54 (1),
    85-107.

	\bibitem{ThomasReview} Thomas R, Thieffry D, and Kaufman M
    (1995)
    Dynamical behaviour of biological regulatory networks-I. Biological role of feedback loops and practical use of the concept of the loop-characteristic state.
    Bull. Math. Biol. 57 (2),
    247-276.

	\bibitem{Chaves} Chaves M, Sontag ED, and Albert R
    (2006)
    Methods of robustness analysis for Boolean models of gene control networks.
    Syst. Biol. (Stevenage) 153 (4),
    154-167

	\bibitem{AssiehJTB} Saadatpour A, Albert I, and Albert R
    (2010)
    Attractor analysis of asynchronous Boolean models of signal transduction networks.
    J. Theor. Biol. 266,
    641-656.

	\bibitem{Socolar} Sevim V, Gong X, and Socolar JE
    (2010)
    Reliability of Transcriptional Cycles and the Yeast Cell-Cycle Oscillator.
    PLoS Comput. Biol. 6 (7),
    e1000842.

	\bibitem{Laubenbacher} Murrugarra D, Veliz-Cuba A, Aguilar B, Arat S, and Laubenbacher R
    (2012)
    Modeling stochasticity and variability in gene regulatory networks.
    EURASIP Journal on Bioinformatics and Systems Biology 2012: 5.

	\bibitem{Huang1} Huang S and Ingber DE
    (2000).
    Shape-dependent control of cell growth, differentiation, and apoptosis: switching between attractors in cell regulatory networks. Exp Cell Res. 261(1),
    91-103.

	\bibitem{HuangCancerAttrs} Huang S, Ernberg I, Kauffman S
    (2009).
    Cancer attractors: a systems view of tumors from a gene network dynamics and developmental perspective.
    Semin Cell Dev Biol. 7,
    869-76.

	\bibitem{ReductionChaos} Za\~nudo JGT and Albert R
    (2013)
    An effective network reduction approach to find the dynamical repertoire of discrete dynamic networks.
    Chaos 23 (2), 025111.
    Focus Issue: Quantitative Approaches to Genetic Networks.

	\bibitem{DecimationProcess} Bilke S and Sjunnesson F
    (2001)
    Stability of the Kauffman model.
    Phys. Rev. E 65,
    016129.

	\bibitem{ReductionNadil} Naldi A, Remy \'E, Thieffry D, and Chaouiya C
    (2011)
    Dynamically consistent reduction of logical regulatory graphs.
    Theor Comput Sci 412,
    2207-2218.

	\bibitem{ReductionVeliz} Veliz-Cuba A
    (2011)
    Reduction of Boolean network models.
    J. Theor. Biol. 289,
    167-172.

	\bibitem{EMTModel} Steinway SN, Za\~nudo JGT, Ding W, Rountree C, Feith D, et al.
    (2014)
    Network Modeling of TGF$\beta$ Signaling in Hepatocellular Carcinoma Epithelial-to-Mesenchymal Transition Reveals Joint Sonic Hedgehog and Wnt Pathway Activation.
    Cancer Research 74 (21),
    5963–77.

	\bibitem{AssiehPCB} Saadatpour A, Wang RS, Liao A, Liu X, Loughran TP, et al.
    (2011)
    Dynamical and Structural Analysis of a T Cell Survival Network Identifies Novel Candidate Therapeutic Targets for Large Granular Lymphocyte Leukemia.
    PLoS Computational Biology 7(11): e1002267.
    doi: 10.1371/journal.pcbi.1002267.

	\bibitem{PDGFRTLGL} Shah MV, Zhang R, Irby R, Kothapalli R, Liu X, et al.
    (2008)
    Molecular profiling of LGL leukemia reveals role of sphingolipid signaling in survival of cytotoxic lymphocytes.
    Blood 112,
    770-781.

	\bibitem{ThCellDifferentiation} Naldi A, Carneiro J, Chaouiya C, and Thieffry D
    (2010)
    Diversity and Plasticity of Th Cell Types Predicted from Regulatory Network Modelling.
    PLoS Computational Biology 6(9): e1000912,
    doi: 10.1371/journal.pcbi.1000912.

	\bibitem{BooleantoODE} Wittmann DM, Krumsiek J, Saez-Rodriguez J, Lauffenburger DA, Klamt S, et al.
    (2009)
    Transforming Boolean models to continuous models: Methodology and application to T-cell receptor signaling.
    BMC Systems Biology 3,
    98.

	\bibitem{CombTherapyClinical1} Kharas MG, Janes MR, Scarfone VM, Lilly MB, Knight ZA, et al.
    (2008)
    Ablation of PI3K blocks BCR-ABL leukemogenesis in mice, and a dual PI3K/mTOR inhibitor prevents expansion of human BCR-ABL+ leukemia cells.
    J Clin Invest. 118(9),
    3038-50.

	\bibitem{CombTherapyModel} Bozic I, Reiter JG, Allen B, Antal T, Chatterjee K, et al.
    (2013)
    Evolutionary dynamics of cancer in response to targeted combination therapy.
    Elife 2:e00747.

	\bibitem{CombinatorialTherapy} Al-Lazikani B, Banerji U, and Workman P
    (2012)
    Combinatorial drug therapy for cancer in the post-genomic era.
    Nature biotechnology 30,
    679-92.

	\bibitem{PlantPolinator} Campbell C, Yang S, Albert R, and Shea K
    (2011)
    A network model for plant-pollinator community assembly.
    Proc Natl Acad Sci USA 108 (1),
    197-202.

	\bibitem{SocialReview} Castellano C, Fortunato S, and Loreto V
    (2009)
    Statistical physics of social dynamics.
    Rev. Mod. Phys. 81,
    591.

	\bibitem{VoterModels} Fern\'andez-Gracia J, Suchecki K, Ramasco JJ, San Miguel M, and Egu\'iluz VM
    (2014)
    Is the Voter Model a Model for Voters?
    Phys. Rev. Lett. 112,
    158701.

	\bibitem{EpidemicVerispignani} R Pastor-Satorras and A Vespignani
    (2001)
    Epidemic Spreading in Scale-Free Networks
    Phys. Rev. Lett. 86,
    3200.

	\bibitem{EpidemicKlemm} V\'ictor M. Egu\'iluz and Konstantin Klemm
    (2002)
    Epidemic Threshold in Structured Scale-Free Networks
    Phys. Rev. Lett. 89,
    108701.

	\bibitem{Odefy} Krumsiek J, P\"olsterl S, Wittmann DM, and Theis FJ
    (2010)
    Odefy-from discrete to continuous models.
    BMC Bioinformatics 11,
    233.

	\bibitem{PDGFPI3KTLGL} Yang J, Liu X, Nyland SB, Zhang R, Ryland LK, et al.
    (2010)
    Platelet-derived growth factor mediates survival of leukemic large granular lymphocytes via an autocrine regulatory pathway.
    Blood 115,
    51–60.

	\bibitem{CeramideTLGL} Lamy T, Liu JH, Landowski TH, Dalton WS, and Loughran TP Jr
    (1998)
    Dysregulation of CD95/CD95 ligand-apoptotic pathway in CD3+ large granular lymphocyte leukemia.
    Blood 92,
    4771–4777.

	\bibitem{ERKTLGL} Epling-Burnette PK, Bai F, Wei S, Chaurasia P, Painter JS, et al.
    (2004)
    ERK couples chronic survival of NK cells to constitutively activated Ras in lymphoproliferative disease of granular lymphocytes (LDGL).
    Oncogene 23,
    9220–9229.

	\bibitem{PI3KTLGL} Schade AE, Powers JJ, Wlodarski MW, and Maciejewski JP.
    (2006)
    Phosphatidylinositol-3-phosphate kinase pathway activation protects leukemic large granular lymphocytes from undergoing homeostatic apoptosis.
    Blood 107,
    4834-4840.

	\bibitem{Th1Th2Review} Glimcher LH and Murphy KM
    (2000)
    Lineage commitment in the immune system: the T helper lymphocyte grows up.
    Genes Dev 14,
    1693–711.

	\bibitem{Th1TBET} Szabo SJ, Kim ST, Costa GL, Zhang X, Fathman CG, et al.
    (2000)
    A novel transcription factor, T-bet, directs Th1 lineage commitment.
    Cell 100 (6),
    655-69.

	\bibitem{Th2GATA3} Zheng W and Flavell RA
    (1997)
    The transcription factor GATA-3 is necessary and sufficient for Th2 cytokine gene expression in CD4 T cells.
    Cell 89 (4),
    587-96.

	\bibitem{Th17IL21RIL23R} Zhou L, Ivanov II, Spolski R, Min R, Shenderov K, Egawa T, Levy DE, Leonard WJ, and Littman DR.
    (2007)
    IL-6 programs T(H)-17 cell differentiation by promoting sequential engagement of the IL-21 and IL-23 pathways.
    Nat Immunol. 8 (9),
    967-74.

	\bibitem{TregFOXP3} Hori S, Nomura T, Sakaguchi S
    (2003)
    Control of regulatory T cell development by the transcription factor Foxp3.
    Science 299,
    1057–61.
    
    \bibitem[S1]{ESMRuiSheng}
    Wang RS and Albert R
    (2011)
    Elementary signaling modes predict the essentiality of signal transduction network components.
    BMC Systems Biology 5,
    44.

\bibitem[S2]{AssiehMath}
    Saadatpour A, Albert R, and Reluga T
    (2013)
    A reduction method for Boolean network models proven to conserve attractors.
    SIAM J. Appl. Dyn. Syst. 12(4),
    1997–2011.

\bibitem[S3]{Colomoto}
    Naldi A, Monteiro PT, M\"{u}ssel C, Kestler HA, Thieffry D, et al.
    (2014)
    Cooperative development of logical modelling standards and tools with CoLoMoTo.
    bioRxiv doi: 10.1101/010504.

\bibitem[S4]{KadanoffReview}
    Aldana-Gonzalez M, Coppersmith S, and Kadanoff LP
    (2003)
    Boolean Dynamics with Random Couplings.
    In Perspectives and Problems in Nonlinear Science. A celebratory volume in honor of Lawrence Sirovich,
    Springer Applied Mathematical Sciences Series. Ehud Kaplan, Jerrold E. Marsden, and Katepalli R. Sreenivasan Eds.,
    23-89.

\bibitem[S5]{McCluskey}
    McCluskey EJ
    (1956)
    Minimization of Boolean Functions.
    Bell System Technical Journal 35 (6),
    1417–1444.

\bibitem[S6]{Quine1}
    Quine WV
    (1952)
    The Problem of Simplifying Truth Functions.
    The American Mathematical Monthly 59 (8),
    521–531

\bibitem[S7]{Quine2}
    Quine WV
    (1955)
    A Way to Simplify Truth Functions.
    The American Mathematical Monthly 62 (9),
    627–631.

\bibitem[S8]{LogicMinRev}
    Coudert O
    (1994)
    Two-level logic minimization: an overview.
    Integration, the VLSI Journal 17 (2),
    97–140.

\bibitem[S9]{NumPrimeImplicants1}
    Chandra AK and Markowsky G
    (1978)
    On the number of prime implicants.
    Discrete Mathematics 24 (1),
    7–11.

\bibitem[S10]{NumPrimeImplicants2}
    McMullen C and Shearer J
    (1986)
    Prime Implicants, Minimum Covers, and the Complexity of Logic Simplification.
    IEEE Transactions on Computers archive 35 (8),
    761-762.

\bibitem[S11]{PrimeImplPolyn}
    Strzemecki T
    (1992)
    Polynomial-time algorithms for generation of prime implicants.
    Journal of Complexity 8 (1),
    37–6.

\bibitem[S12]{CycleAlgorithm}
    Johnson DB
    (1975)
    Finding All the Elementary Circuits of a Directed Graph.
    SIAM J. Comput. 4 (1),
    77-84.


\bibitem[S13]{deJong}
    De Jong H, Gouz\'{e} JL, Hernandez C, Page M, Sari T, and Geiselmann J
    (2004)
    Qualitative simulation of genetic regulatory networks using piecewise-linear models.
    Bull. Math. Biol. 66 (2),
    301-340.

\bibitem[S14]{Snoussi}
    Snoussi EH
    (1989)
    Qualitative dynamics of piecewise-linear differential equations: a discrete mapping approach.
    Dynamics and Stability of Systems 4 (3-4),
    189-207.

\bibitem[S15]{Casey}
    Casey R, de Jong H, and Gouz\'{e} JL
    (2006)
    Piecewise-linear models of genetic regulatory networks: equilibria and their stability.
    J. Math. Biol. 52 (1),
    27-56.

\bibitem[S16]{Farcot}
    Farcot E and Gouz\'{e} JL
    (2009)
    Periodic solutions of piecewise affine gene network models with non uniform decay rates: the case of a negative feedback loop.
    Acta Biotheor. 57,
    429-455.

\bibitem[S17]{ChavesChaos}
    Chaves M and Preto M
    (2013)
    Hierarchy of models: From qualitative to quantitative analysis of circadian rhythms in cyanobacteria.
    Chaos 23 (2),
    025113.

\bibitem[S18]{RungeKutta}
    Dormand JR and Prince PJ
    (1980)
    A family of embedded Runge-Kutta formulae.
    J. Comp. Appl. Math. 6,
    19–26.

\end{thebibliography}
\end{document}